\def\lb{\label}
\def\be{\begin{equation}}
\def\ee{\end{equation}}
\def\MR{\mathrm}
\def\MC{\mathcal}
\def\tsum{\textstyle\sum\limits}
\def\wh{\widehat}
\def\bpsi{\bar{\psi}}
\newtheorem{prop}{Proposition}
\begin{document}

\begin{center}
{\Large \bf Remarks towards the spectrum of the Heisenberg spin chain type models}
\end{center}

%\vspace{.5cm}

\begin{center}
\large{\v{C}.~Burd\'{\i}k$^{\dag}$, J.~Fuksa$^{\dag \, *}$,
A.P.~Isaev$^*$, S.O.~Krivonos$^*$ and O.Navr\'atil$^{\ddag}$}
\end{center}

{\small{
\begin{center}
$^*$ Bogoliubov Laboratory of Theoretical Physics,
JINR, \\
141980, Dubna, Moscow region, Russia \\
E-mail: fuksa@theor.jinr.ru; isaevap@theor.jinr.ru; krivonos@theor.jinr.ru

\vskip5mm

$^{\dag}$ Department of Mathematics, Faculty of Nuclear Sciences and Physical Engineering\\
Czech Technical University in Prague\\
E-mail: burdices@kmlinux.fjfi.cvut.cz

\vskip5mm

$^{\ddag}$ Department of Mathematics, Faculty of Transportation Sciences\\
Czech Technical University in
Prague\\
E-mail: navratil@fd.cvut.cz
\end{center}
}}

%\vspace{1cm}

\begin{abstract}
The integrable close and open chain models can be formulated in
terms of ge\-ne\-ra\-tors of the Hecke algebras. In this review paper, we describe in
detail the Bethe ansatz for the XXX and the XXZ integrable
close chain models. We find the Bethe vectors for two--component and
inhomogeneous models. We also find the Bethe vectors for the
fermionic realization of the integrable XXX and XXZ close chain models by
means of the algebraic and coordinate Bethe ansatz. Special modification
 of the XXZ closed spin chain model ("small polaron model") is consedered. Finally, we
discuss some questions relating to the general open Hecke chain
models.
\end{abstract}

\noindent PACS: 02.20.Uw; 03.65.Aa \vskip 0.5cm

\newpage
\setcounter{page}1
\section{Introduction}
\setcounter{equation}0

A braid group ${\cal B}_{L}$ in the Artin presentation is generated by
invertible elements $T_i$ $(i=1,
\dots, L-1)$ subject to the relations:
\be
\label{braidg}
T_i \, T_{i+1} \, T_i =
T_{i+1} \, T_i \,  T_{i+1} \; , \;\;\;
T_{i} \,  T_{j} = T_{j} \,  T_{i} \;\;
{\rm for} \;\; i \neq j \pm 1 \; .
\ee
An $A$-Type Hecke algebra $H_{L}(q)$ (see, e.g., \cite{ChPr}) is a quotient of the group algebra of ${\cal B}_{L}$ by
additional Hecke relations
\be
\label{ahecke}
T^2_i  = (q-q^{-1}) \,  T_i + 1 \;\; , \;\;\;\;\;\; (i = 1, \dots , L-1) \; ,
\ee
where $q$ is a parameter (deformation parameter).
Let $x$ be a parameter (spectral parameter) and we define elements
\be
\lb{baxtH}
T_k(x) : =  x^{-1/2} \; T_k - x^{1/2} \; T_k^{-1} \in H_{n}(q) \; ,
\ee
which called baxterized elements.
By using (\ref{braidg}) and (\ref{ahecke}) one can check
that the  baxterized elements (\ref{baxtH}) satisfy
the Yang-Baxter equation in the braid group form
\be
\lb{ybeH}
T_k(x) \, T_{k+1}(xy) \, T_k(y) =
T_{k+1}(y) \, T_k(xy) \, T_{k+1}(x)  \; ,
\ee
and
 \be
\lb{bax1}
 T_k(x)T_k(y)=(q-q^{-1}) \, T_k(x y) + (x^{-1/2}-x^{1/2})(y^{-1/2}-y^{1/2}) \; .
 \ee
Equations (\ref{ybeH}) and (\ref{bax1}) are baxterized analogs of the first
 relation in (\ref{braidg})  and Hecke condition (\ref{ahecke}).

The Hamiltonian of the open Hecke chain model of the length $L$ is
\be
 \lb{intrH2}
 {\cal H}_{L} = \sum\limits_{k=1}^{L-1} \, T_k \; \in \; H_{L}(q)  \; ,
  \ee
 (see, e.g., \cite{Isa07} and references therein).
Any representation $\rho$ of the Hecke algebra gives an integrable open spin chain
with the Hamiltonian
 %\be\lb{intrH1}
 $\rho({\cal H}_{L}) = \sum_{k=1}^{L-1} \, \rho(T_k).$
 % \; .\ee
 Define the closed Hecke algebra $\hat{H}_{L}(q)$ by adding additional generator
 $T_L$ to the set $\{ T_1, \dots , T_{L-1} \}$ such that $T_L$ satisfies the same relations
 (\ref{braidg}) and  (\ref{ahecke}) for any $i$ and $j \neq i \pm 1$,
 where we have to use the periodic condition $T_{L+k} = T_k$.
  Then the closed Hecke chain of the length $L$ is described by the Hamiltonian
  ${\cal H}_{L} = \sum_{k=1}^{L} \, T_k \in \hat{H}_{L}(q)$ and any representation $\rho$
 of $\hat{H}_{L}(q)$ leads to the integrable
 closed spin chain with the Hamiltonian
 \be
 \lb{intrH}
 \rho({\cal H}_{L}) = \sum_{k=1}^{L} \, \rho(T_k) \; .
 \ee

In Sections 2 -- 4, special representations
$\rho =\rho_R$ of the algebra $H_{L}(q)$,
called the $R$-matrix representations, are considered. In the case of $GL_q(2)$-type
$R$-matrix representation $\rho_R$, the Hamiltonian (\ref{intrH}) coincides with the XXZ spin chain
Hamiltonian. It is clear that in the case of $q=1$ we recover the XXX spin chain.
 The integrable structures for XXX spin chain are introduced in Subsection 2.1. We discuss some results
of the algebraic Bethe  ansatz for these models. In Section 3, we formulate
 the so-called two-component model (see \cite{IzKor}, \cite{Slav} and references therein). The two-component model was introduced to avoid problems with computation of correlation functions for local operators attached to some site $x$ of the chain.  Using this approach we
 obtain in Sect. 4 the explicit formulas for the Bethe vectors, which show the equivalence of the
 algebraic and coordinate Bethe ansatzes.

In Section 5 we generalize the results of Sections 2 -- 4 to the
case of inhomogeneous  XXX spin chain.

The realization of the XXX spin chains in terms of free fermions
is considered in Sections 6-8. Here we explicitly construct Bethe
vectors for XXX spin chains in the sectors of one, two and three
magnons. In Section 9, we discuss another special representation
$\rho$ of the Hecke algebra $H_{L}(q)$ which we call the fermionic
representation. In this representation the Hamiltonian
(\ref{intrH}) describes the so-called "small polaron model" (see
\cite{Korep} and references therein). In Sections 10 and 11, we
construct the Bethe vectors and obtain the Bethe ansatz equations for the
"small polaron model" and for the XXZ closed spin chains by means of the
coordinate Bethe ansatz and compare the results with those obtained
by means of the algebraic Bethe ansatz in Sect. 2. We show
 %by using coordinate Bethe ansatz
 that the  Hamiltonian  of the "small polaron model"
has the different spectrum comparing to the XXZ model in the sector of an even number of magnons.

Finally, in Section 12, we discuss the general open Hecke chain models which are formulated
in terms of the elements of the Hecke algebra $H_n(q)$. We present the characteristic
polynomials (in the case of the finite length of the chain) which define the spectrum of the Hamiltonian
of this model in some special irreducible representations of $H_n(q)$.
The method of construction of irreducible representations of the algebra $H_n(q)$
is formulated at the end of Section 12.

In Appendix, we give some details of our calculations.

%\vspace{1cm}

\section{Algebraic Bethe Ansatz}
\setcounter{equation}0

At the beginning, we describe some basic features of the agebraic Bethe ansatz. The method was formulated as a part of the quantum inverse scattering method proposed by Faddeev, Sklyanin and Takhtadjan \cite{TakFad,SkTakFad}. The main object of this method is the Yang-Baxter algebra generated by matrix elements of the monodromy matrix. The main rules for the Yang-Baxter algebra were elaborated in the very first papers \cite{KulSk,Kor,Iz}. Many quantum integrable systems were described in terms of this method, cf. \cite{Skl79,SklFad,IzKor82}. We strongly recommend the review paper \cite{Fad} for introductory reading and \cite{KorBogIz} for more detailed review.

\subsection{L-operator and transfer matrix for XXX spin chain}

%The method is developed for solving of $(1+1)$-dimensional problems as Heisenberg spin chain.
Suppose we have a chain of $L$ sites. The local Hilbert space $h_j$ corresponds to the $j$-th site. For our purposes, it is sufficient to suppose  $h_j = \mathbb{C}^2$. The total Hilbert space of the chain is
\be
\lb{hilbe}
    \mathscr{H} = \prod_{j=1}^L \otimes h_j.
\ee

The basic tool of algebraic Bethe ansatz is the Lax operator. For its definition, we need an auxiliary vector space $V_a=\mathbb{C}^2$. The Lax operator is a parameter depending object acting on the tensor product $V_a\otimes h_i$
\begin{equation}
L_{a,i}:\quad V_a\otimes h_i \rightarrow V_a\otimes h_i
\end{equation}
explicitly defined as
\begin{equation} \label{Lax}
    L_{a,i}(\lambda) =(\lambda+\frac{1}{2})\mathbb{I}_{a,i} + \sum_{\alpha=1}^3 \sigma_a^{\alpha} S^\alpha_i
\end{equation}
where $S^\alpha_i = \frac{1}{2}\sigma^\alpha_i$ is the spin operator on the $i$-th site,
 $\sigma_a^{\alpha} =(\sigma_a^{x},\sigma_a^{y},\sigma_a^{z})$ -- are Pauli sigma-matrices
 which act in the space $V_a$ ($\sigma_i^{\alpha}$ -- are Pauli sigma-matrices
 which act in the space $h_i$)
 and $\mathbb{I}_{a,i}$ is the identity matrix in $\quad V_a\otimes h_i$.
 Operator $L_{a,i}(\lambda)$ can be expressed as a matrix in the auxiliary space
\begin{equation}\label{Lax1}
    L_{a,i}(\lambda) = \left(
    \begin{array}{cc}
    \lambda+\frac{1}{2} + S^z_i & S^-_i \\
    S^+_i & \lambda+\frac{1}{2}-S^z_i
    \end{array} \right).
\end{equation}
Its matrix elements form an associative algebra of local operators in the quantum space $h_i$.
% In the literature, \cite{Fad,Slav,Arut} etc., there appear alternative definitions of Lax operator.

Introducing the permutation operator $P$
\begin{equation}\label{perm}
     P= \frac{1}{2} \left( \mathbb{I}\otimes \mathbb{I} + \sum_{\alpha=1}^3 \sigma^\alpha \otimes \sigma^\alpha \right)
\end{equation}
(here $\mathbb{I}$ denotes a $2 \times 2$ unit matrix) we can rewrite the Lax operator as
\begin{equation}\label{Lax2}
    L_{a,i}(\lambda) = \lambda\mathbb{I}_{a,i} + P_{a,i}.
\end{equation}

Assume two Lax operators $L_{a,i}(\lambda)$ resp. $L_{b,i}(\mu)$ in the same quantum space $h_i$ but in different auxiliary spaces $V_a$ resp. $V_b$. The product of $L_{a,i}(\lambda)$ and $L_{b,i}(\mu)$ makes sense in the tensor product $V_a\otimes V_b \otimes h_i$. It turns out that there is an operator $R_{ab}(\lambda-\mu)$ acting nontrivially in $V_a\otimes V_b$ such that the following equality holds:
\begin{equation} \label{FCR}
    R_{ab}(\lambda-\mu)L_{a,i}(\lambda)L_{b,i}(\mu) = L_{b,i}(\mu) L_{a,i}(\lambda) R_{ab}(\lambda-\mu).
\end{equation}
Relation \eqref{FCR} is called the fundamental commutation relation. The explicit expression for $R_{ab}(\lambda-\mu)$ is
\begin{equation} \label{Rmatrix}
    R_{ab}(\lambda-\mu) = (\lambda-\mu) \mathbb{I}_{a,b} + P_{a,b}
\end{equation}
where $\mathbb{I}_{a,b}$ resp. $P_{a,b}$ is identity resp. permutation operator in $V_a\otimes V_b$.  In the matrix form we get for $R_{ab}(\lambda-\mu)$
\be\lb{Rmatrix1}
R_{ab}(\lambda-\mu) = \left(
    \begin{array}{cccc}
        \lambda-\mu+1 & 0 & 0 & 0 \\
        0 & \lambda-\mu & 1 & 0 \\
        0 &  1 & \lambda-\mu & 0 \\
        0 & 0 & 0 & \lambda-\mu+1 \\
    \end{array} \right).
\ee

The operator $R_{ab}(\lambda-\mu)$ is called the R-matrix. It satisfies the Yang-Baxter equation
\be \lb{YBE}
    R_{ab}(\lambda-\mu)R_{ac}(\lambda)R_{bc}(\mu) = R_{bc}(\mu)R_{ac}(\lambda)R_{ab}(\lambda-\mu)
\ee
in $V_{a}\otimes V_b \otimes V_c$.

Comparing \eqref{Lax2} and \eqref{Rmatrix} we see that the Lax operator and the R-matrix are the same.

We define a monodromy matrix
\be \label{monod}
    T_a(\lambda)= L_{a,1}(\lambda)L_{a,2}(\lambda)\dots L_{a,L}(\lambda)
\ee
as a product of the Lax operators along the chain, i.e. over all quantum spaces $h_i$. As a matrix in the auxiliary space $V_a$, the monodromy matrix
\begin{equation}\label{monod1}
    T_a(\lambda) =
    \left( \begin{array}{cc}
        A(\lambda) & B(\lambda) \\
        C(\lambda) & D(\lambda)
    \end{array} \right)
\end{equation}
defines an algebra of global operators $A(\lambda),B(\lambda),C(\lambda),D(\lambda)$ on the Hilbert space $\mathscr{H}$. It is called the Yang-Baxter algebra. The monodromy matrix $T_a(\lambda)$ is a step from local observables $S^\alpha_i$ on $h_i$ to global observables on $\mathscr{H}$.

The trace
\begin{equation} \label{TransferMatrix}
\tau(\lambda) \equiv\textsl{Tr}_a T_{a}(\lambda) = A(\lambda)+D(\lambda)
\end{equation}
of $T_a(\lambda)$ in the auxiliary space $V_a$ is called the transfer matrix. It constitutes a generating function for commutative conserved charges. Assume the Lax operators $L_a=L_{a,i}(\lambda),L_b=L_{b,i}(\mu),L'_a=L_{a,i+1}(\lambda),L'_b=L_{b,i+1}(\mu)$ and $R_{ab}=R_{ab}(\lambda-\mu)$ in the tensor product $V_a\otimes V_b \otimes \mathscr{H}$, then
\begin{equation}
R_{ab}L_a L'_a L_b L'_b = R_{ab}L_a L_b L'_a L'_b = L_b L_a R_{ab} L'_a  L'_b = L_b L_a L'_b L'_a R_{ab}= L_b L'_b L_a L'_a R_{ab}.
\end{equation}
Here we used \eqref{YBE} and the fact that operators acting nontrivially in different vector spaces commute. Hence, we can deduce commutation relations between the elements of the monodromy matrix
\be \label{FCRglobal}
R_{ab}(\lambda-\mu) T_a(\lambda) T_b(\mu) = T_b(\mu) T_a(\lambda) R_{ab}(\lambda-\mu).
\ee
Equation \eqref{FCRglobal} is a consequence of \eqref{FCR} for global observables $A(\lambda),B(\lambda),C(\lambda)$ and $D(\lambda)$. We call it the global fundamental commutation relation.  The commutativity of transfer matrices obviously follows from \eqref{FCRglobal}. After multiplying \eqref{FCRglobal} by $R_{ab}^{-1}(\lambda-\mu)$ we get
\begin{equation}
    T_a(\lambda) T_b(\mu) = R_{ab}^{-1}(\lambda-\mu) T_b(\mu) T_a(\lambda) R_{ab}(\lambda-\mu).
\end{equation}
Taking the trace over auxiliary spaces $V_a$ and $V_b$ we obtain
\be \label{TransferCommute}
    \tau(\lambda) \tau(\mu)
%   = \textsl{Tr}_a\textsl{Tr}_b \left( R_{ab}^{-1}(\lambda-\mu) T_b(\mu) T_a(\lambda) R_{ab}(\lambda-\mu) \right)
    = \tau(\mu) \tau(\lambda).
\ee

Obviously, the monodromy matrix \eqref{monod} is a polynomial of degree $L$ with respect to the parameter $\lambda$
\be
 \lb{Texp}
    T_{a}(\lambda) = \left(\lambda + \frac{1}{2} \right)^L \mathbb{I} + \left(\lambda + \frac{1}{2} \right)^{L-1} \sum_{i=1}^L \sum_{\alpha=1}^3 \sigma_a^{\alpha}\otimes S^\alpha_i + \dots
\ee
Therefore, the transfer matrix $\tau(\lambda)$ is also a polynomial of degree $L$
\be
 \lb{Texp1}
    \tau(\lambda) = 2 \left(\lambda + \frac{1}{2} \right)^L + \sum_{k=0}^{L-2} \lambda^k Q_k.
\ee
The term of order $\lambda^{L-1}$ vanishes because Pauli matrices are traceless. Due to commutativity \eqref{TransferCommute} of transfer matrices also
\be
\lb{Texp2}
    [Q_j,Q_k] = 0.
\ee
We see that the transfer matrix is a generating function for a set of commuting observables.

The Hamiltonian of the system appears naturally amongst the observables $Q_k$. From the definition of the Lax operator \eqref{Lax} we see that
\be
    L_{a,i}(-1/2) = P_{a,i}.
\ee
Hence,
\be
    T_a(-1/2) = P_{a,1} P_{a,2} \dots P_{a,L} = P_{L-1,L}\dots P_{2,3}P_{1,2}P_{a,1}.
\ee
If we differentiate $T_a(\lambda)$ with respect to $\lambda$, we get
\be
    \frac{dT_a(\lambda)}{d\lambda} \Big|_{\lambda=-1/2} = \sum_{k=1} ^L P_{a,1}\dots \underbrace{P_{a,k}}_{\text{missing}} \dots P_{a,L} = \sum_{k=1}^L P_{L-1,L}\dots P_{k-1,k+1}\dots P_{1,2} P_{a,1}.
\ee
Remind that $\textsl{Tr}_a P_{a,j} = \mathbb{I}_j$. Differentiating the logarithm of the transfer matrix
\begin{align}
 & \frac{d}{d\lambda} \ln \tau(\lambda) \Big|_{\lambda=-1/2} = \frac{d\tau(\lambda)}{d\lambda} \tau^{-1}(\lambda) \Big|_{\lambda=-1/2} = \\
 & = \sum_{k=1}^L \left( P_{L-1,L}\dots P_{k-1,k+1}\dots P_{1,2} \right) \left( P_{1,2}P_{2,3}\dots P_{L-1,L} \right) = \sum_{k=1}^L P_{k,k+1}.
\end{align}
The Hamiltonian of the system is
\be
    H= \sum_{k=1}^L \sum_{\alpha=1}^3 S_k^\alpha S_{k+1}^\alpha = \frac{1}{2} \sum_{k=1}^{L} P_{k,k+1} - \frac{L}{4}
\ee
where we set $S_{n+L}=S_n$ resp. $P_{L,L+1}=P_{L,1}$.
We can see that
\be
    H = \frac{1}{2} \frac{d}{d\lambda} \ln \tau(\lambda) \Big|_{\lambda=-1/2} - \frac{L}{4}.
\ee
This is the reason why we can say that the transfer matrix $\tau(\lambda)$ is a generating function for commuting conserved charges.

\vspace{0.3cm}

\noindent
{\bf Remark.} Let $S_i^\alpha$ be generators of the Lie algebra $su(2)$ in $i$-th site
 \be
 \lb{defRel}
 [ S_i^\alpha \, , \; S_j^\beta] = i \varepsilon^{\alpha \beta \gamma} S_i^\gamma \; \delta_{ij} \; ,
 \ee
 and we take generators $S_i^\alpha$ in any representation of $su(2)$ which acts in the space $h_i$.
 Then equations (\ref{Lax}) and (\ref{Lax1}) define $L$-operator for the integrable chain model with arbitrary
 spin in each site. Relations (\ref{FCR}) with $R$-matrix (\ref{Rmatrix}) are equivalent to the defining relations
 (\ref{defRel}). Formulas (\ref{monod}), (\ref{monod1}), (\ref{TransferMatrix}), (\ref{TransferCommute}),
 (\ref{Texp}), (\ref{Texp1}) and (\ref{Texp2})
 are valid for this generalized spin chain models as well.

\subsection{Some remarks on the XXZ chain}

The fundamental $R$-matrix for the quantum group $GL_q(N)$ is \cite{Jimb1,FRT}
\be
\lb{Rmatr}
\hat{R}
= q \, \sum\limits_{i=1}^N e_{ii} \otimes e_{ii} + \sum\limits_{i\neq j} e_{ij} \otimes e_{ji} +
(q-q^{-1}) \, \sum\limits_{i < j} e_{ii} \otimes e_{jj} \; ,
\ee
where $e_{ij}$ is the $N \times N$ matrix unity.
 In a particular case of
 $GL_q(2)$, the $R$-matrix (\ref{Rmatr}) can be written in terms of Pauli matrices
 \begin{align}
 \hat{R}  &=\frac{1}{2} \left(  \sigma^x \otimes \sigma^x +  \sigma^y \otimes \sigma^y  +
 \frac{q+q^{-1}}{2} \sigma^z \otimes \sigma^z  \right)  + \notag\\
 &\quad + \frac{q-q^{-1}}{4} \left( \sigma^z  \otimes \mathbb{I}_2 - \mathbb{I}_2 \otimes \sigma^z  \right)
 + \frac{3q-q^{-1}}{4} \mathbb{I}_2 \otimes \mathbb{I}_2 .\lb{Rxxz}
 \end{align}
 Here and below we use notation $\mathbb{I}_N$ for the $N \times N$ unit matrix.
%Indeed, in the representation (\ref{rxxz}) the Hamiltonian  is represented as
%\be
%\lb{Hxxz}
%\begin{array}{c}
%\rho_R({\cal H}_{L}) = \frac{1}{2} \sum\limits_{k=1}^{L-1} \left(  \sigma^x_k  %\sigma^x_{k+1} +
%\sigma^y_k  \sigma^y_{k+1}  +
%\frac{q+q^{-1}}{2} \sigma^z_k  \sigma^z_{k+1}  \right)
%+ \frac{q-q^{-1}}{4} \left( \sigma^z_1   - \sigma^z_{L}  \right) + \\ [0.2cm]
% + (L-1) \frac{(3q-q^{-1})}{4} {\bf 1}  \; ,
%\end{array}
%\ee
%where $\sigma^\alpha_k = (1_2)^{\otimes (k-1)} \otimes \sigma^\alpha \otimes
%(1_2)^{\otimes(L-k)}$ and one can recognize in (\ref{Hxxz}) the Hamiltonian for open
%XXZ Heisenberg chain.
The fundamental R-matrix \eqref{Rmatr} satisfies the Hecke condition \eqref{ahecke}
\be
    \hat{R}^2=(q-q^{-1})\hat{R}+\mathbb{I}_N \otimes \mathbb{I}_N \; .
\ee
If we define
\be \lb{Rxxz1}
 \hat{R}^{(q)}_{k k+1}  =
  \mathbb{I}_N^{\otimes (k-1)} \otimes \hat{R} \otimes \mathbb{I}_N^{\otimes (L - k-1)}
\ee
we obtain the $R$-matrix representation $\rho_R$ of the Hecke algebra
 (\ref{braidg}), (\ref{ahecke})
 \be
 \lb{rmatrR}
 \rho_R \; : \;\;\; T_k \;\; \to \;\; \hat{R}^{(q)}_{k k+1}  \; .
 \ee
Then, the baxterized R-matrix is (see eq. (\ref{baxtH}))
\be
\lb{Rbaxt}
\begin{array}{c}
\hat{R}_{k k+1}(\mu) = \rho_R \left(\mu^{-1/2} T_k - \mu^{1/2} T_k^{-1}\right)
= \mu^{-1/2} \hat{R}^{(q)}_{k k+1} - \mu^{1/2} (\hat{R}^{(q)}_{k k+1})^{-1} = \\ [0.2cm]
= (\mu^{-1/2} -\mu^{1/2}) \hat{R}^{(q)}_{k k+1} + \mu^{1/2} (q-q^{-1}) \; .
\end{array}
\ee
This R-matrix is a solution of the Yang-Baxter equation in the braid group form
 \be
\lb{YBEbaxt}
 \hat{R}_{k k+1}(\lambda) \hat{R}_{k+1 k+2}(\lambda \cdot \mu) \hat{R}_{k k+1}(\mu) =
 \hat{R}_{k+1 k+2}(\mu) \hat{R}_{k k+1}(\lambda \cdot \mu) \hat{R}_{k+1 k+2}(\lambda).
 \ee

Note that if there is a solution of equation \eqref{YBEbaxt}, the solution of the equation
\be
    R_{k,k+1}(\lambda)R_{k,k+2}(\lambda\mu)R_{k+1,k+2}(\mu) = R_{k+1,k+2}(\mu)R_{k,k+2}(\lambda\mu)R_{k,k+1}(\lambda) \lb{YBEbaxt1}
\ee
can be easily found as
\be\lb{Rbaxt0}
R_{k,k+1}(\lambda)=\hat{R}_{k,k+1}(\lambda) P_{k,k+1}.
\ee
The R-matrix $R_{k,k+2}(\lambda)$ has to be defined as $P_{k+1,k+2} R_{k,k+1}(\lambda) P_{k+1,k+2}$. The validity of \eqref{YBEbaxt1} is very important for correct definition of the transfer matrix. We are able to define the Lax operator as the R-matrix
\be
    L_{a,i}(\lambda) = R_{a,i}(\lambda)
\ee
and the monodromy matrix in the form \eqref{monod}. Commutativity of the transfer matrix is just a matter of proving
\be
\lb{RRII}
    R_{ab}(\mu) T_a(\lambda\mu) T_b(\lambda) = T_b(\lambda) T_a(\lambda\mu) R_{ab}(\mu).
\ee

The R-matrices  (\ref{Rmatr}), (\ref{Rbaxt}) for $N=2$ are the basic building blocks for the XXZ spin chain.
Let us write \eqref{Rbaxt0} for $N=2$ as following
\be \lb{RxxzR}
 R_{k k+1}(\lambda)  =
  \mathbb{I}_2^{\otimes (k-1)} \otimes R(\lambda) \otimes \mathbb{I}_2^{\otimes (L - k-1)}  \; ,
\ee
where $R(\lambda) = (\lambda^{-1/2} \hat{R} - \lambda^{1/2} \hat{R}^{-1}) \cdot P$,
 the matrix $\hat{R}$ is given in \eqref{Rxxz} and $R(\lambda)$ has the matrix form
\begin{align}
\lb{RRRI}
    R(\lambda) =
    \begin{pmatrix}
    \lambda^{-1/2}q-\lambda^{1/2} q^{-1} & 0 & 0 & 0 \\
    0 & \lambda^{-1/2}-\lambda^{1/2} & \lambda^{-1/2}(q-q^{-1}) & 0 \\
    0 & \lambda^{1/2}(q-q^{-1}) & \lambda^{-1/2}-\lambda^{1/2} & 0\\
    0&0&0& \lambda^{-1/2}q-\lambda^{1/2} q^{-1}
    \end{pmatrix}
\end{align}
which is important to write the commutation relations \eqref{RRII} in components. We see that the form \eqref{RRRI}
of  the R-matrix is not symmetric to transposition, as usually appears in literature, cf. \cite{Fad}, \cite{Slav} etc. We use the Drinfel'd--Reshetikhin twist to symmetrize it, cf. \cite{Resh}.

The R-matrix $R_{ab}(\lambda)$ acts in the tensor product of the auxiliary spaces $V_a \otimes V_b$. The monodromy matrix $T_a(\lambda)$ acts in $V_a\otimes \mathscr{H}$. Let $U$ be a diagonal matrix. It can be easily seen that $[U\otimes \mathbb{I}_b +\mathbb{I}_a\otimes U, R_{ab}(\lambda)]=0$. We introduce the twisted R-matrix resp. the monodromy matrix
\begin{align}
    \tilde{R}_{ab}(\lambda) = (\lambda^U \otimes \mathbb{I}_b) R_{ab}(\lambda) (\lambda^{-U} \otimes \mathbb{I}_b),
    \lb{monodXXZ1a} \\
     \tilde{T}_a(\lambda) = (\lambda^U\otimes \mathbb{I}_{\mathscr{H}}) T_a(\lambda) (\lambda^{-U}\otimes \mathbb{I}_{\mathscr{H}}). \lb{monodXXZ1}
\end{align}
If
\be\lb{FCRglobalXXZ}
    R_{ab}(\lambda) T_a(\lambda\mu) T_b(\mu) =  T_b(\mu)T_a(\lambda\mu) R_{ab}(\lambda)
\ee
is satisfied, then also
\begin{align}
    \tilde{R}_{ab}(\lambda) \tilde{T}_a(\lambda\mu) \tilde{T}_b(\mu)  =\tilde{T}_b(\mu) \tilde{T}_a(\lambda\mu) \tilde{R}_{ab}(\lambda).\lb{FCRglobalXXZ1}
\end{align}
In other words, the global fundamental commutation relations remain unchanged.

This twist differs slightly from the twist proposed in \cite{Slav}. The author supposes a matrix $\omega$ whose tensor square commutes with R-matrix $[\omega\otimes\omega,R_{ab}(\lambda)]=0$ and concludes that the matrix $\hat{T}_a(\lambda)=\omega T_a(\lambda)$ satisfies \eqref{FCRglobalXXZ} as the original untwisted matrix $T_a(\lambda)$. In both cases, the crucial premise for usability of the twist is the commutativity of $\omega\otimes \omega$, or
 its infinitesimal form $U\otimes \mathbb{I} +\mathbb{I}\otimes U$, with the R-matrix.

Below we will consider only the case of $N=2$.
Taking $U=\left(\begin{smallmatrix}1/4 & 0 \\ 0 & -1/4\end{smallmatrix}\right)$
in \eqref{monodXXZ1a}, where $R_{ab}(\lambda)$ is given by \eqref{RRRI}, we get
\be\lb{RmatXXZ}
    \tilde{R}(\lambda) =
\begin{pmatrix}
\lambda^{-1/2}q-\lambda^{1/2} q^{-1} & 0 & 0 & 0 \\
0 & \lambda^{-1/2}-\lambda^{1/2} & q-q^{-1} & 0 \\
0 &q-q^{-1} & \lambda^{-1/2}-\lambda^{1/2} & 0\\
0&0&0& \lambda^{-1/2}q-\lambda^{1/2} q^{-1}
\end{pmatrix}
\ee
which corresponds to the R-matrix appearing in \cite{Fad}, \cite{Slav}. Moreover, it is easy to see that
\be\lb{monodXXZ2}
    \tilde{T}_a(\lambda)= \tilde{R}_{a,1}(\lambda) \tilde{R}_{a,2}(\lambda)\dotsb \tilde{R}_{a,L}(\lambda).
\ee
It can also be seen that
\be
    \tilde{T}_a(\lambda)=
    \begin{pmatrix}
        \tilde{A}(\lambda) & \tilde{B}(\lambda) \\
        \tilde{C}(\lambda) & \tilde{D}(\lambda) \\
    \end{pmatrix} =
    \begin{pmatrix}
        A(\lambda) & \lambda^{1/2} B(\lambda) \\
        \lambda^{-1/2} C(\lambda) & D(\lambda) \\
    \end{pmatrix}
\ee
where $A,\ B,\ C,\ D$ correspond to the original monodromy matrix $T_a(\lambda)$. Moreover, one can easily realize that
\begin{align}
\tilde{A}(\lambda)&=\frac{1}{2} \left(\lambda^{-\frac{1}{2}}-\lambda^{\frac{1}{2}}+q\lambda^{-\frac{1}{2}}-q^{-1}\lambda^{\frac{1}{2}}\right) \mathbb{I}_2 + \frac{1}{2}\left(\lambda^{\frac{1}{2}}-\lambda^{-\frac{1}{2}}+q\lambda^{-\frac{1}{2}}-q^{-1}\lambda^{\frac{1}{2}}\right)\sigma^z,\\
\tilde{B}(\lambda)&=(q-q^{-1})\sigma^-,\\
\tilde{C}(\lambda)&=(q-q^{-1})\sigma^+,\\
\tilde{D}(\lambda)&=\frac{1}{2} \left(\lambda^{-\frac{1}{2}}-\lambda^{\frac{1}{2}}+q\lambda^{-\frac{1}{2}}-q^{-1}\lambda^{\frac{1}{2}}\right)\mathbb{I}_2 + \frac{1}{2} \left(\lambda^{-\frac{1}{2}}-\lambda^{\frac{1}{2}}-q\lambda^{-\frac{1}{2}}+q^{-1}\lambda^{\frac{1}{2}}\right)\sigma^z \; ,
\end{align}
 where $\sigma^\pm = \frac{1}{2}(\sigma^x \pm i \sigma^y)$.
The twisted  R-matrix $\tilde{R}_{k,k+1}(\lambda)$ resp. the
twisted monodromy matrix $\tilde{T}_a(\lambda)$ will be used throughout the text.

\subsection{Global fundamental commutation relations} \lb{subsec:FCR}

Global commutation relations are determined by equation \eqref{FCRglobal} resp. \eqref{FCRglobalXXZ} for XXX resp. XXZ in the tensor product $V_a\otimes V_b\otimes\mathscr{H}$. They are explicitly expressed by multiplication of matrices in the tensor product of the auxiliary spaces $V_a\otimes V_b$. After simple factorization, the R-matrices \eqref{Rmatrix1} resp. \eqref{RmatXXZ} can be written uniformly in the following way:
\be
R_{ab}(\lambda) = \left(
    \begin{array}{cccc}
        f(\lambda) & 0 & 0 & 0 \\
        0 & 1 & g(\lambda) & 0 \\
        0 &  g(\lambda) &1 & 0 \\
        0 & 0 & 0 & f(\lambda) \\
    \end{array} \right)
%R_{ab}(\lambda-\mu) =
%   \begin{pmatrix}
%       f(\lambda,\mu) & 0&0&0 \\
%       0& 1 & g(\lambda,\mu) & 0\\
%       0& g(\lambda,\mu) & 1 & 0\\
%       0&0&0& f(\lambda,\mu) \\
%   \end{pmatrix}.
\ee
where for the XXX chain we have
\begin{align}
    & f(\lambda)=\frac{\lambda+1}{\lambda}, & g(\lambda)& =\frac{1}{\lambda},\\
    \intertext{and for XXZ}
    & f(\lambda)=\frac{\lambda^{-1/2}q -\lambda^{1/2} q^{-1}}{\lambda^{-1/2} -\lambda^{1/2}}, & g(\lambda) &= \frac{q - q^{-1}}{\lambda^{-1/2} -\lambda^{1/2}}.
\end{align}

We take the monodromy matrix \eqref{monodXXZ1} resp. \eqref{monodXXZ2} for the XXZ chain. For more comfort, we omit the tilde over the corresponding operators. The matrices $T_a(\lambda)$ resp. $T_b(\mu)$ take the form
\be
    T_a(\lambda) = \left(
    \begin{array}{cccc}
    A(\lambda)& & B(\lambda) & \\
    &   A(\lambda)& & B(\lambda) \\
    C(\lambda)& & D(\lambda) & \\
    &   C(\lambda)& & D(\lambda) \\
    \end{array} \right)
\ee
resp.
\be
    T_b(\mu) = \left(
    \begin{array}{cccc}
    A(\mu)& B(\mu) & & \\
    C(\mu)& D(\mu) & & \\
    & & A(\mu)& B(\mu) \\
    & & C(\mu)& D(\mu) \\
    \end{array} \right).
\ee
Multiplying and comparing the left- and right-hand side of \eqref{FCRglobal} resp. \eqref{FCRglobalXXZ1}, we obtain the set of commutation relations. Comparing the matrix elements on the positions $(1,1),\ (1,4)$, $(4,1)$, $(4,4)$ we obtain
\be \label{com}
    [A(\lambda),A(\mu)] = [B(\lambda),B(\mu)] = [C(\lambda),C(\mu)] = [D(\lambda),D(\mu)] = 0.
\ee
Comparing $(2,3)$ resp. $(2,2)$
\begin{align}
    [B(\lambda),C(\mu)] = g(\lambda,\mu) \left( D(\mu)A(\lambda) - D(\lambda)A(\mu) \right), \\
    [A(\lambda),D(\mu)] = g(\lambda,\mu) \left( C(\mu)B(\lambda) - C(\lambda)B(\mu) \right).
\end{align}
From a comparison of the matrix elements $(1,3),\ (3,4),\ (2,1)$ resp. $(4,3)$ we obtain
\begin{align}
A(\mu)B(\lambda) &= f(\lambda,\mu) B(\lambda)A(\mu) + g(\mu,\lambda) B(\mu)A(\lambda), \label{com1}\\
D(\mu)B(\lambda) &= f(\mu,\lambda)B(\lambda)D(\mu) + g(\lambda,\mu) B(\mu)D(\lambda), \label{com2}\\
A(\mu)C(\lambda) &= f(\mu,\lambda)C(\lambda)A(\mu) + g(\lambda,\mu) C(\mu)A(\lambda), \label{com3}\\
D(\mu)C(\lambda) &= f(\lambda,\mu)C(\lambda)D(\mu) + g(\mu,\lambda) C(\mu)D(\lambda), \label{com4}
\end{align}
where for the XXX chain we have
\be \label{com5}
f(\lambda,\mu) =f(\lambda-\mu)= \frac{\lambda-\mu+1}{\lambda-\mu},\qquad g(\lambda,\mu) =g(\lambda-\mu)= \frac{1}{\lambda-\mu},
\ee
and for XXZ
\begin{align}
    & f(\lambda,\mu)=f(\lambda/\mu)= \frac{\mu q-\lambda q^{-1}}{\mu -\lambda}, & g(\lambda,\mu) = g(\lambda/\mu) =\sqrt{\lambda\mu}\; \frac{q-q^{-1}}{\mu-\lambda}. \lb{com6}
\end{align}
We see that $g(\mu,\lambda)=-g(\lambda,\mu)$.

\subsection{Eigenstates of the transfer matrix}

To uncover the spectrum of the transfer matrix $\tau(\lambda)=A(\lambda)+D(\lambda)$ is now the natural next step. In \ref{subsec:FCR}, we get four operators $A(\lambda),B(\lambda),C(\lambda)$ and $D(\lambda)$ under commutation relations \eqref{com}-\eqref{com5}. They generate an associative algebra. Relations \eqref{com}-\eqref{com5} together with an assumption that the Hilbert space $\mathscr{H}$ has the structure of the Fock space are sufficient to find the spectrum $\tau(\lambda)$.  From the beginning, we work on the Hilbert space $\mathscr{H} = (\mathbb{C}^2)^{\otimes L}$, i.e. we choose a specific representation. But the content of this chapter is valid in general, i.e. also for other representations.

To uncover the Fock space structure in $\mathscr{H}$, let us find a pseudovacuum vector $\ket{0}\in \mathscr{H}$ such that $C(\lambda)\ket{0} = 0$ which is an eigenvector of the operators $A(\lambda)$ and $D(\lambda)$
\be \label{pseudovac}
    A(\lambda) \ket{0} = \alpha(\lambda) \ket{0},\qquad D(\lambda) \ket{0} = \delta(\lambda) \ket{0}.
\ee
Let us remind that there is a state $\ket{0}_k$ in each $h_k$ such that the corresponding Lax operator is of the upper triangular form
\be
    L_{a,k}(\lambda)\ket{0}_k = \left( \begin{array}{cc}
    a(\lambda) & \text{something} \\
    0 & d(\lambda) \\
    \end{array} \right)\ket{0}_k,
\ee
where $a(\lambda)$, $d(\lambda)$ are the functions of the parameter $\lambda$. We see that for XXX
\begin{align}
&a(\lambda)=\lambda+1, & d(\lambda)&=\lambda \lb{adXXX}\\
\intertext{and for XXZ}
&a(\lambda)=\lambda^{-1/2} q-\lambda^{1/2}q^{-1}, & d(\lambda)&=\lambda^{-1/2} -\lambda^{1/2}.\lb{adXXZ}
\end{align}
The vector $\ket{0}\in \mathscr{H}$ is of the form
\be
    \ket{0} = \ket{0}_1\otimes \ket{0}_2\otimes \dots \otimes \ket{0}_L. \lb{vacuum}
\ee
In our particular representation, $h_k=\mathbb{C}^2$, we have $\ket{0}_k=\left(\begin{smallmatrix} 1 \\ 0\\ \end{smallmatrix}\right)$. It can be easily seen that
\be
    T_a(\lambda) \ket{0}= \left( \begin{array}{cc}
    a(\lambda)^L & \text{something} \\
    0 & d(\lambda)^L \\
    \end{array}\right)\ket{0}.
\ee
We have found that the state $\ket{0}\in \mathscr{H}$ satisfies
\be\label{pseudovac1}
    C(\lambda)\ket{0}=0,\qquad A(\lambda)\ket{0} = \alpha(\lambda) \ket{0},\qquad D(\lambda)\ket{0} = \delta(\lambda) \ket{0}
\ee
where
\be
    \alpha(\lambda) = a(\lambda)^L,\qquad \delta(\lambda)= d(\lambda)^L,
\ee
i.e., $\ket{0}\in\mathscr{H}$ is an eigenstate of the transfer matrix $\tau(\lambda)=A(\lambda)+D(\lambda)$.

Other eigenstates of the transfer matrix \eqref{TransferMatrix} are of the form
\be\label{BetheVect}
    \ket{\{\lambda\}} =\ket{\lambda_1,\dots,\lambda_M} \equiv
    B(\lambda_1)B(\lambda_2)\dots B(\lambda_M) \ket{0} \equiv | M \rangle  \; .
\ee
They are called the Bethe vectors. For $M\in\mathbb{N}$ we will call the Bethe vector
 $\ket{\lambda_1,\dots,\lambda_M}$ the $M$-magnon state. It turns out that there have to be some restrictions on the parameters $\{\lambda\}=\{\lambda_1,\dots,\lambda_M\}$ to get the eigenstates of the transfer matrix. First, we note that in view of commutativity of the operators $B$ (\ref{com}) we have
 \be
 \label{bae02}
 |\lambda_1,\dots,\lambda_M \rangle = |\sigma( \lambda_1,\dots,\lambda_M) \rangle  \; ,
 \ee
 for any permutation $\sigma\in S_M$ of $\{\lambda_1,\dots, \lambda_M \}$.
 Then, using (\ref{com1}), (\ref{pseudovac1}) and (\ref{bae02}), we deduce
  \be
 \label{bae03}
  \begin{array}{l}
 A(\lambda) \, |\lambda_1,\dots,\lambda_M \rangle =
 A(\lambda) \, B(\lambda_1)\cdots B(\lambda_M)\ket{0} = \\ [0.3cm]
= \Bigl( f(\lambda_1,\lambda)  B(\lambda_1) A(\lambda) + g(\lambda,\lambda_1)  B(\lambda) A(\lambda_1) \Bigr)
   B(\lambda_2)\cdots B(\lambda_M)\ket{0} =  \\ [0.3cm]
   = \Bigl( f(\lambda_1,\lambda)   + g(\lambda,\lambda_1)  P_{\lambda\lambda_1} \Bigr)
   B(\lambda_1) A(\lambda) B(\lambda_2)\cdots B(\lambda_M)\ket{0} =  \\ [0.3cm]
    %= \Bigl( f(\lambda - \lambda_1)   + g(\lambda - \lambda_1)  P_{\lambda\lambda_1} \Bigr)
    %\Bigl( f(\lambda - \lambda_2)   + g(\lambda - \lambda_2)  P_{\lambda\lambda_2} \Bigr)
    %B_{\lambda_1} B_{\lambda_2} A_{\lambda}  \cdots B_{\lambda_M} \ket{0}  = \\ [0.3cm]
   = \dots = \prod\limits_{k=1}^M \Bigl( f(\lambda_k,\lambda)   + g(\lambda,\lambda_k)  P_{\lambda\lambda_k} \Bigr)
   \alpha(\lambda) |\lambda_1,\dots,\lambda_M \rangle  = \\ [0.3cm]
   = \alpha(\lambda) \prod\limits_{k=1}^M f(\lambda_k,\lambda) |\lambda_1,\dots,\lambda_M \rangle +
   \sum\limits_{i=1}^M \Phi_i(\lambda,\lambda_1,...,\lambda_M)
   |\lambda_1,..., \lambda_{i-1},\lambda,\lambda_{i+1} ,...\rangle \; ,
 \end{array}
 \ee
 where $P_{\lambda\lambda_k}$ is a permutation operator of the parameters $\lambda$ and $\lambda_k$
 and it is clear that
 \be \label{bae04}
 \Phi_1(\lambda,\lambda_1,...,\lambda_M) = \alpha(\lambda_1) g(\lambda,\lambda_1) \prod\limits_{k=2}^M
 f(\lambda_k,\lambda_1) \; .
 \ee
 Since the left-hand side of (\ref{bae03})
 is symmetric under all permutations of $\{\lambda_1,\dots, \lambda_M \}$,
 we obtain
 \be \label{bae05}
 \Phi_i(\lambda,\lambda_1,...,\lambda_M) = \alpha(\lambda_i) g(\lambda,\lambda_i)
 \prod\limits_{\stackrel{k=1}{_{k \neq i}}}^M
 f( \lambda_k,\lambda_i) \; , \;\;\;\; \forall i =1,\dots,M \; .
 \ee
 In the same way by using (\ref{com2}), (\ref{pseudovac1}) and (\ref{bae02}) we deduce
  \be
 \label{bae06}
 \begin{array}{l}
 D(\lambda) \, |\lambda_1,\dots,\lambda_M \rangle =
 D(\lambda) \, B(\lambda_1)\cdots B(\lambda_M)\ket{0} = \\ [0.3cm]
 = \prod\limits_{k=1}^M \Bigl( f(\lambda,\lambda_k)   + g(\lambda_k,\lambda)  P_{\lambda\lambda_k} \Bigr)
   \delta(\lambda) |\lambda_1,\dots,\lambda_M \rangle  = \\ [0.3cm]
   = \delta(\lambda) \prod\limits_{k=1}^M f(\lambda,\lambda_k) |\lambda_1,\dots,\lambda_M \rangle +
   \sum\limits_{i=1}^M \Psi_i(\lambda,\lambda_1,...,\lambda_M)
   |\lambda_1, ..., \lambda_{i-1},\lambda,\lambda_i ,...,\lambda_M \rangle \; ,
 \end{array}
 \ee
 where
 \be
 \label{bae07}
 \Psi_i(\lambda,\lambda_1,...,\lambda_M) = \delta(\lambda_i) g(\lambda_i,\lambda)
 \prod\limits_{\stackrel{k=1}{_{k \neq i}}}^M
 f(\lambda_i,\lambda_k) \; , \;\;\;\; \forall i =1,\dots,M \; .
 \ee
 The combination of (\ref{bae03})  and (\ref{bae06}) gives that $|\lambda_1,\dots,\lambda_M \rangle$
 is the eigenvector of the transfer matrix (\ref{TransferMatrix}) $\tau(\lambda) = A(\lambda)+D(\lambda)$
 \be
 \label{EigenVal}
 \begin{array}{c}
 \Bigl( A(\lambda)+D(\lambda) \Bigr) |\; \{ \lambda \} \; \rangle =
 \Lambda(\lambda, \{ \lambda \}) |\; \{ \lambda \} \; \rangle \; , \;\;\;\;
 \{ \lambda \} \equiv \{ \lambda_1,\dots,\lambda_M \} \; , \\ [0.3cm]
 \Lambda(\lambda, \{ \lambda \}) = \alpha(\lambda) \prod\limits_{i=1}^M f(\lambda_i,\lambda) +
 \delta(\lambda) \prod\limits_{i=1}^M f(\lambda,\lambda_i)
 \end{array}
 \ee
 if the set of parameters $\{ \lambda_1,\dots,\lambda_M \}$ satisfies the so-called Bethe equations:
 \be
 \label{BAE}
 \begin{array}{c}
 \Phi_i(\lambda,\lambda_1,...,\lambda_M) + \Psi_i(\lambda,\lambda_1,...,\lambda_M) = 0\; \Rightarrow \; \\ [0.2cm]
 \alpha(\lambda_i) g(\lambda,\lambda_i)
 \prod\limits_{\stackrel{k=1}{_{k \neq i}}}^M
 f(\lambda_k,\lambda_i) + \delta(\lambda_i) g( \lambda_i, \lambda)
 \prod\limits_{\stackrel{k=1}{_{k \neq i}}}^M
 f(\lambda_i, \lambda_k) = 0.
 \end{array}
 \ee
If the Bethe equations are satisfied, we call the Bethe vectors $\ket{\lambda_1,\dots,\lambda_M}$ on-shell, otherwise off-shell.

For the XXX chain, using explicit formulas (\ref{com5}) and (\ref{pseudovac1}) we write (\ref{EigenVal}) and (\ref{BAE}) in the form
 \be
  \label{EigenVal1}
  \begin{array}{c}
  \left( A(\lambda) + D(\lambda) \right) |\lambda_1,\dots,\lambda_M \rangle =
  \Lambda(\lambda, \lambda_i) |\lambda_1,\dots,\lambda_M \rangle, \\ [0.2cm]
  \displaystyle{ \Lambda(\lambda, \{ \lambda \}) = (\lambda+1)^L \prod\limits_{i=1}^M
  \frac{\lambda_i - \lambda+1}{\lambda_i - \lambda} +
 \lambda^L \prod\limits_{i=1}^M \frac{\lambda - \lambda_i+1}{\lambda - \lambda_i} } \;
  \end{array}
\ee
if the set of parameters $\{ \lambda_1, \dots , \lambda_M \}$ satisfies
the Bethe equations in the following form:
\be
\left( \frac{\lambda_k+1}{\lambda_k} \right)^L = \prod_{\substack{j=1\\ j\neq k}}^{M} \frac{\lambda_k-\lambda_j+1}{\lambda_k-\lambda_j-1} = -\prod_{j=1}^{M} \frac{\lambda_k-\lambda_j+1}{\lambda_k-\lambda_j-1}. \label{BAE1}
\ee
The Bethe equations for the XXZ chain possess the following form:
\be
\left( \frac{q-\lambda_k q^{-1}}{1-\lambda_k} \right)^L = (-1)^{M-1} \prod_{\substack{j=1\\ j\neq k}}^{M} \frac{\lambda_j q-\lambda_k q^{-1}}{\lambda_k q-\lambda_j q^{-1}} =
 \prod_{\substack{j=1\\ j\neq k}}^{M} \frac{\lambda_k q^{-1}-\lambda_j q}{\lambda_k q-\lambda_j q^{-1}}.\lb{BAExxz}
\ee Setting $\lambda_j=q^{-2\alpha_j}$ and $q=e^{h/2}$ in
\eqref{BAExxz}, we get \eqref{BAE1} in
 the limit $h\rightarrow 0$. The corresponding eigenvalue is
 \begin{equation}
 \Lambda(\lambda, \{ \lambda \}) = (\lambda^{-1/2}q - \lambda^{1/2}q^{-1})^L \prod\limits_{i=1}^M
  \frac{\lambda_i q - \lambda q^{-1}}{\lambda_i - \lambda} +
 (\lambda^{-1/2} - \lambda^{1/2})^L \prod\limits_{i=1}^M \frac{\lambda q - \lambda_i q^{-1}}{\lambda - \lambda_i}.
 \end{equation}

\section{Generalization of the two-component model}
\setcounter{equation}0

In the literature, cf. \cite{IzKor}, \cite{Slav} etc., there appears a so-called two-component model. The two-component model was introduced to avoid problems with computation of correlation functions for local operators attached to some site $x$ of the chain in the algebra of global operators \eqref{monod1} $A(\lambda)$, $B(\lambda)$, $C(\lambda)$ and $D(\lambda)$ defined on the chain as a whole.

We divide the chain $[1,\dots,L]$ into two components $[1,\dots,x]$ and $[x+1,\dots,L]$. Then we have the Hilbert space splitted into two parts $\mathscr{H}=\mathscr{H}_1\otimes \mathscr{H}_2$ where $\mathscr{H}_1=h_1\otimes\dotsm\otimes h_x$ and $\mathscr{H}_2=h_{x+1}\otimes\dotsm\otimes h_{L}$. We see that pseudovacuum $\ket{0}\in\mathscr{H}$ is of the form $\ket{0}=\ket{0}_1\otimes\ket{0}_2$ where $\ket{0}_1\in\mathscr{H}_1$ and $\ket{0}_2\in\mathscr{H}_2$. We define on $V_a\otimes \mathscr{H}_1\otimes\mathscr{H}_2$ the monodromy matrix for each component
\begin{align}
T_1(\lambda) & = L_{a,1}(\lambda)\dotsm L_{a,x}(\lambda)=\begin{pmatrix} A_1(\lambda) & B_1(\lambda)\\ C_1(\lambda) & D_1(\lambda)\\ \end{pmatrix}, \\
\intertext{resp.}
T_2(\lambda) & = L_{a,{x+1}}(\lambda)\dotsm L_{a,L}(\lambda)=\begin{pmatrix} A_2(\lambda) & B_2(\lambda)\\ C_2(\lambda) & D_2(\lambda)\\ \end{pmatrix}.
\end{align}
Each of these monodromy matrices satisfies exactly the same commutation relations \eqref{FCRglobal} as the original undivided monodromy matrix \eqref{monod}.
Moreover, we have
\begin{align}
\label{a12d12}
    A_j(\lambda) \ket{0}_j &= \alpha_j(\lambda)\ket{0}_j, & D_j(\lambda) \ket{0}_j &= \delta_j(\lambda)\ket{0}_j, & C_j(\lambda) \ket{0}_j &= 0.
\end{align}
Operators corresponding to different components mutually commute. From construction, it is easy to see that
\be\lb{2comp00}
\alpha(\lambda)=\alpha_1(\lambda)\alpha_2(\lambda),\qquad \delta(\lambda)=\delta_1(\lambda)\delta_2(\lambda).
\ee
For the whole chain $[1,\dots,L]$ the full monodromy matrix $T$ is
 \be
 \lb{full}
 \begin{array}{c}
 T(\lambda) =\begin{pmatrix} A(\lambda) & B(\lambda)\\ C(\lambda) & D(\lambda)\\ \end{pmatrix} =
  T_1(\lambda) \, T_2(\lambda) = \\ [0.3cm]
= \begin{pmatrix} A_1(\lambda) A_2(\lambda) + B_1(\lambda)C_2(\lambda) &
 A_1(\lambda) B_2(\lambda) + B_1(\lambda)D_2(\lambda) \\
 C_1(\lambda)A_2(\lambda) + D_1(\lambda)C_2(\lambda) &
  C_1(\lambda)B_2(\lambda) + D_1(\lambda) D_2(\lambda) \\ \end{pmatrix}  ,
  \end{array}
 \ee
and the $M$-magnon state is represented in the form
  \be
 \lb{bae-M}
 |\lambda_1,\dots,\lambda_M \rangle = \prod_{k=1}^M B(\lambda_k) \ket{0} =
 \prod_{k=1}^M
 \Bigl(A_1(\lambda_k) B_2(\lambda_k) + B_1(\lambda_k)D_2(\lambda_k) \Bigr) \ket{0}_1\otimes\ket{0}_2 \; .
 \ee

The beautiful result of Izergin and Korepin \cite{IzKor} states that the Bethe vectors of the full model can be expressed in terms of the Bethe vectors of its components. To obtain this expression, we should
 commute in (\ref{bae-M}) all operators $A_1(\lambda_k)$ and $D_2(\lambda_k)$  to the
 right with the help of (\ref{com1}) and (\ref{com2}) and then use (\ref{a12d12}).
 Finally, we obtain the following result \cite{IzKor}.

\begin{prop}\lb{prop:2comp}
An arbitrary Bethe vector corresponding to the full system can  be expressed in terms of the Bethe vectors of the first and second component. Let $I=\{\lambda_1,\dots,$ $ \lambda_M\}$ be a finite set
 of spectral parameters.  To concise notation below, we will consider the set $I$
 as a finite set of indices
 $I=\{1,\dots, M\}$, then
\begin{align}\lb{2comp}
    & \prod_{k\in I} B(\lambda_k) \ket{0} = \nonumber\\
    & \sum_{I_1\cup I_2} \prod_{k_1\in I_1} \Bigl( \delta_2(\lambda_{k_1}) B_1(\lambda_{k_1}) \Bigr) \prod_{k_2\in I_2} \Bigl(\alpha_1(\lambda_{k_2}) B_2(\lambda_{k_2}) \Bigr) \ket{0}_1\otimes\ket{0}_2 \prod_{k_1\in I_1} \prod_{k_2\in I_2} f(\lambda_{k_1},\lambda_{k_2})
\end{align}
where $f(\lambda_{k_1},\lambda_{k_2})$ is defined in \eqref{com5} resp. \eqref{com6} and the summation is performed over all divisions of the index set $I$ into two disjoint subsets $I_1$ and $I_2$ where $I=I_1\cup I_2$.
\end{prop}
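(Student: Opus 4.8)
The plan is to prove the identity by induction on the number $M$ of spectral parameters, peeling one creation operator off at a time and invoking the single-operator commutation formulas (\ref{bae03}) and (\ref{bae06}) for the two components separately. For $M=1$ the claim is immediate: $B(\lambda_1)\ket{0}=\bigl(A_1(\lambda_1)B_2(\lambda_1)+B_1(\lambda_1)D_2(\lambda_1)\bigr)\ket{0}_1\otimes\ket{0}_2$ reduces by (\ref{a12d12}) to $\alpha_1(\lambda_1)B_2(\lambda_1)\ket{0}_1\otimes\ket{0}_2+\delta_2(\lambda_1)B_1(\lambda_1)\ket{0}_1\otimes\ket{0}_2$, which is exactly the $M=1$ sum over the two divisions $I_2=\{1\}$ and $I_1=\{1\}$ (the cross-product of $f$'s being empty). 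For the inductive step I would write $\prod_{k\in I}B(\lambda_k)=B(\lambda_1)\prod_{k\in I'}B(\lambda_k)$ with $I'=\{2,\dots,M\}$, substitute the formula at level $M-1$, and then act with the two summands of $B(\lambda_1)=A_1(\lambda_1)B_2(\lambda_1)+B_1(\lambda_1)D_2(\lambda_1)$. Since component-$1$ and component-$2$ operators commute, $B_2(\lambda_1)$ and $B_1(\lambda_1)$ pass freely onto their own factors, whereas $A_1(\lambda_1)$ must be pushed through the string $\prod_{j_1\in J_1}B_1(\lambda_{j_1})$ and $D_2(\lambda_1)$ through $\prod_{j_2\in J_2}B_2(\lambda_{j_2})$, where $J_1\sqcup J_2=I'$ labels a term of the level-$(M-1)$ sum.

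Second, I would collect the ``wanted'' (leading) contributions. Because $A_1,B_1$ and $D_2,B_2$ obey the very same relations (\ref{com1}), (\ref{com2}) as the undivided operators, the results (\ref{bae03})--(\ref{bae05}) and (\ref{bae06})--(\ref{bae07}) apply verbatim with $\alpha\to\alpha_1$, $\delta\to\delta_2$. The all-$f$ term of $A_1(\lambda_1)$ yields $\alpha_1(\lambda_1)\prod_{j_1\in J_1}f(\lambda_{j_1},\lambda_1)$ and leaves the $B_1$-string intact, i.e. it assigns the index $1$ to the second component; the all-$f$ term of $D_2(\lambda_1)$ yields $\delta_2(\lambda_1)\prod_{j_2\in J_2}f(\lambda_1,\lambda_{j_2})$ and assigns $1$ to the first component. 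Summing over all $(J_1,J_2)$ and over the two channels reproduces the level-$M$ sum: the new factors $f(\lambda_{j_1},\lambda_1)$ and $f(\lambda_1,\lambda_{j_2})$ are precisely the cross terms of (\ref{2comp}) that involve the index $1$, and the prefactors $\alpha_1(\lambda_1)$, $\delta_2(\lambda_1)$ match those prescribed there. This step is routine once the two single-operator formulas are in hand.

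The main obstacle is to show that the remaining ``unwanted'' terms cancel. Each arises from the $g$-part of (\ref{bae03}) or (\ref{bae06}): there $\lambda_1$ is exchanged with an inner parameter $\lambda_i$, so that $\lambda_1$ ends up excited in \emph{both} components while $\lambda_i$ is demoted to a scalar $\alpha_1(\lambda_i)$ or $\delta_2(\lambda_i)$. The plan is to pair the unwanted term coming from the $B_1(\lambda_1)D_2(\lambda_1)$ channel of the division $(J_1,J_2)$ (with $i\in J_2$) against the one from the $A_1(\lambda_1)B_2(\lambda_1)$ channel of the division $(J_1\cup\{i\},\,J_2\setminus\{i\})$. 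These two terms carry identical operator content and identical $\alpha_1,\delta_2$ prefactors; a short check shows their $f$-products coincide as well, since the blocks $\prod_{j_1\in J_1}f(\lambda_{j_1},\lambda_i)$ and $\prod_{k\in J_2\setminus\{i\}}f(\lambda_i,\lambda_k)$ are reproduced on both sides (one time from the level-$(M-1)$ cross product, the other from the $\Phi_i$ or $\Psi_i$ factor). Hence the two contributions differ only through $g(\lambda_i,\lambda_1)$ versus $g(\lambda_1,\lambda_i)$, and the antisymmetry $g(\mu,\lambda)=-g(\lambda,\mu)$ noted after (\ref{com6}) forces their sum to vanish. Verifying that this correspondence is a bijection on the unwanted terms and that the $f$-bookkeeping matches in full generality is the genuinely technical part; everything else is assembly.
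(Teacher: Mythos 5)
Your proposal is correct and follows essentially the same route as the paper's proof: induction on $M$, expansion of the new $B$ via the two channels $A_1B_2$ and $B_1D_2$, the all-$f$ (``wanted'') terms of (\ref{bae03}) and (\ref{bae06}) assembling the level-$M$ formula, and the ``unwanted'' $g$-terms cancelling pairwise under the repartition $I_1\to I_1\cup\{i\}$, $I_2\to I_2\setminus\{i\}$ by the antisymmetry $g(\mu,\lambda)=-g(\lambda,\mu)$. The $f$-bookkeeping you flag as the technical core is exactly what the paper verifies explicitly in passing from (\ref{2comp02}) to (\ref{2comp03}), and your pairing of channels matches that computation.
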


\begin{proof}
The proof is just a matter of commutation relations \eqref{FCRglobal} resp. \eqref{com1}-\eqref{com5}. We use induction on the number of elements $M$ of the index set $I$. We see that
\be
B(\lambda) \ket{0} = \Bigl( A_1(\lambda) B_2(\lambda) + B_1(\lambda)D_2(\lambda) \Bigr) \ket{0}_1\otimes\ket{0}_2 = \Bigl( \alpha_1(\lambda) B_2(\lambda) + \delta_2(\lambda) B_1(\lambda) \Bigr) \ket{0}_1\otimes\ket{0}_2
\ee
which is exactly the formula \eqref{2comp} for $M=1$. Let us suppose that \eqref{2comp} is valid for the index set $I=\{1,\dots,M-1\}$. Then we have
\begin{align}
& B(\lambda) \prod_{k\in I} B(\lambda_k) \ket{0} = \Bigl( A_1(\lambda) B_2(\lambda) + B_1(\lambda) D_2(\lambda)  \Bigr) \times \nonumber\\
&\; \times \sum_{\stackrel{I_1,I_2}{I=I_1\cup I_2}} \prod_{k_1\in I_1} \Bigl( \delta_2(\lambda_{k_1}) B_1(\lambda_{k_1}) \Bigr) \prod_{k_2\in I_2} \Bigl(\alpha_1(\lambda_{k_2}) B_2(\lambda_{k_2}) \Bigr) \ket{0}_1\otimes\ket{0}_2 \prod_{k_1\in I_1} \prod_{k_2\in I_2} f(\lambda_{k_1},\lambda_{k_2}) =\nonumber
\end{align}
\begin{align}
&= \sum_{\stackrel{I_1,I_2}{I=I_1\cup I_2}}  \Bigl( A_1(\lambda) \prod_{k_1\in I_1} \delta_2(\lambda_{k_1}) B_1(\lambda_{k_1}) \Bigr) \Bigl( B_2(\lambda) \prod_{k_2\in I_2} \alpha_1(\lambda_{k_2}) B_2(\lambda_{k_2}) \Bigr) \ket{0}_1\otimes\ket{0}_2 \times \nonumber\\
&\qquad \times \prod_{k_1\in I_1} \prod_{k_2\in I_2} f(\lambda_{k_1},\lambda_{k_2})\  + \nonumber\\
&\quad + \sum_{\stackrel{I_1,I_2}{I=I_1\cup I_2}}  \Bigl( B_1(\lambda) \prod_{k_1\in I_1} \delta_2(\lambda_{k_1}) B_1(\lambda_{k_1}) \Bigr)  \Bigl( D_2(\lambda) \prod_{k_2\in I_2} \alpha_1(\lambda_{k_2}) B_2(\lambda_{k_2}) \Bigr) \ket{0}_1\otimes\ket{0}_2 \times \nonumber\\
&\qquad \times \prod_{k_1\in I_1} \prod_{k_2\in I_2} f(\lambda_{k_1},\lambda_{k_2}).\lb{2comp01}
\end{align}
In the first sum we use \eqref{bae03} to commute $A_1(\lambda)$ with $\prod_{k_1\in I_1} B_1(\lambda_{k_1})$ resp. \eqref{bae06} to commute $D_2 (\lambda)$ with $\prod_{k_2\in I_2}B_2(\lambda_{k_2})$ in the second sum. Using just the second term in \eqref{bae03} we get for the first sum:
\begin{align}
& \sum_{\stackrel{I_1,I_2}{I=I_1\cup I_2}} \sum_{k\in I_1} g(\lambda,\lambda_k) \alpha_1(\lambda_k)  \delta_2(\lambda_k) B_1(\lambda) B_2(\lambda) \prod_{\substack{j\in I_1 \\ j\neq k}} \delta_2(\lambda_j) B_1(\lambda_j) \prod_{i\in I_2} \alpha_1(\lambda_i) B_2(\lambda_i) \ket{0}_1\otimes\ket{0}_2 \nonumber\\
& \qquad \times \prod_{\substack{l\in I_1\\ l\neq k}} f(\lambda_l,\lambda_k) \prod_{k_1\in I_1}  \prod_{k_2\in I_2}  f(\lambda_{k_1},\lambda_{k_2}).\lb{2comp02}
\end{align}
Similarly, using just the second term in \eqref{bae06} we get for the second sum:
\begin{align}
& \sum_{\stackrel{I_1,I_2}{I=I_1\cup I_2}} \sum_{k\in I_2} g(\lambda_k,\lambda) \alpha_1(\lambda_k) \delta_2(\lambda_k) B_1(\lambda) B_2(\lambda)
\prod_{j\in I_1} \delta_2(\lambda_j) B_1(\lambda_j) \prod_{\substack{i\in I_2\\ i\neq k}} \alpha_1(\lambda_i) B_2(\lambda_i) \ket{0}_1\otimes\ket{0}_2 \nonumber\\
&\qquad\times \prod_{\substack{l\in I_2\\ l\neq k}} f(\lambda_k,\lambda_l) \prod_{k_1\in I_1}  \prod_{k_2\in I_2}  f(\lambda_{k_1},\lambda_{k_2}) = \nonumber\\
&= \sum_{\stackrel{I_1',I_2'}{I=I_1'\cup I_2'}} \sum_{k\in I'_1} g(\lambda_k,\lambda) \alpha_1(\lambda_k)  \delta_2(\lambda_k) B_1(\lambda) B_2(\lambda) \prod_{\substack{j\in I'_1 \\ j\neq k}} \delta_2(\lambda_j) B_1(\lambda_j) \prod_{i\in I'_2} \alpha_1(\lambda_i) B_2(\lambda_i) \ket{0}_1\otimes\ket{0}_2 \nonumber\\
&\qquad\times \prod_{k_1\in I'_1}  \prod_{k_2\in I'_2}  f(\lambda_{k_1},\lambda_{k_2}) \prod_{\substack{m\in I'_1\\m\neq k}} f(\lambda_m,\lambda_k)\lb{2comp03}
% \underbrace{\prod_{l\in I'_2} f(\lambda_k,\lambda_l) \prod_{\substack{k_1\in I'_1\\k_1\neq k}}  \prod_{k_2\in I'_2}  f(\lambda_{k_1},\lambda_{k_2})}_{\prod_{k_1\in I'_1}  \prod_{k_2\in I'_2}  f(\lambda_{k_1},\lambda_{k_2})} \prod_{\substack{m\in I'_1\\m\neq k}} f(\lambda_m,\lambda_k)\lb{2comp03}
\end{align}
where we introduced new partition $I'_1=I_1\cup \{k\}$ and $I'_2 = I_2\backslash \{k\}$. We see that \eqref{2comp03} is almost the same as \eqref{2comp02} with only one difference. In \eqref{2comp02} there appears a factor $g(\lambda,\lambda_k)$ and in \eqref{2comp03} there appears $g(\lambda_k,\lambda)$. Using the fact that $g(\lambda,\lambda_k)=-g(\lambda_k,\lambda)$, cf. \eqref{com5}, we see that these two sums cancel each other. Therefore, only the first parts of \eqref{bae03} and \eqref{bae06} contribute to \eqref{2comp01}. We get
\begin{align}
&   B(\lambda) \prod_{k\in I} B(\lambda_k) \ket{0} = \nonumber\\
&   =\sum_{\stackrel{I_1,I_2}{I=I_1\cup I_2}}  \Bigl( \alpha_1(\lambda) \prod_{k_1\in I_1} f(\lambda_{k_1},\lambda) \delta_2(\lambda_{k_1}) B_1(\lambda_{k_1}) \Bigr) \Bigl( B_2(\lambda) \prod_{k_2\in I_2} \alpha_1(\lambda_{k_2}) B_2(\lambda_{k_2}) \Bigr) \ket{0}_1\otimes\ket{0}_2 \nonumber\\
&\qquad\times \prod_{k_1\in I_1} \prod_{k_2\in I_2} f(\lambda_{k_1},\lambda_{k_2})\ + \nonumber\\
&+ \sum_{\stackrel{I_1,I_2}{I=I_1\cup I_2}}  \Bigl( B_1(\lambda) \prod_{k_1\in I_1} \delta_2(\lambda_{k_1}) B_1(\lambda_{k_1}) \Bigr)  \Bigl( \delta_2(\lambda) \prod_{k_2\in I_2} f(\lambda,\lambda_{k_2}) \alpha_1(\lambda_{k_2}) B_2(\lambda_{k_2}) \Bigr) \ket{0}_1\otimes\ket{0}_2  \nonumber\\
&\qquad\times \prod_{k_1\in I_1} \prod_{k_2\in I_2} f(\lambda_{k_1},\lambda_{k_2})\lb{2comp04}
\end{align}
which proves the induction.
\end{proof}

This result can be straightforwardly generalized to an arbitrary number of components $N\leq L$. %For this purpose, let us rewrite \eqref{2comp04} in more comfortable form:

\begin{prop}\lb{prop:Ncomp}
An arbitrary Bethe vector of the full system can be expressed in terms of the Bethe vectors of its components. For $N\leq L$ components the Bethe vector is of the form
\begin{align}
    & \prod_{k\in I} B(\lambda_k) \ket{0} = \sum_{I_1\cup I_2\cup\dotsb\cup I_N} \prod_{k_1\in I_1} \prod_{k_2\in I_2} \dotsm \prod_{k_N\in I_N} \prod_{1\leq i<j\leq N} \Bigl( \alpha_i(\lambda_{k_j}) \delta_j(\lambda_{k_i}) f(\lambda_{k_i},\lambda_{k_j}) \Bigr) \nonumber\\
  & \qquad \times B_1(\lambda_{k_1}) \ket{0}_1 \otimes B_2(\lambda_{k_2}) \ket{0}_2 \otimes \dots \otimes B_N(\lambda_{k_N}) \ket{0}_N\lb{Ncomp}
\end{align}
where summation is performed over all divisions of the set $I$ into its $N$ mutually disjoint subsets $I_1,I_2,\dots, I_{N}$.
% and function $f(\lambda_{k_i},\lambda_{k_j})$ is the function appearing in % global fundamental commutation relations \eqref{com1}-\eqref{com4}.
\end{prop}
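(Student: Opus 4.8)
The plan is to prove \eqref{Ncomp} by induction on the number of components $N$, bootstrapping from the two-component decomposition of Proposition~\ref{prop:2comp}. The base case $N=1$ is trivial and $N=2$ is precisely \eqref{2comp}, so I assume the formula holds for every splitting into $N-1$ components. I would then regroup the chain as the first component together with the merged block consisting of components $2,\dots,N$, writing the full monodromy as $T(\lambda)=T_1(\lambda)\,\widehat{T}(\lambda)$ with $\widehat{T}(\lambda)=T_2(\lambda)\dotsm T_N(\lambda)$. Since $\widehat{T}(\lambda)$ is itself the monodromy matrix of the sub-chain carrying the sites of components $2,\dots,N$, it satisfies the same global commutation relation \eqref{FCRglobal}, its pseudovacuum is $\ket{0}_2\otimes\dotsb\otimes\ket{0}_N$, and by iterating \eqref{2comp00} its diagonal eigenvalues factorize as $\widehat{\alpha}(\lambda)=\prod_{m=2}^N\alpha_m(\lambda)$ and $\widehat{\delta}(\lambda)=\prod_{m=2}^N\delta_m(\lambda)$. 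Thus the pair $(T_1,\widehat{T})$ is a genuine two-component system to which Proposition~\ref{prop:2comp} applies verbatim.

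Applying \eqref{2comp} to this coarse splitting expands $\prod_{k\in I}B(\lambda_k)\ket{0}$ as a sum over partitions $I=J_1\sqcup J$ (with $J=I\setminus J_1$ the indices routed into the merged block) of
\[
\Bigl(\prod_{k\in J_1}\widehat{\delta}(\lambda_k)\Bigr)\Bigl(\prod_{k'\in J}\alpha_1(\lambda_{k'})\Bigr)\Bigl(\prod_{k\in J_1}B_1(\lambda_k)\ket{0}_1\Bigr)\otimes\Bigl(\prod_{k'\in J}\widehat{B}(\lambda_{k'})\ket{0}_2\otimes\dotsb\otimes\ket{0}_N\Bigr)\prod_{k\in J_1}\prod_{k'\in J}f(\lambda_k,\lambda_{k'}),
\]
where I have used that the component-$1$ operators commute with those of the merged block to separate the two tensor factors. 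The key step is then to apply the induction hypothesis to the inner factor $\prod_{k'\in J}\widehat{B}(\lambda_{k'})\,\ket{0}_2\otimes\dotsb\otimes\ket{0}_N$, which expands it as a sum over partitions $J=I_2\sqcup\dotsb\sqcup I_N$ carrying the coefficient $\prod_{2\le i<j\le N}\prod_{k_i\in I_i}\prod_{k_j\in I_j}\alpha_i(\lambda_{k_j})\delta_j(\lambda_{k_i})f(\lambda_{k_i},\lambda_{k_j})$ together with the tensor product $\bigotimes_{m=2}^N\prod_{k_m\in I_m}B_m(\lambda_{k_m})\ket{0}_m$. Setting $I_1=J_1$ then composes the two levels of summation into a single sum over all ordered partitions $I=I_1\sqcup\dotsb\sqcup I_N$, and the tensor factors assemble into $\bigotimes_{m=1}^N\prod_{k_m\in I_m}B_m(\lambda_{k_m})\ket{0}_m$ as required.

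What remains, and what I expect to be the one genuinely bookkeeping-heavy step where an error is most likely to hide, is to check that the accumulated scalar coefficients reassemble into the symmetric product over pairs appearing in \eqref{Ncomp}. Using $\widehat{\delta}=\prod_{m\ge2}\delta_m$ and $\widehat{\alpha}=\prod_{m\ge2}\alpha_m$, the factor $\prod_{k\in I_1}\widehat{\delta}(\lambda_k)$ becomes $\prod_{j=2}^N\prod_{k_1\in I_1}\delta_j(\lambda_{k_1})$, which supplies exactly the $\delta_j(\lambda_{k_i})$ terms of \eqref{Ncomp} with $i=1$; likewise $\prod_{k'\in J}\alpha_1(\lambda_{k'})$ supplies the $\alpha_i(\lambda_{k_j})$ terms with $i=1$, and $\prod_{k\in J_1}\prod_{k'\in J}f(\lambda_k,\lambda_{k'})$ supplies the $f$-factors for all pairs $(1,j)$. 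These are precisely the pairs with $i=1$ absent from the induction-hypothesis coefficient, so together with the $2\le i<j\le N$ contribution they build up $\prod_{1\le i<j\le N}(\cdots)$, completing the induction. I would emphasize that no new cancellations need to be verified at this stage: the elimination of the unwanted second terms in \eqref{bae03} and \eqref{bae06} has already been carried out inside Proposition~\ref{prop:2comp}, so invoking it as a black box keeps the inductive step purely combinatorial.
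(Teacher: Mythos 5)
Your proof is correct, and it is essentially the paper's induction run in mirror image, so a brief comparison is worthwhile. The paper passes from $N$ to $N+1$ components by applying the induction hypothesis to the coarse partition in which the last two components are merged, and then expanding the merged last component's vector $\prod_{k_N\in I_N}B_N(\lambda_{k_N})\ket{0}_N$ via \eqref{2comp}; the two-component lemma thus acts at the innermost level of the already-expanded sum, and one must re-distribute the hypothesis's factors $\alpha_N(\lambda_{k_j})$, $\delta_N(\lambda_{k_i})$ and the $f$-factors attached to $I_N$ over the finer partition $I_N=I_N'\cup I_{N+1}'$ (a bookkeeping step the paper largely leaves implicit). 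You instead peel off the first component at the outermost level, applying \eqref{2comp} to the pair $(T_1,\widehat{T})$ with $\widehat{T}=T_2\dotsm T_N$, and only then invoking the induction hypothesis inside the merged tail. Both arguments rest on the same fact, which you state explicitly and the paper uses silently: a union of components is itself a legitimate component, since it obeys \eqref{FCRglobal} and its pseudovacuum eigenvalues factor by iterating \eqref{2comp00}. Your ordering buys a cleaner inductive step --- a straight composition of two expansions, with the pairs $(1,j)$ supplied by the outer sum and the pairs $2\le i<j\le N$ by the inner one --- while the paper's ordering avoids introducing the merged tail as an auxiliary object and works only with honest components. One caveat applies to both proofs equally: the coefficient in \eqref{Ncomp} must be read in the sense fixed by the $N=2$ case \eqref{2comp}, i.e. each $\delta_j(\lambda_{k_i})$ appears once per $k_i\in I_i$, each $\alpha_i(\lambda_{k_j})$ once per $k_j\in I_j$, and each $f(\lambda_{k_i},\lambda_{k_j})$ once per pair; the nested-product notation of \eqref{Ncomp} is imprecise on this point, and what you derive is exactly the intended formula.
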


\begin{proof}
The proof is simply performed by induction on the number of components $N$ and by using \eqref{2comp}. For $N=2$ is \eqref{Ncomp} just \eqref{2comp}. Let us suppose that \eqref{Ncomp} is valid for some $N<L$ and make induction step to $N+1$. The chain $[1,\dots,L]$ is divided into $N$ subchains $[1,\dots,x_1]$, $[x_1+1,\dots,x_2]$, etc. up to $[x_{N-1}+1,\dots,L]$. Let us divide the last interval, if possible, into two subchains $[x_{N-1}+1,\dots,x_{N}]$ and $[x_{N}+1,\dots,L]$ and apply \eqref{2comp} to set of $B$ operators $\prod_{k_N\in I_N} B_N(\lambda_{k_N}) \ket{0}_N$. We get
\begin{align}
    & \prod_{k_N\in I_N} B_N(\lambda_{k_N}) \ket{0}_N =   \sum_{I'_{N}\cup I'_{N+1}} \prod_{k_{N}\in I'_{N}} \prod_{k_{N+1}\in I'_{N+1}} \delta'_{N+1}(\lambda_{k_{N}}) \alpha'_{N}(\lambda_{k_{N+1}}) f(\lambda_{k_{N}},\lambda_{k_{N+1}}) \nonumber\\
    &\qquad \times B'_{N}(\lambda_{k_{N}}) B'_{N+1}(\lambda_{k_{N+1}}) \ket{0}'_{N}\otimes \ket{0}'_{N+1} \lb{Ncomp1}
\end{align}
where the sum goes over all divisions of $I_N$ into its two disjoint subsets $I'_N$ and $I'_{N+1}$ such that $I_N=I'_N \cup I'_{N+1}$ and operators $B'_N(\lambda)$ and $B'_{N+1}(\lambda)$ act on the new subchains $[x_{N-1}+1,\dots,x_{N}]$ resp. $[x_{N}+1,\dots,L]$; the same for $\alpha'_{N}(\lambda)$ resp. $\delta'_{N+1}(\lambda)$ and the pseudovacuum vectors $\ket{0}'_{N}$ resp. $\ket{0}'_{N+1}$. Let us remind that
\be
\prod_{k_N \in I_N} = \prod_{k_N \in I'_N} \ \prod_{k_{N+1} \in I'_{N+1}}.
\ee
Inserting \eqref{Ncomp1} into  induction assumption \eqref{Ncomp}  we get
\begin{align}
    & \prod_{k\in I} B(\lambda_k) \ket{0} = \nonumber\\
    & \sum_{I_1\cup I_2\cup\dotsb\cup I'_N \cup I'_{N+1}} \prod_{k_1\in I_1} \prod_{k_2\in I_2} \dotsm \prod_{k_N\in I'_N} \prod_{k_{N+1}\in I'_{N+1}}  \prod_{1\leq i<j\leq N+1} \Bigl( \alpha_i(\lambda_{k_j}) \delta_j(\lambda_{k_i}) f(\lambda_{k_i},\lambda_{k_j}) \Bigr) \nonumber\\
  & \qquad \times B_1(\lambda_{k_1}) \ket{0}_1 \otimes B_2(\lambda_{k_2}) \ket{0}_2 \otimes \dots \otimes B'_N(\lambda_{k_N}) \ket{0}'_N \otimes B'_{N+1}(\lambda_{k_{N+1}}) \ket{0}'_{N+1} \lb{Ncomp2}
\end{align}
which proves the induction.
\end{proof}

\section{Bethe vectors}
\setcounter{equation}0

In this section, we will see that computation of the Bethe vectors in the algebraic Bethe ansatz is just a matter of using proposition \ref{prop:Ncomp}. By assumption we have a chain of length $L$. Let us divide it into $L$ components, i.e. into $L$ subchains of length one (1-chains). Using proposition \ref{prop:Ncomp} we get for the $M$-magnon (Bethe vector) with $M\leq L$:
\begin{align}
  \prod_{k=1}^M B(\lambda_k) \ket{0} & =  \sum_{I_1\cup I_2\cup\dots\cup I_L}
    \prod_{k_1\in I_1} \prod_{k_2\in I_2} \dotsm \prod_{k_L\in I_L}  \prod_{1\leq i<j\leq L} \Bigl( \alpha_i(\lambda_{k_j}) \delta_j(\lambda_{k_i}) f(\lambda_{k_i},\lambda_{k_j}) \Bigr) \nonumber\\
    &\qquad\times  B_1(\lambda_{k_1}) \ket{0}_1 \otimes B_2(\lambda_{k_2}) \ket{0}_2 \otimes \dots \otimes B_L(\lambda_{k_L}) \ket{0}_L.\lb{Bethe}
\end{align}
It can be easily seen that for 1-chain, i.e. for a chain with Hilbert space $h=\mathbb{C}^2$,
\be
    B(\lambda) B(\mu)\ket{0} = 0.
\ee
Therefore, the sum over all divisions of $\{1,\dots,M\}$ into $L$ subsets contains just divisions into subsets containing at most one element, i.e. $|I_j|=0,1$. Moreover, only $M$ of them is nonempty, let us denote them $I_{n_1},I_{n_2},\dots,I_{n_M}$. We have to sum over all possible combinations of such sets, i.e. over all $M$-tuples $n_1<n_2<\dotsb<n_M$. Next, we have to sum over all distributions of the parameters $\lambda_1,\lambda_2,\dots,\lambda_M$ into the sets $I_{n_1},\dots,I_{n_M}$. We can simplify our life assuming that $\lambda_j\in I_{n_j}$. Then, by summing over all permutations $\sigma_\lambda\in S_M$ of $\{\lambda_1,\dots,\lambda_M\}$, we get exactly all the other distributions.

Let us study what happens to the coefficient
\be
    \prod_{1\leq i<j\leq L} \Bigl( \alpha_i(\lambda_{k_j}) \delta_j(\lambda_{k_i}) f(\lambda_{k_i},\lambda_{k_j}) \Bigr).
\ee
It is easy to see that
\be
    \prod_{1\leq i<j\leq L} \alpha_i(\lambda_{k_j})= \prod_{j=1}^L \prod_{i=1}^{j-1} \alpha_i(\lambda_{k_j}),
\ee
but only $\lambda_{k_j}$ from the sets $I_{n_1},\dots,I_{n_M}$ are relevant and by assumption $\lambda_j\in I_{n_j}$. Therefore, we can replace
\be
    \prod_{1\leq i<j\leq L} \alpha_i(\lambda_{k_j}) \longrightarrow \prod_{j=1}^M \prod_{i=1}^{n_j-1} \alpha_i(\lambda_{j}).
\ee
Similar considerations can be conducted for both $\delta_j(\lambda_{k_i})$ and both $f(\lambda_{k_i},\lambda_{k_j})$. We get
\begin{align}
 & \prod_{k=1}^M B(\lambda_k) \ket{0} = \sum_{1\leq n_1 < n_2 <\dots< n_M\leq L} \; \sum_{\sigma_\lambda \in S_M} \sigma_\lambda \Biggl( \prod_{j=1}^M \Bigl( \prod_{i=1}^{n_j-1} \alpha_i(\lambda_j) \prod_{i=n_j+1}^L \delta_i(\lambda_j) \prod_{i=1}^{j-1} f(\lambda_i,\lambda_j) \Bigr) \nonumber\\
    &\qquad\times  B_{n_1}(\lambda_{1}) B_{n_2}(\lambda_{2}) \dotsb B_{n_M}(\lambda_{M}) \Biggr) \ket{0}_1 \otimes \ket{0}_2 \otimes \dots \otimes  \ket{0}_L. \lb{Bethe1}
\end{align}
For 1-chain, it holds that $B(\lambda)=B$ is parameter independent. Moreover, eigenvalues $\alpha_i(\lambda)=a(\lambda)$,  $\delta_i(\lambda)=d(\lambda)$ are still the same for all components $i=1,\dots,L$, where $a(\lambda)$ and $d(\lambda)$ are defined in \eqref{adXXX} resp. \eqref{adXXZ}. We get
\begin{align}\label{Bethe2}
    & \prod_{k=1}^M B(\lambda_k) \ket{0} = \sum_{1\leq n_1 < n_2 <\dots< n_M\leq L} \; \sum_{\sigma\in S_M} \sigma_\lambda \Biggl( \prod_{j=1}^M a(\lambda_{j})^{n_j -1}  d(\lambda_{j})^{L-n_j} \prod_{i=1}^{j-1}  f(\lambda_{i},\lambda_{j}) \Biggr) \nonumber\\
    & \qquad \times B_{n_1}B_{n_2}\dotsb B_{n_M} \ket{0}_1 \otimes  \ket{0}_2 \otimes \dots \otimes \ket{0}_L= \nonumber\\
    & = \prod_{j=1}^M \frac{d(\lambda_j)^L}{a(\lambda_j)} \sum_{1\leq n_1 < n_2 <\dots< n_M\leq L} \; \sum_{\sigma\in S_M} \sigma_\lambda \Biggl( \prod_{1\leq i<j \leq M}  f(\lambda_{i},\lambda_{j}) \prod_{j=1}^M \left( \frac{a(\lambda_j)}{d(\lambda_j)} \right)^{n_j}  \Biggr) \nonumber\\
    & \qquad \times B_{n_1}B_{n_2}\dotsb B_{n_M} \ket{0}_1 \otimes  \ket{0}_2 \otimes \dots \otimes \ket{0}_L.
\end{align}

\section{Inhomogeneous Bethe ansatz}
\setcounter{equation}0

We start with the inhomogeneous monodromy matrix
\be
    T_a^{\vec{\xi}}(\lambda) = L_{a,1}(\lambda+\xi_1) L_{a,2}(\lambda+\xi_2)\dotsm L_{a,L}(\lambda+\xi_L)
\ee
where $L_{a,j}(\lambda)$ are the Lax operators defined in \eqref{Lax} resp. \eqref{RmatXXZ} depending on whether we consider XXX or XXZ spin chain. Let us remark that for the XXZ chain the monodromy matrix is of the form
\be
    T_a^{\vec{\xi}}(\lambda) = L_{a,1}(\lambda\cdot \xi_1) L_{a,2}(\lambda\cdot\xi_2)\dotsm L_{a,L}(\lambda\cdot\xi_L).
\ee
In what follows, we will use the notation connected with the XXX chain but we can do for the XXZ chain the same as well.

Expressing $T_a^{\vec{\xi}}(\lambda)$ in the auxiliary space $V_a$ we get
\be
    T_a^{\vec{\xi}}(\lambda) = \begin{pmatrix} A^{\vec{\xi}}(\lambda)  & B^{\vec{\xi}}(\lambda) \\ C^{\vec{\xi}}(\lambda) & D^{\vec{\xi}}(\lambda)  \end{pmatrix}
\ee
where, again, the operators $A^{\vec{\xi}} (\lambda)$, $B^{\vec{\xi}}(\lambda)$, $C^{\vec{\xi}}(\lambda)$ and $D^{\vec{\xi}}(\lambda)$ act in $\mathscr{H} = h_1\otimes\dotsb\otimes h_L$. Acting on the pseudovacuum vector $\ket{0}\in\mathscr{H}$
we get
\begin{align}
    A^{\vec{\xi}}(\lambda)\ket{0} &= \alpha^{\vec{\xi}}(\lambda) \ket{0}, \\
     D^{\vec{\xi}}(\lambda)\ket{0} &= \delta^{\vec{\xi}}(\lambda) \ket{0}, \\
      C^{\vec{\xi}}(\lambda)\ket{0} & = 0
\end{align}
where
\begin{align}
    \alpha^{\vec{\xi}}(\lambda) & = a(\lambda+\xi_1) a(\lambda+\xi_2)\dotsm a(\lambda+\xi_L), \\
    \delta^{\vec{\xi}}(\lambda) & = d(\lambda+\xi_1) d(\lambda+\xi_2)\dotsm d(\lambda+\xi_L).
\end{align}
Here, the functions $a(\lambda)$ and $d(\lambda)$ are defined in \eqref{adXXX} for XXX resp. in \eqref{adXXZ} for XXZ.

For the inhomogeneous version we can introduce the same $N$-component model as for the homogeneous Bethe ansatz. For the 2-component model, for example, we have
\be
    T_a^{\vec{\xi}}(\lambda) = \underbrace{L_{a,1}(\lambda+\xi_1) \dotsm L_{a,x}(\lambda+\xi_x)}_{\text{1st component}} \underbrace{L_{a,x+1}(\lambda+\xi_{x+1})\dotsm L_{a,L}(\lambda+\xi_L)}_{\text{2nd component}} = T^{\vec{\xi}_1}_a(\lambda) T^{\vec{\xi}_2}_a(\lambda)
\ee
where $\vec{\xi}_1=(\xi_1,\dots,\xi_x)$ resp. $\vec{\xi}_2=(\xi_{x+1},\dots,\xi_L)$. We have
\be
 A^{\vec{\xi}}(\lambda) \ket{0} = \alpha_1^{\vec{\xi}_1}(\lambda) \alpha_2^{\vec{\xi}_2}(\lambda) \ket{0}, \qquad
 D^{\vec{\xi}}(\lambda) \ket{0} = \delta_1^{\vec{\xi}_1}(\lambda) \delta_2^{\vec{\xi}_2}(\lambda) \ket{0}.
\ee
A very important property of the inhomogeneous chain is that its operators  $A^{\vec{\xi}} (\lambda)$, $B^{\vec{\xi}}(\lambda)$, $C^{\vec{\xi}}(\lambda)$ and $D^{\vec{\xi}}(\lambda)$ satisfy the same fundamental commutation relations as the homogeneous chain \eqref{com}-\eqref{com5}, i.e. commutation relations are independent of the inhomogeneity parameters $\vec{\xi}$. Therefore, an analogy of propositions \ref{prop:2comp} and \ref{prop:Ncomp} can be easily formulated.

\begin{prop} Let $N\leq L$. An arbitrary Bethe vector of the full system can be expressed in terms of the Bethe vectors of its $N$ components
\begin{align}
    \prod_{k\in I} B^{\vec{\xi}}(\lambda_k) \ket{0} & =
    \sum_{I_1\cup\dots\cup I_N} \prod_{k_1\in I_1} \dotsm \prod_{k_N\in I_N} \prod_{1\leq i<j\leq N} \Bigl(\alpha_i^{\vec{\xi}_i}(\lambda_{k_j}) \delta_j^{\vec{\xi}_j}(\lambda_{k_i}) f(\lambda_{k_i},\lambda_{k_j}) \Bigr) \nonumber\\
    & \qquad\times B_1^{\vec{\xi}_1}(\lambda_{k_1}) B_2^{\vec{\xi}_2}(\lambda_{k_2}) \dotsm B_N^{\vec{\xi}_N}(\lambda_{k_N}) \ket{0}.
\end{align}
\end{prop}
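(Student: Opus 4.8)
The plan is to reduce this statement entirely to the homogeneous results already established, exploiting the single structural fact emphasized just above: the inhomogeneous operators $A^{\vec{\xi}},B^{\vec{\xi}},C^{\vec{\xi}},D^{\vec{\xi}}$ obey exactly the same fundamental commutation relations \eqref{com}--\eqref{com5} as the homogeneous ones, the only change being that the pseudovacuum eigenvalues are now $\alpha^{\vec{\xi}}$ and $\delta^{\vec{\xi}}$, which factorize across a component splitting as $\alpha_i^{\vec{\xi}_i}$ and $\delta_j^{\vec{\xi}_j}$. Since every manipulation in the proofs of Propositions \ref{prop:2comp} and \ref{prop:Ncomp} used only these commutation relations together with the pseudovacuum eigenvalue equations and the antisymmetry $g(\mu,\lambda)=-g(\lambda,\mu)$, the two earlier inductions transcribe essentially verbatim once each unadorned symbol is replaced by its $\vec{\xi}$-decorated counterpart.

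Concretely, I would first establish the two-component case $N=2$ by induction on the number $M$ of spectral parameters, copying the argument of Proposition \ref{prop:2comp}. The base case $M=1$ is immediate from the factorization $B^{\vec{\xi}}(\lambda)=A_1^{\vec{\xi}_1}(\lambda)B_2^{\vec{\xi}_2}(\lambda)+B_1^{\vec{\xi}_1}(\lambda)D_2^{\vec{\xi}_2}(\lambda)$ acting on $\ket{0}_1\otimes\ket{0}_2$, which upon using \eqref{a12d12} yields $\alpha_1^{\vec{\xi}_1}(\lambda)B_2^{\vec{\xi}_2}(\lambda)+\delta_2^{\vec{\xi}_2}(\lambda)B_1^{\vec{\xi}_1}(\lambda)$. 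For the inductive step I would act with $B^{\vec{\xi}}(\lambda)$ on the assumed expansion, then commute $A_1^{\vec{\xi}_1}(\lambda)$ to the right past the string of $B_1^{\vec{\xi}_1}$ using the inhomogeneous analog of \eqref{bae03}, and $D_2^{\vec{\xi}_2}(\lambda)$ past the string of $B_2^{\vec{\xi}_2}$ using the analog of \eqref{bae06}.

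The one genuinely load-bearing step, exactly as in the homogeneous proof, is the cancellation of the ``unwanted'' contributions: the terms produced by the second summand $g(\lambda,\lambda_k)P_{\lambda\lambda_k}$ in \eqref{bae03} must annihilate those produced by $g(\lambda_k,\lambda)P_{\lambda\lambda_k}$ in \eqref{bae06}. I would show this as before, reindexing the partition $(I_1,I_2)\mapsto(I_1\cup\{k\},I_2\setminus\{k\})$ so that the two families of unwanted terms become identical up to the interchange $g(\lambda,\lambda_k)\leftrightarrow g(\lambda_k,\lambda)$, and then invoking $g(\lambda,\lambda_k)=-g(\lambda_k,\lambda)$ from \eqref{com5}/\eqref{com6}; the decorations $\alpha_1^{\vec{\xi}_1},\delta_2^{\vec{\xi}_2}$ ride along unchanged. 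What survives is precisely the desired two-component formula, with $\delta_2^{\vec{\xi}_2}$ attached to each $B_1^{\vec{\xi}_1}$ and $\alpha_1^{\vec{\xi}_1}$ to each $B_2^{\vec{\xi}_2}$, and the surplus factors $f(\lambda,\lambda_{k})$ absorbed into the product of $f$'s.

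Finally I would run the second induction, on the number of components $N$, exactly as in Proposition \ref{prop:Ncomp}. Assuming the $N$-component formula, I split the last subchain into two, apply the just-proved two-component identity to $\prod_{k_N\in I_N}B_N^{\vec{\xi}_N}(\lambda_{k_N})\ket{0}_N$, and substitute into the induction hypothesis. The only point requiring care is the bookkeeping: that refining $I_N=I'_N\cup I'_{N+1}$ promotes $\prod_{1\le i<j\le N}$ to $\prod_{1\le i<j\le N+1}$ with the correct new factors $\alpha_i^{\vec{\xi}_i}(\lambda_{k_{N+1}})$, $\delta_{N+1}^{\vec{\xi}_{N+1}}(\lambda_{k_i})$ and $f(\lambda_{k_i},\lambda_{k_{N+1}})$. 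This follows because the freshly produced $\alpha',\delta'$ are exactly the component eigenvalues of the two pieces of the split, and because $\prod_{k_N\in I_N}=\prod_{k_N\in I'_N}\prod_{k_{N+1}\in I'_{N+1}}$ reorganizes the products without remainder. I expect this reindexing to be the only place demanding attention; the algebra itself is entirely inherited from the homogeneous case, as the commutation relations are insensitive to $\vec{\xi}$.
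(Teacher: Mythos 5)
Your proposal is correct and is precisely the argument the paper intends: the paper states this proposition without proof, remarking only that the inhomogeneous operators satisfy the same fundamental commutation relations \eqref{com}--\eqref{com5} and that the pseudovacuum eigenvalues factorize, so that Propositions \ref{prop:2comp} and \ref{prop:Ncomp} carry over verbatim. Your transcription of the two inductions (on $M$ for the two-component case, then on $N$ for the component splitting), including the cancellation of unwanted terms via $g(\mu,\lambda)=-g(\lambda,\mu)$ and the partition reindexing, fills in exactly that omitted argument with no gaps.
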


To get an explicit formula for the Bethe vectors, we have to divide the chain into $L$ components of length 1, as we did in the last section. We get for the $M$-magnon
\begin{align}
    & \prod_{k=1}^M B^{\vec{\xi}}(\lambda_k) \ket{0} = \nonumber\\
    & = \sum_{1\leq n_1<\dotsb<n_M\leq L}\ \sum_{\sigma_\lambda\in S_M} \sigma_\lambda \Biggl( \prod_{j=1}^M \Bigl( \prod_{i=1}^{n_j-1} \alpha_i^{\xi_i}(\lambda_j) \prod_{i=n_j+1}^L \delta_i^{\xi_i}(\lambda_j) \prod_{i=1}^{j-1} f(\lambda_i,\lambda_j) \Bigr) \nonumber\\
    & \qquad \times B_{n_1}^{\xi_{n_1}}(\lambda_1)\dotsm B_{n_M}^{\xi_{n_M}}(\lambda_M) \Biggr) \ket{0}= \nonumber\displaybreak[0]\\
    & = \sum_{1\leq n_1<\dotsb<n_M\leq L}\ \sum_{\sigma_\lambda\in S_M} \sigma_\lambda   \Biggl( \prod_{j=1}^M \Bigl( \prod_{i=1}^{n_j-1} \alpha_i^{\xi_i}(\lambda_j) \prod_{i=n_j+1}^L \delta_i^{\xi_i}(\lambda_j) \prod_{i=1}^{j-1} f(\lambda_i,\lambda_j) \Bigr) \Biggr)  B_{n_1} \dotsm B_{n_M}  \ket{0} =\nonumber\displaybreak[0]\\
    & = \prod_{j=1}^M \prod_{i=1}^L d(\lambda_j +\xi_i) \sum_{1\leq n_1<\dotsb<n_M\leq L}  \sum_{\sigma_\lambda\in S_M} \sigma_\lambda   \Biggl( \prod_{j=1}^M \frac{1}{a(\lambda_j+\xi_{n_j})}  \prod_{i=1}^{n_j} \frac{a(\lambda_j+\xi_i)}{d(\lambda_j+\xi_i)}  \prod_{i=1}^{j-1} f(\lambda_i,\lambda_j)  \Biggr) \nonumber\\
    & \qquad\times B_{n_1} \dotsm B_{n_M} \ket{0}
\end{align}
where again the $B$-operators $B_{n_j}^{\xi_{n_j}}(\lambda) = B_{n_j}$ are parameter independent for 1-chains.

%% ==========================================================================
%% ==========================================================================
%% =====================  FERMIONIC REALIZATION  ============================
%% ==========================================================================

\section{Free Fermions}
\setcounter{equation}0

In this Section we recall the well-known construction \cite{JoWi} of $L$-dimensional free fermion algebra in terms
of the Pauli matrices. First,
in $\mathbb{C}^2$ one can easily define $1$-dimensional fermions using the properties of the Pauli matrices.
Let
    \begin{align}
        \psi \equiv \sigma^+=\frac{1}{2}(\sigma^x+i\sigma^y), \qquad \bpsi\equiv\sigma^- =\frac{1}{2}(\sigma^x-i\sigma^y).
    \end{align}
Thus defined $\psi,\bpsi$ satisfy the fermionic relations
    \begin{equation}
        [\bpsi,\psi]_+ = \mathbb{I} \; , \;\;\;\; \psi^2 = 0 \; , \;\;\;\; \bpsi^2 = 0 \; .
    \end{equation}
For a tensor product of $L$ copies of $\mathbb{C}^2$ we can define fermions as
\begin{align}\label{fermions}
    \psi_k \equiv \left( \prod_{j=1}^{k-1} \sigma_j^z\right)\sigma_k^+,\qquad \bpsi_k \equiv \left( \prod_{j=1}^{k-1} \sigma_j^z\right)\sigma_k^- \; , \;\;\;\; (k=1,\dots,L) \; ,
\end{align}
where $\sigma_j^\alpha$ denotes the sigma matrix attached to the $j$-th vector space, i.e.
\begin{equation}
\sigma_j^\alpha = \mathbb{I}^{\otimes (j-1)}\otimes \sigma^\alpha \otimes \mathbb{I}^{\otimes (L-j)}.
\end{equation}
This concise notation is used throughout the whole text. Commutation relations for the fermions \eqref{fermions} are of the form
\begin{align}\label{FermCom}
[\bpsi_i,\psi_j]_+=\delta_{ij} \mathbb{I},\quad [\bpsi_i,\bpsi_j]_+=0, \quad [\psi_i,\psi_j]_+ = 0.
\end{align}
It is a straightforward task to check the following identities:
    \begin{align}
        &\bpsi_{k+1}\psi_k + \bpsi_k\psi_{k+1} + \bpsi_k\bpsi_{k+1} + \psi_{k+1}\psi_k = \sigma_k^x \sigma_{k+1}^x, \label{ferm1}\\
        &\bpsi_{k+1}\psi_k + \bpsi_k\psi_{k+1} - \bpsi_k\bpsi_{k+1} - \psi_{k+1}\psi_k = \sigma_k^y \sigma_{k+1}^y, \label{ferm2}\\
        &[\psi_k,\bpsi_{k}]=\sigma_k^z,\label{ferm3}\\
        &(1-2\bpsi_k\psi_k)(1-2\bpsi_{k+1}\psi_{k+1}) = \sigma_k^z\sigma_{k+1}^z.\label{ferm4}
    \end{align}

\section{Fermionic realization of XXX}\lb{sec:Fermi}
\setcounter{equation}0

We have seen that our definition \eqref{Lax} of the Lax operator $L_{a,i}(\lambda)$ led to expression \eqref{Lax1} which is in fact identical to the definition of the R-matrix \eqref{Rmatrix}. Let us remind that the identity operator $\mathbb{I}$ is  a member of the algebra of fermions because of commutation relation \eqref{FermCom}. Therefore, from expression \eqref{Lax1} for $L_{a,i}(\lambda)$ we see that it remains to know a fermionic realization only for the permutation operator $P_{a,i}$.

Let us start with the permutation operator $P_{k,k+1}$ which permutes the neighboring vector spaces $h_k$ and $h_{k+1}$. Due to identities \eqref{ferm1}-\eqref{ferm4} and definition of permutation operator \eqref{perm} it is straightforward to check that
\be \label{perm1}
    P_{k,k+1} = \mathbb{I} + \bpsi_{k+1}\psi_k +\bpsi_k\psi_{k+1}-\bpsi_k\psi_k-\bpsi_{k+1}\psi_{k+1}+2\bpsi_k\psi_k\bpsi_{k+1}\psi_{k+1}.
\ee
Problems appear when we try to find a fermionic realization of the permutation operator $P_{j,k}$ in non-neighboring vector spaces $h_j$, $h_k$ where $j<k-1$. It turns out that $P_{j,k}$ becomes non-local in terms of fermions. Using properties of the Pauli matrices, $P_{j,k}$ could be rewritten as
\be
    P_{j,k} = \frac{1}{2}(\mathbb{I} + \sigma_j^z\sigma_k^z) + (\sigma_j^+\sigma_k^- +\sigma_j^-\sigma^+_k).
\ee
The first part is local even in terms of fermions
\be
\frac{1}{2}(\mathbb{I} + \sigma_j^z\sigma_k^z )= \mathbb{I} -\bpsi_k\psi_k-\bpsi_{j}\psi_{j}+2\bpsi_k\psi_k\bpsi_{j}\psi_{j},
\ee
but the second part is nonlocal
\be
\sigma_j^+\sigma_k^- +\sigma_j^-\sigma^+_k = (\psi_j\bpsi_k +\bpsi_j\psi_k)\prod_{l=j}^{k-1} \sigma_l^z = (\psi_j\bpsi_k +\bpsi_j\psi_k)\prod_{l=j}^{k-1} (\mathbb{I}-2\bpsi_l\psi_l).
\ee
Therefore, the fermionic realization of $P_{j,k}$ for $j<k-1$ is a nonlocal operator.

The nonlocality of $P_{j,k}$ resp. $R_{j,k}(\lambda)$ is a serious problem. There appear difficulties when we attempt to express the monodromy matrix \eqref{monod} in terms of such nonlocal operators. We need to avoid the nonlocality.

Let us remind once again that $L_{a,i}(\lambda)=R_{a,i}(\lambda)$. For the R-matrix $R_{ab}(\lambda)$ satisfying the Yang-Baxter equation \eqref{YBE} we can define the matrix $\hat{R}_{ab}(\lambda)=R_{ab}(\lambda)P_{ab}$ which satisfies
\be\lb{YBEbraid}
    \hat{R}_{ab}(\lambda) \hat{R}_{bc}(\lambda+\mu) \hat{R}_{ab}(\mu) = \hat{R}_{ab}(\mu) \hat{R}_{bc}(\lambda+\mu) \hat{R}_{ab}(\lambda).
\ee
We substitute $L_{a,i}(\lambda)= \hat{R}_{a,i}(\lambda)P_{a,i}$ in the monodromy matrix \eqref{monod} and obtain a very convenient expression
\begin{align}
 T_{a}(\lambda) & = L_{a,1}(\lambda)L_{a,2}(\lambda)\dots L_{a,L}(\lambda) =
     \hat{R}_{a,1}(\lambda)P_{a,1} \hat{R}_{a,2}(\lambda)P_{a,2} \dots \hat{R}_{a,L}(\lambda)P_{a,L} = \nonumber \\
    & =\hat{R}_{a,1}(\lambda) \hat{R}_{1,2}(\lambda) \dots \hat{R}_{L-1,L}(\lambda) P_{a,1} P_{a,2} \dots P_{a,L} = \nonumber \\
    & =\hat{R}_{a,1}(\lambda) \hat{R}_{1,2}(\lambda) \dots \hat{R}_{L-1,L}(\lambda) P_{L-1,L} \dots P_{1,2} P_{a,1}. \label{monod2}
\end{align}
It contains the operators $\hat{R}_{k,k+1}$ resp. $P_{k,k+1}$ acting only in the neighboring spaces $h_k\otimes h_{k+1}$. From \eqref{perm1} we know the fermionic realization of $P_{k,k+1}$ and the fermionic realization of the R-matrix $\hat{R}_{k,k+1}(\lambda)$ is
\be \lb{hatR}
    \hat{R}_{k,k+1}(\lambda) = \lambda P_{k,k+1} +\mathbb{I} = (\lambda+1) \mathbb{I} + \lambda(\bpsi_{k+1}\psi_k +\bpsi_k\psi_{k+1}-\bpsi_k\psi_k-\bpsi_{k+1}\psi_{k+1}+2\bpsi_k\psi_k\bpsi_{k+1}\psi_{k+1}).
\ee

The natural next step is to express the monodromy matrix \eqref{monod2} as the $2\times 2$ matrix in the auxiliary space $V_a=\mathbb{C}^2$. For this purpose we rewrite \eqref{monod2} as
\be \lb{monod3}
    T_a(\lambda) = \hat{R}_{a,1}(\lambda) X(\lambda) P_{a,1}
\ee
where the operator $X(\lambda)$
\be
    X(\lambda) = \hat{R}_{1,2}(\lambda) \dots \hat{R}_{L-1,L}(\lambda) P_{L-1,L} \dots P_{1,2}
\ee
acts nontrivially only in the quantum spaces $\mathscr{H} = h_1\otimes\dots\otimes h_L$ and is a scalar in the auxiliary space $V_a$. Moreover, we know, due to equations \eqref{perm1} and \eqref{hatR}, how to express $X(\lambda)$ in terms of fermions.

What remains is to express $\hat{R}_{a,1}$ and $P_{a,1}$ as the $2\times 2$ matrix in the auxiliary space $V_a$. The permutation matrix \eqref{perm} can be rewritten as
\begin{gather}
    P_{a,1} = \frac{1}{2} \Bigl( \mathbb{I}\otimes \mathbb{I} + \sigma^x \otimes \sigma^x + \sigma^y \otimes \sigma^y + \sigma^z \otimes \sigma^z \Bigr) = \nonumber \\
    = \frac{1}{2} \left[ \left( \begin{array}{cc} \mathbb{I} & 0\\ 0 & \mathbb{I} \\ \end{array} \right)  + \left( \begin{array}{cc} 0 & \sigma^x \\ \sigma^x & 0 \\ \end{array} \right) + \left( \begin{array}{cc} 0 & -i\sigma^y \\ i\sigma^y & 0 \\ \end{array} \right) + \left( \begin{array}{cc} \sigma^z & 0 \\ 0 & -\sigma^z \\ \end{array} \right) \right] = \nonumber\\
    = \left( \begin{array}{cc} \frac{1}{2}(\mathbb{I}+\sigma^z) & \frac{1}{2}(\sigma^x -i\sigma^y) \\ \frac{1}{2}(\sigma^x +i\sigma^y) & \frac{1}{2}(\mathbb{I}-\sigma^z) \\ \end{array} \right) =
\intertext{and using \eqref{fermions} and \eqref{ferm3} we get }
    = \left( \begin{array}{cc} \psi_1\bpsi_1 & \bpsi_1 \\ \psi_1 & \bpsi_1\psi_1  \end{array}\right) = \left( \begin{array}{cc} \mathbb{I}-N_1 & \bpsi_1 \\ \psi_1 & N_1  \end{array}\right) \lb{Pferm}
\end{gather}
where $N_1=\bpsi_1\psi_1$. For $\hat{R}_{a,1}(\lambda)$, we get
\be \lb{Rferm}
    \hat{R}_{a,1}(\lambda) = \mathbb{I}_{a,i} + \lambda P_{a,1} = \left(  \begin{array}{cc} (\lambda+1)\mathbb{I} -\lambda N_1 & \lambda\bpsi_1 \\ \lambda \psi_1 & \lambda N_1 +\mathbb{I} \end{array}\right).
\ee
Using \eqref{Pferm} and \eqref{Rferm}, the monodromy matrix \eqref{monod3} can be written in the following form:
\begin{gather}
    T_a(\lambda) = \left(\begin{array}{cc} (\lambda+1)\mathbb{I} -\lambda N_1 & \lambda\bpsi_1 \\ \lambda \psi_1 & \lambda N_1 +\mathbb{I} \end{array}\right) X(\lambda) \left( \begin{array}{cc} \mathbb{I}-N_1 & \bpsi_1 \\ \psi_1 & N_1  \end{array}\right)  = \left(
    \begin{array}{c c}
        A(\lambda) & B(\lambda) \\
        C(\lambda) & D(\lambda) \\
    \end{array}
\right)
%\\
%     = \left(
%   \begin{array}{c c}
%       (\lambda +1 -\lambda N_1)X(\lambda)(1 -N_1) +\lambda \bar{\psi}_1 X(\lambda)\psi_1 & (\lambda +1 -\lambda N_1) X(\lambda)\bar{\psi}_1 + \lambda \bar{\psi}_1 X(\lambda)N_1 \\
%       \lambda \psi_1 X(\lambda) (1-N_1) + (\lambda N_1 +1) X(\lambda) \psi_1 & \lambda \psi_1 %X(\lambda) \bar{\psi}_1 + (\lambda N_1 +1) X(\lambda) N_1 \\
%   \end{array} \right).
\end{gather}
where
\begin{align}
 A(\lambda) &= (\lambda +1 -\lambda N_1)X(\lambda)(1 -N_1) +\lambda \bar{\psi}_1 X(\lambda)\psi_1, \lb{Aferm}\\
 B(\lambda) &= (\lambda +1 -\lambda N_1) X(\lambda)\bar{\psi}_1 + \lambda \bar{\psi}_1 X(\lambda)N_1, \lb{Bferm}\\
 C(\lambda) &= \lambda \psi_1 X(\lambda) (1-N_1) + (\lambda N_1 +1) X(\lambda) \psi_1, \lb{Cferm}\\
 D(\lambda) &=  \lambda \psi_1 X(\lambda) \bar{\psi}_1 + (\lambda N_1 +1) X(\lambda) N_1.\lb{Dferm}
\end{align}

\section{Bethe vectors of XXX} \lb{sec:BetheXXX}
\setcounter{equation}0

The goal of our text is to find expression for the Bethe vectors \eqref{BetheVect}
\be
    \ket{\lambda_1,\dots,\lambda_M} = B(\lambda_1)\dots B(\lambda_M)\ket{0}.
\ee
For this purpose, the fermionic realization \eqref{Bferm} of the creation operator $B(\lambda)$ is very convenient. The operator $X(\lambda)=\hat{R}_{12}(\lambda)\dots\hat{R}_{L-1,L}(\lambda)P_{L-1,L}\dots P_{12}$ can be  written in terms of fermions due to equations \eqref{Pferm} and \eqref{Rferm}. From equation \eqref{vacuum}, our special representation, where $\ket{0}_k = \left( \begin{smallmatrix}1\\0\end{smallmatrix}\right)$, and the definition of free fermions \eqref{fermions} we can see that
\be
    \psi_k\ket{0} = 0
\ee
for all $k=1,\dots,L$.

If we were to write $B(\lambda)$ in the normal form, our work would be simple. Unfortunately, it seems a rather difficult task. Instead, we have to use the ``weak approach,'' i.e. to apply $B(\lambda)$ to the pseudovacuum $\ket{0}$ and try to commute the fermions $\bpsi_k$ to the left and see what happens.

The details of this section are postponed to Appendix \ref{app:XXXmag}. Here, we only write down the results.

We get the 1-magnon simply by application of \eqref{Bferm} to pseudovacuum \eqref{vacuum}
\begin{align}
B(\mu)\ket{0} &=  n(\mu)
\sum_{k=1}^{L} [\mu]^{k} \bar{\psi}_k \ket{0} \lb{1mag}
\intertext{where we use the concise notation}
[\mu] &=  \frac{\mu+1}{\mu},\quad\text{and}\quad n(\mu)=\frac{\mu^L}{\mu+1}. \lb{koef01}
\end{align}
The 2-magnon state is of the form
\begin{align} \lb{2mag}
B(\mu) B(\lambda) \ket{0} = n(\mu) n(\lambda)
 \sum\limits_{1\leq r<s\leq L} \Big( [\lambda]^{r} [\mu]^{s}  \frac{\lambda - \mu +1}{\lambda-\mu} +
 [\mu]^{r} [\lambda]^{s} \frac{\mu - \lambda + 1}{\mu-\lambda}
 \Big) \bar{\psi}_{r}\bar{\psi}_{s} \ket{0}
\end{align}
and the 3-magnon state is
\begin{align}
& B(\nu)B(\mu) B(\lambda) \ket{0} = n(\nu) n(\mu) n(\lambda) \times \notag\\
& \times  \sum_{1\leq q<r<s\leq L}  \sum_{\sigma\in S_3} \sigma \Bigl( [\nu]^q [\mu]^r[\lambda]^s
 \frac{\nu-\mu+1}{\nu-\mu}\cdot \frac{\nu-\lambda+1}{\nu-\lambda}\cdot \frac{\mu-\lambda+1}{\mu-\lambda}\Bigr) \bpsi_q \bpsi_r \bpsi_s \ket{0}. \lb{3mag}
\end{align}

From the results (\ref{1mag}), (\ref{2mag}) and \eqref{3mag} we can conjecture that the general $M$-magnon state is of the form
 \be
 \lb{Mmag1}
 \begin{array}{c}
 | \lambda_1 , \dots , \lambda_M \rangle \equiv B(\lambda_1) \cdots B(\lambda_M)\ket{0}
 =   \\ [0.3cm]
 = \Bigl( \prod\limits_{i=1}^M n(\lambda_i) \Bigr)
 \sum\limits_{1 \leq k_1 < ... <k_M \leq L} \;\;\;
 \sum\limits_{\sigma_\lambda \in S_M}  \sigma_\lambda \cdot \Bigl(
 \prod\limits_{i<j}^M \frac{\lambda_i - \lambda_j +1}{\lambda_i - \lambda_j}
 \prod\limits_{i=1}^M [\lambda_i]^{k_i} \Bigr) \bar{\psi}_{k_1} \cdots \bar{\psi}_{k_M} \ket{0} \equiv  \\ [0.3cm]
\equiv \Bigl( \prod\limits_{i=1}^M n(\lambda_i) \Bigr)
 \prod\limits_{i<j}^M \frac{1}{\lambda_i - \lambda_j}
 \sum\limits_{1 \leq k_1 < ... <k_M \leq L} \;\;\;
 \sum\limits_{\sigma_\lambda \in S_M}  (-1)^{p(\sigma_\lambda)}  \\ [0.3cm]
  \sigma_\lambda \cdot \Bigl(
 \prod\limits_{i<j}^M (\lambda_i - \lambda_j +1)
 \prod\limits_{i=1}^M [\lambda_i]^{k_i} \Bigr) \bar{\psi}_{k_1} \cdots \bar{\psi}_{k_M} \ket{0} \; ,
\end{array}
 \ee
 where $\sigma_\lambda$ is a permutation of the parameters $\{\lambda_1,\dots,\lambda_M \}$,
 $p(\sigma_\lambda)=0,1({\rm mod}2)$ is the parity of the permutation $\sigma_\lambda$ and
 $\sum\limits_{\sigma_\lambda \in S_M}$ is the sum over all such permutations. However, in the light of previous results this is no more a conjecture but a special representation of \eqref{Bethe2}.

\section{Fermionic realization of XXZ}
\setcounter{equation}0

Substituting \eqref{ferm1}-\eqref{ferm4} into (\ref{Rxxz}) gives a fermionic representation for the generators \eqref{Rxxz1} of the Hecke algebra
\begin{align}
\hat{R}^{(q)}_{k k+1}
% & = \bar{\psi}_{_{k+1}}  \psi_{_k} + \bar{\psi}_{_k}  \psi_{_{k+1}} +
% \frac{q+q^{-1}}{4} (\mathbb{I}-2 \bar{\psi}_{_k}  \psi_{_k}) (\mathbb{I}- 2 \bar{\psi}_{_{k+1}}   %\psi_{_{k+1}}) +\notag \\
%&\quad + \frac{q-q^{-1}}{2} (\bar{\psi}_{_{k+1}}  \psi_{_{k+1}} - \bar{\psi}_{_k}  \psi_{_k})
%+ \frac{3q-q^{-1}}{4} \mathbb{I} = \displaybreak[0]\notag\\
= \bar{\psi}_{k+1} \, \psi_k + \bar{\psi}_{k} \, \psi_{k+1} - q \, \bar{\psi}_k \, \psi_k
- q^{-1} \, \bar{\psi}_{k+1} \, \psi_{k+1}
 + (q+q^{-1}) \bar{\psi}_k \, \psi_k \, \bar{\psi}_{k+1} \, \psi_{k+1} + q \mathbb{I}.\lb{Rferm-00}
 \end{align}
In the following, we will use the baxterized R-matrix \eqref{Rbaxt} multiplied by $\mu^{1/2}$ for a simpler formula, which is of the form
\begin{multline}\lb{Rbaxtferm}
    \hat{R}_{k,k+1}(\mu) = (1-\mu) \Bigl[ \bar{\psi}_{k+1}  \psi_k + \bar{\psi}_{k}  \psi_{k+1} - q  \bar{\psi}_k  \psi_k - q^{-1}  \bar{\psi}_{k+1}  \psi_{k+1} + \\
 + (q+q^{-1}) \bar{\psi}_k  \psi_k  \bar{\psi}_{k+1}  \psi_{k+1} \Bigr] + (q-\mu q^{-1}) \mathbb{I}.
\end{multline}

We repeat the construction used in section \ref{sec:Fermi} with the R-matrix of the form
\eqref{Rferm-00} instead of \eqref{hatR} and the Yang-Baxter equation \eqref{YBEbaxt} instead of \eqref{YBE} resp. \eqref{YBEbraid}.

We recall the monodromy matrix of the form \eqref{monod2}. Again, we write it in the form \eqref{monod3}
\be
    T_a(\mu) = \hat{R}_{a,1}(\mu)\cdot \underbrace{\hat{R}_{12}(\mu)\dotsm \hat{R}_{L-1,L}(\mu) P_{L-1,L} \dotsm P_{12}}_{X(\mu)} \cdot P_{a,1}.
\ee
The fermionic representation of $X(\mu)$ is obtained by \eqref{Pferm} and \eqref{Rbaxtferm}.

As we have seen, we need to express the monodromy matrix \eqref{monod3} as a matrix in the auxiliary space $V_a$. The generator of the Hecke algebra \eqref{Rxxz} is of the form
\be
    \hat{R}_{a,1}^{(q)} = \begin{pmatrix}
    q-q^{-1} N_1 & \bpsi_1 \\
    \psi_1 & q N_1
    \end{pmatrix}.
\ee
Then \eqref{Rbaxt} is
\begin{gather}
    \hat{R}_{a,1}(\mu) =(1-\mu)\hat{R}_{a,1}^{(q)} + \mu(q-q^{-1})\mathbb{I} = \notag\\ =\begin{pmatrix}
    (q-\mu q^{-1})\mathbb{I} -(1-\mu)q^{-1} N_1 & (1-\mu)\bpsi_1 \\
        (1-\mu)\psi_1 & (1-\mu) q N_1 + \mu(q - q^{-1})
    \end{pmatrix}.\lb{Rbaxtferm1}
\end{gather}
The form of $P_{a,1}$ is known from \eqref{Pferm}.

Using \eqref{Pferm} and \eqref{Rbaxtferm1} we get for the matrix elements of $T_a(\mu)$
\be
    T_a(\mu) = \hat{R}_{a,1}(\mu) X(\mu) P_{a,1} = \begin{pmatrix} A(\mu)&B(\mu)\\C(\mu)&D(\mu) \end{pmatrix}
\ee
that
\begin{align}
    A(\mu) &= \Bigl[(q-\mu q^{-1})\mathbb{I} -(1-\mu)q^{-1} N_1 \Bigr] X(\mu) (\mathbb{I}-N_1) +(1-\mu)\bpsi_1 X(\mu) \psi_1, \\
    B(\mu) &= \Bigl[ (q-\mu q^{-1})\mathbb{I} -(1-\mu)q^{-1} N_1 \Bigr] X(\mu) \bpsi_1 +(1 - \mu)\bpsi_1 X(\mu) N_1, \lb{BfermXXZ}\\
    C(\mu) &= (1-\mu)\psi_1 X(\mu) (\mathbb{I}-N_1) + \Bigl[(1 - \mu) q N_1 + \mu(q - q^{-1}) \Bigr] X(\mu) \psi_1, \\
        D(\mu) &= (1-\mu)\psi_1 X(\mu) \bpsi_1 + \Bigl[(1 - \mu) q N_1 + \mu(q - q^{-1}) \Bigr] X(\mu) N_1.
\end{align}

 \section{Bethe vectors for the homogeneous XXZ model}
 \setcounter{equation}0

 As in section \ref{sec:BetheXXX}, we are interested in the Bethe vectors \eqref{BetheVect}
 \be
    \ket{\lambda_1,\dotsc,\lambda_M} = B(\lambda_1)\dotsm B(\lambda_M) \ket{0}
 \ee
 with the opeartor $B(\mu)$ of the form \eqref{BfermXXZ}. The details are postponed to Appendix \ref{app:XXZmag}.

 For the 1-magnon we get
 \be\lb{1magXXZ}
    \ket{\mu} \equiv B(\mu) \ket{0} =  n_q(\mu) \; \sum_{k=1}^{L} \; \bigl[ \mu \bigr]_q^{\; k} \;  \bpsi_k\ket{0},
 \ee
 where we introduce
 \be \lb{koef03}
    \bigl[ \mu \bigr]_q=  \frac{q-\mu q^{-1}}{1-\mu},
 \ee
 and the normalization
 \be\lb{koef04}
    n_q(\mu) = \frac{(q-q^{-1})(1-\mu)^L}{q-\mu q^{-1}}.
 \ee

 The 2-magnon state is obtained in the following form:
 \be \lb{2magXXZ}
    \ket{\lambda,\mu} \equiv B(\lambda) B(\mu) = n_q(\lambda) n_q(\mu) \sum_{1\leq r<s\leq L} \Bigl\{ \frac{\lambda q^{-1} -\mu q}{\lambda-\mu} [\lambda]_q^{\, r} [\mu]_q^{\, s}
     + \frac{\mu q^{-1} -\lambda q}{\mu-\lambda} [\mu]_q^{\, r} [\lambda]_q^{\, s}  \Bigr\} \bpsi_r\bpsi_s\ket{0}.
 \ee

 We can see that the situation is very similar to that in section \ref{sec:BetheXXX}. Again, we propose that  the general $M$-magnon state possess the form
 \begin{gather}\lb{MmagXXZ}
    \ket{\lambda_1,\dotsc,\lambda_M}=\prod_{l=1}^M n_q(\lambda_l)
    \!\!\!\!\! \sum_{1\leq k_1<\dotsb<k_M\leq L}
    \;\;\; \sum_{\sigma_\lambda \in S_M} \sigma_\lambda \Bigl( \prod_{i<j}^M \frac{\lambda_i q^{-1}-\lambda_j q}{\lambda_i - \lambda_j} \prod_{i=1}^M [\lambda_i]^{k_i}_q \Bigr) \bpsi_{k_1}\dotsm\bpsi_{k_M}\ket{0}
 \end{gather}
 where $S_M$ is the symmetric group of order $M$ and $\sigma_\lambda\in S_M$ permutes the parameters  $\{ \lambda_1,\dotsc,\lambda_M\}$. Again, this is just a special representation of \eqref{Bethe2}. In the next section we prove formula (\ref{MmagXXZ}) by using the coordinate Bethe ansatz.

%%%%%%%%%%%%%%%%%%%%%%%%%%%%%%%%%%%%%%%%%%%%%%%%%%%%%%%%%%%%%%%%%%%%%%%%
%%%%%%%%%%%%%%%%%%%%%%   coordinate Bethe ansatz  %%%%%%%%%%%%%%%%%%%%%%%
%%%%%%%%%%%%%%%%%%%%%%%%%%%%%%%%%%%%%%%%%%%%%%%%%%%%%%%%%%%%%%%%%%%%%%%%%

\section{Fermionic models and coordinate Bethe ansatz}
\setcounter{equation}0

In this Section we will use the coordinate Bethe ansatz method to construct
 Bethe vectors for the periodic chain models which are formulated in terms of free fermions.
 The coordinate Bethe ansatz method is named after the seminal work by Hans
Bethe \cite{Bethe}. Bethe found eigenfunctions and spectrum of the one-dimensional spin-1/2 isotropic
magnet (which we called above as XXX Heisenberg closed spin chain model).
 The review of the applications of the coordinate Bethe ansatz method can be found in the
  book \cite{Gaudin} (see also \cite{Low} and references therein).

\subsection{R--matrix, hamiltonian and a vacuum state}

Recall that the fermionic representation of
the Hecke algebra (\ref{rmatrR}) is based on the realization of the $R$-matrix in the form
$$
\wh{R}_{k,k+1}= \bar{\psi}_{k+1}\psi_k+\bar{\psi}_k\psi_{k+1}-
q\bar{\psi}_k\psi_k-q^{-1}\bar{\psi}_{k+1}\psi_{k+1}+
(q+q^{-1})\bar{\psi}_k\psi_k\bar{\psi}_{k+1}\psi_{k+1}+q  \; .
$$
Consider the hamiltonian for the periodic fermionic chain model
("small polaron model", see \cite{Korep} and references therein)
\begin{eqnarray*}
\MC{H}&=&{\tsum_{k=1}^{L-1}}\wh{R}_{k,k+1}+\wh{R}_{L,1}-qL=\\
&=&{\tsum_{k=1}^{L-1}}\Bigl(\bar{\psi}_{k+1}\psi_k+\bar{\psi}_k\psi_{k+1}-
q\bar{\psi}_k\psi_k-q^{-1}\bar{\psi}_{k+1}\psi_{k+1}+
(q+q^{-1})\bar{\psi}_k\psi_k\bar{\psi}_{k+1}\psi_{k+1}\Bigr)+\\
&&\hskip20mm+ \bar{\psi}_1\psi_L+\bar{\psi}_L\psi_1-
q\bar{\psi}_L\psi_L-q^{-1}\bar{\psi}_1\psi_1+
(q+q^{-1})\bar{\psi}_L\psi_L\bar{\psi}_1\psi_1=
\end{eqnarray*}
\be
\lb{Hferm}
\begin{array}{c}
={\tsum_{k=1}^{L-1}}\Bigl(\bar{\psi}_{k+1}\psi_k+\bar{\psi}_k\psi_{k+1}+
(q+q^{-1})\bar{\psi}_k\psi_k\bar{\psi}_{k+1}\psi_{k+1}\Bigr)+   \\
 + \bar{\psi}_1\psi_L+\bar{\psi}_L\psi_1+
(q+q^{-1})\bar{\psi}_L\psi_L\bar{\psi}_1\psi_1-
(q+q^{-1}){\tsum_{k=1}^L}\bar{\psi}_k\psi_k  \; .
\end{array}
\ee

This model is not coincident with the XXZ spin chain in view of the
representation of the matrix $\wh{R}_{L,1}$ given in (\ref{Rxxz})
in terms of fermions (\ref{fermions}). In the XXZ case the fermionic representation
of $\wh{R}_{L,1}$ is nonlocal.

The vacuum state $| 0 \rangle $ of the hamiltonian is defined by the equations
$\psi_k| 0 \rangle =0$ for $k=1,\,2,\,\ldots,\,L$.

\subsection{The 1-magnon states}

We look for the 1-magnon solution in the form
\begin{equation}
 \lb{1magn}
 |1 \rangle  = \sum_{n=1}^{L} c_n \, \bar{\psi}_{n} | 0 \rangle  \; .
\end{equation}
 Substitution of (\ref{Hferm}) and (\ref{1magn}) in the eigenvalue problem
 $\MC{H} |1 \rangle = E |1 \rangle$ gives the following
 equation for the coefficients $c_n$ (the 4-fermionic term in (\ref{Hferm}) does not contribute to
 the equations):
\begin{equation}
 \lb{solep2}
 c_{n-1} + c_{n+1} = (E+(q+q^{-1})) \, c_n \; , \;\;\; 1 \leq n \leq L \; ,
\end{equation}
 where $c_{n+L} = c_n$, i.e., $c_0 = c_L$ and $c_{L+1} = c_1$.
 Since equation (\ref{solep2}) is the discrete version of the ordinary
 differential equation of the second order with constant coefficients,
 one can solve (\ref{solep2}) if we insert $c_n = X^n$.
 As a result, we obtain the condition
 \begin{equation}
 \lb{Ex1}
E+(q+q^{-1})=X+X^{-1} \; ,
\end{equation}
 which is symmetric under the exchange $X \leftrightarrow X^{-1}$. Thus, the
 general solution of (\ref{solep2}) is
\begin{equation}
 \lb{Ex1a}
c_n=A_1X^n+A_2X^{-n} \; ,
\end{equation}
where arbitrary constants $A_1$, $A_2$ are independent of $n$. The boundary conditions
$c_k=c_{L+k}$ lead to the equation for $X$:
\begin{equation}
 \lb{solep5a}
  X^L =1  \; .
\end{equation}
However, in this case, we have $X^{-n}=X^{L-n}$, and linearly independent
solutions are
\begin{equation}
 \lb{solep7}
c_n=X^n \; ,\qquad\MR{where}\qquad X^L=1\,.
\end{equation}
Thus, to each solution $X=X_k$ of equation (\ref{solep5a})
 \begin{equation}
 \lb{M1a}
 X_k = \exp \Bigl( \frac{2\pi i k}{L} \Bigr) \;\;\;\;\; (k=0,\dots,L-1) \;
 \end{equation}
we have two one-magnon states (orthogonal to each other)
\begin{equation}
 \lb{M1}
|1\rangle_k ={\tsum_{n=1}^L}X^n_k \bar{\psi}_n|0\rangle \; , \;\;\;\;
|1\rangle'_k ={\tsum_{n=1}^L} X^{-n}_k \bar{\psi}_n|0\rangle
\end{equation}
with the same energy
 \be
 \lb{solep8}
 E = (q + q^{-1}) + (X_k + X_k^{-1})  \; .
 \ee
 On the other hand,
 %when $X$ runs all solutions (\ref{M1a}), then
 we have $X_k^{-1}=X_{L-k}$
 and the set of vectors $|1\rangle_{L-k}$ coincides with the set of vectors  $|1\rangle'_k$.
  All these solutions  correspond to the spectrum of free fermions.

\subsection{The 2--magnon states}

%  \marginpar{\bf Change notation here and below}

We write $|n_1,n_2\rangle=\bar{\psi}_{n_1}\bar{\psi}_{n_2}| 0
\rangle $, where $1\leq n_1<n_2\leq L$. It is easy to find that
the action of the hamiltonian on the vector $
|2\rangle={\tsum_{1\leq n_1<n_2\leq L}}
c_{n_1,n_2}|n_1,n_2\rangle$ is
\begin{eqnarray*}
\MC{H}|2\rangle&=& {\tsum_{1\leq n_1<n_2\leq L}}
\Bigl((1-\delta_{n_1,1})c_{n_1-1,n_2}+
(1-\delta_{n_1+1,n_2})\bigl(c_{n_1,n_2-1}+c_{n_1+1,n_2}\bigr)+\\
&&\hskip10mm+ (1-\delta_{n_2,L})c_{n_1,n_2+1}-
\delta_{n_1,1}\bigl(1-\delta_{n_2,L}\bigr)c_{n_2,L}-
\bigl(1-\delta_{n_1,1}\bigr)\delta_{n_2,L}c_{1,n_1}+\\
&&\hskip20mm+
(q+q^{-1})\bigl(\delta_{n_1+1,n_2}+\delta_{n_1,1}\delta_{n_2,L}-2\bigr)
c_{n_1,n_2}\Bigr)|n_1,n_2\rangle\,.
\end{eqnarray*}
Equation $\MC{H}|2\rangle=\MC{E}|2\rangle$ is then equivalent to
the system of equations
$$
\begin{array}{l}
(1-\delta_{n_1,1})c_{n_1-1,n_2}+
(1-\delta_{n_1+1,n_2})\bigl(c_{n_1,n_2-1}+c_{n_1+1,n_2}\bigr)+
(1-\delta_{n_2,L})c_{n_1,n_2+1}-\\[4pt]
\hskip10mm- \delta_{n_1,1}\bigl(1-\delta_{n_2,L}\bigr)c_{n_2,L}-
\bigl(1-\delta_{n_1,1}\bigr)\delta_{n_2,L}c_{1,n_1}=\\
\hskip20mm= \Bigl(\MC{E}+2(q+q^{-1})- (q+q^{-1})
\bigl(\delta_{n_1+1,n_2}+\delta_{n_1,1}\delta_{n_2,L}\bigr)\Bigr)
c_{n_1,n_2}
\end{array}
$$
for any $1\leq n_1<n_2\leq L$.

The coordinate Bethe ansatz is based on the idea to write
$$
\MC{E}+2(q+q^{-1})=X_1+X_1^{-1}+X_2+X_2^{-1}
$$
and to find solution of the system in the form
$$
c_{n_1,n_2}=A_{12}X_1^{n_1}X_2^{n_2}+A_{21}X_2^{n_1}X_1^{n_2}\,,
$$
where $A_{12}$ and $A_{21}$ are independent of $n_1$ and $n_2$, but
they can depend on $X_1$ and $X_2$.

Substituting this assumption into the equation we obtain
$$
\begin{array}{l}
\delta_{n_1+1,n_2}\Bigl(A_{12}\bigl(X_1X_2-(q+q^{-1})X_2+1\bigr)+
A_{21}\bigl(X_1X_2-(q+q^{-1})X_1+1\bigr)\Bigr)(X_1X_2)^n_1+\\[4pt]
+
\bigl(A_{12}+X_1^LA_{21}\bigr)
\bigl(\delta_{n_1,1}X_2^{n_2}+\delta_{n_2,L}X_1X_2^{n_1}\bigr)+
\bigl(A_{21}+X_2^LA_{12}\bigr)
\bigl(\delta_{n_1,1}X_1^{n_2}+\delta_{n_2,L}X_1^{n_1}X_2\bigr)=\\[4pt]
=\delta_{n_1,1}\delta_{n_2,L}
\Bigl(\bigl((X_1X_2)^L+X_1X_2\bigr)\bigl(A_{12}+A_{21}\bigr)+
(q+q^{-1})\bigl(A_{12}X_1X_2^L+A_{21}X_1^LX_2\bigr)\Bigr)\,.
\end{array}
$$
To fulfill these equations we put
$$
\begin{array}{c}
A_{12}\bigl(X_1X_2-(q+q^{-1})X_2+1\bigr)+
A_{21}\bigl(X_1X_2-(q+q^{-1})X_1+1\bigr)=0\,,\\[4pt]
A_{12}+X_1^LA_{21}=0\,,\qquad A_{21}+X_2^LA_{12}=0\,,
\end{array}
$$
or equivalently
\begin{eqnarray*}
\frac{A_{21}}{A_{12}}&=&-\frac{X_1X_2-(q+q^{-1})X_2+1}{X_1X_2-(q+q^{-1})X_1+1}\,,\\
X_1^L&=&\frac{X_1X_2-(q+q^{-1})X_1+1}{X_1X_2-(q+q^{-1})X_2+1}\,,\\
X_2^L&=&\frac{X_1X_2-(q+q^{-1})X_2+1}{X_1X_2-(q+q^{-1})X_1+1}\,.
\end{eqnarray*}

\subsection{The 3--magnon states}

For $1\leq n_1<n_2<n_3\leq L$ we put
$|n_1,n_2,n_3\rangle=\bar{\psi}_{n_1}\bar{\psi}_{n_2}\bar{\psi}_{n_3}|
0 \rangle $. The action of the ha\-mil\-to\-nian $\MC{H}$ on a
vector $|3\rangle={\tsum_{1\leq n_1<n_2<n_3\leq
L}}c_{n_1,n_2,n_3}|n_1,n_2,n_3\rangle$ is
\begin{eqnarray*}
\MC{H}|3\rangle&=& {\tsum_{1\leq n_1<n_2<n_3\leq L}}\Bigl(
\bigl(1-\delta_{n_1,1}\bigr)c_{n_1-1,n_2,n_3}+
\bigl(1-\delta_{n_1+1,n_2}\bigr)
\bigl(c_{n_1,n_2-1,n_3}+c_{n_1+1,n_2,n_3}\bigr)+\\
&&+ \bigl(1-\delta_{n_2+1,n_3}\bigr)
\bigl(c_{n_1,n_2,n_3-1}+c_{n_1,n_2+1,n_3}\bigr)+
\bigl(1-\delta_{n_3,L}\bigr)c_{n_1,n_2,n_3+1}+\\
&&+ \delta_{n_1,1}\bigl(1-\delta_{n_3,L}\bigr) c_{n_2,n_3,L}+
\delta_{n_3,L}\bigl(1-\delta_{n_1,1}\bigr)c_{1,n_1,n_2}+\\
&&+ (q+q^{-1})\bigl(\delta_{n_1+1,n_2}+\delta_{n_2+1,n_3}+
\delta_{n_1,1}\delta_{n_3,L}-3\bigr)
c_{n_1,n_2,n_3}\Bigr)|n_1,n_2,n_3\rangle\,.
\end{eqnarray*}
Equation $\MC{H}|3\rangle=\MC{E}|3\rangle$ is equivalent to the
system of equation
$$
\begin{array}{l}
\bigl(1-\delta_{n_1,1}\bigr)c_{n_1-1,n_2,n_3}+
\bigl(1-\delta_{n_1+1,n_2}\bigr)\bigl(c_{n_1,n_2-1,n_3}+c_{n_1+1,n_2,n_3}\bigr)+\\[4pt]
\hskip5mm+
\bigl(1-\delta_{n_2+1,n_3}\bigr)\bigl(c_{n_1,n_2,n_3-1}+c_{n_1,n_2+1,n_3}\bigr)+
\bigl(1-\delta_{n_3,L}\bigr)c_{n_1,n_2,n_3+1}+\\[4pt]
\hskip10mm+
\delta_{n_1,1}\bigl(1-\delta_{n_3,L}\bigr)c_{n_2,n_3,L}+
\delta_{n_3,L}\bigl(1-\delta_{n_1,1}\bigr)c_{1,n_1,n_2}=\\[4pt]
\hskip20mm= \Bigl(\MC{E}+3(q+q^{-1})-
(q+q^{-1})\bigl(\delta_{n_1+1,n_2}+
\delta_{n_2+1,n_3}+\delta_{n_1,1}\delta_{n_3,L}\bigr)\Bigr)c_{n_1,n_2,n_3}\,,
\end{array}
$$
where $1\leq n_1<n_2<n_3\leq L$. When we put
$$
\MC{E}+3(q+q^{-1})=X_1+X_1^{-1}+X_2+X_2^{-1}+X_3+X_3^{-1}
$$
and look for  solution of $c_{n_1,n_2,n_3}$ in the form
$$
c_{n_1,n_2,n_3}={\tsum_{\sigma\in S_3}}A_\sigma
X_{\sigma(1)}^{n_1}X_{\sigma(2)}^{n_2}X_{\sigma(3)}^{n_3}\,,
$$
we obtain the following system of the equations:
$$
\begin{array}{l}
\delta_{n_1+1,n_2}{\tsum_{\sigma\in S_3}}A_{\sigma}
\bigl(X_{\sigma(1)}X_{\sigma(2)}-(q+q^{-1})X_{\sigma(2)}+1\bigr)\;
\bigl(X_{\sigma(1)}X_{\sigma(2)}\bigr)^{n_1}X_{\sigma(3)}^{n_3}+\\[6pt]
+\delta_{n_2+1,n_3}{\tsum_{\sigma\in S_3}}A_{\sigma}
\bigl(X_{\sigma(2)}X_{\sigma(3)}-(q+q^{-1})X_{\sigma(3)}+1\bigr)\;
X_{\sigma(1)}^{n_1}\bigl(X_{\sigma(2)}X_{\sigma(3)}\bigr)^{n_2}+\\[6pt]
+\delta_{n_1,1}{\tsum_{\sigma\in S_3}}A_{\sigma}
\bigl(X_{\sigma(2)}^{n_2}X_{\sigma(3)}^{n_3}-
X_{\sigma(1)}^{n_2}X_{\sigma(2)}^{n_3}X_{\sigma(3)}^L\bigr)+\\[6pt]
+\delta_{n_3,L}{\tsum_{\sigma\in S_3}}A_{\sigma}
\bigl(X_{\sigma(1)}^{n_1}X_{\sigma(2)}^{n_2}X_{\sigma(3)}^{L+1}-
X_{\sigma(1)}X_{\sigma(2)}^{n_1}X_{\sigma(3)}^{n_2}\bigr)+\\
+\delta_{n_1,1}\delta_{n_3,L}{\tsum_{\sigma\in S_3}}A_{\sigma}
\bigl(X_{\sigma(1)}^{n_2}X_{\sigma(2)}^LX_{\sigma(3)}^L+
X_{\sigma(1)}X_{\sigma(2)}X_{\sigma(3)}^{n_2}-
(q+q^{-1})X_{\sigma(1)}X_{\sigma(2)}^{n_2}X_{\sigma(3)}^L\bigr)=0\,.
\end{array}
$$
Let $\pi_1$ be the transposition $1\leftrightarrow2$ and $\pi_2$
the transposition $2\leftrightarrow3$. To cancel the terms at
$\delta_{n_1+1,n_2}$ and $\delta_{n_2+1,n_3}$, it is sufficient for any
$\sigma\in S_3$ to put
$$
\begin{array}{l}
A_{\sigma}\bigl(X_{\sigma(1)}X_{\sigma(2)}-(q+q^{-1})X_{\sigma(2)}+1\bigr)+
A_{\sigma\circ\pi_1}\bigl(X_{\sigma(1)}X_{\sigma(2)}-(q+q^{-1})X_{\sigma(1)}+1\bigr)=0\,,\\[4pt]
A_{\sigma}\bigl(X_{\sigma(2)}X_{\sigma(3)}-(q+q^{-1})X_{\sigma(3)}+1\bigr)+
A_{\sigma\circ\pi_2}\bigl(X_{\sigma(2)}X_{\sigma(3)}-(q+q^{-1})X_{\sigma(2)}+1\bigr)=0\,.
\end{array}
$$
If we consider the element $\epsilon\in S_3$ defined by the
relations $\epsilon(1)=3$, $\epsilon(2)=1$, $\epsilon(3)=2$, we
obtain
$$
\begin{array}{l}
{\tsum_{\sigma\in S_3}}A_{\sigma}
\bigl(X_{\sigma(2)}^{n_2}X_{\sigma(3)}^{n_3}-
X_{\sigma(1)}^{n_2}X_{\sigma(2)}^{n_3}X_{\sigma(3)}^L\bigr)=
{\tsum_{\sigma\in S_3}}
\bigl(A_{\sigma}X_{\sigma(2)}^{n_2}X_{\sigma(3)}^{n_3}-
A_{\sigma}X_{\sigma\circ\epsilon(2)}^{n_2}
X_{\sigma\circ\epsilon(3)}^{n_3}X_{\sigma\circ\epsilon(1)}^L\bigr)=\\[6pt]
\hskip20mm= {\tsum_{\sigma\in S_3}}\bigl(A_\sigma-
A_{\sigma\circ\epsilon^{-1}}X_{\sigma(1)}^L\bigr)
X_{\sigma(2)}^{n_2}X_{\sigma(3)}^{n_3}\,.
\end{array}
$$
So we put for any $\sigma\in S_3$
$$
A_\sigma-A_{\sigma\circ\epsilon^{-1}}X_{\sigma(1)}^L=0\,,\quad\MR{i.e.}\quad
A_{\sigma\circ\epsilon}=A_{\sigma}X_{\sigma(3)}^L\,.
$$
It is easy to show that these three assumptions solve the whole system
for $c_{n_1,n_2,n_3}$.

We obtained for the $A_\sigma$ conditions
\begin{equation}
 \label{3-1}
\begin{array}{rcl}
A_{\sigma\circ\pi_1}&=&
-\dfrac{X_{\sigma(1)}X_{\sigma(2)}-(q+q^{-1})X_{\sigma(2)}+1}
{X_{\sigma(1)}X_{\sigma(2)}-(q+q^{-1})X_{\sigma(1)}+1}\,A_{\sigma}\,,\\[9pt]
A_{\sigma\circ\pi_2}&=&
-\dfrac{X_{\sigma(2)}X_{\sigma(3)}-(q+q^{-1})X_{\sigma(3)}+1}
{X_{\sigma(2)}X_{\sigma(3)}-(q+q^{-1})X_{\sigma(2)}+1}\,A_{\sigma}\,.
\end{array}
\end{equation}
It follows from these relations that for any $\sigma\in S_3$
$$
A_{(\sigma\circ\pi_1)\circ\pi_1}=
-\frac{X_{\sigma\circ\pi_1(1)}X_{\sigma\circ\pi_1(2)}-(q+q^{-1})X_{\sigma\circ\pi_1(2)}+1}
{X_{\sigma\circ\pi_1(1)}X_{\sigma\circ\pi_1(2)}-(q+q^{-1})X_{\sigma\circ\pi_1(1)}+1}\,
A_{\sigma\circ\pi_1}=A_{\sigma}
$$
holds. Similarly, we can show that
$A_{(\sigma\circ\pi_2)\circ\pi_2}=A_{\sigma}$.

Moreover, we have
\begin{eqnarray*}
A_{((\sigma\circ\pi_1)\circ\pi_2)\circ\pi_1}&=&
-\frac{X_{\sigma(2)}X_{\sigma(3)}-(q+q^{-1})A_{\sigma(3)}+1}
{X_{\sigma(2)}X_{\sigma(3)}-(q+q^{-1})A_{\sigma(2)}+1}\,
A_{(\sigma\circ\pi_1)\circ\pi_2}=\\
&=&\frac{X_{\sigma(2)}X_{\sigma(3)}-(q+q^{-1})A_{\sigma(3)}+1}
{X_{\sigma(2)}X_{\sigma(3)}-(q+q^{-1})A_{\sigma(2)}+1}
\frac{X_{\sigma(1)}X_{\sigma(3)}-(q+q^{-1})A_{\sigma(3)}+1}
{X_{\sigma(1)}X_{\sigma(3)}-(q+q^{-1})A_{\sigma(1)}+1}\,
A_{\sigma\circ\pi_1}=\\
&=&-\frac{X_{\sigma(2)}X_{\sigma(3)}-(q+q^{-1})A_{\sigma(3)}+1}
{X_{\sigma(2)}X_{\sigma(3)}-(q+q^{-1})A_{\sigma(2)}+1}\,
\frac{X_{\sigma(1)}X_{\sigma(3)}-(q+q^{-1})A_{\sigma(3)}+1}
{X_{\sigma(1)}X_{\sigma(3)}-(q+q^{-1})A_{\sigma(1)}+1}\times\\
&&\hskip30mm\times
\frac{X_{\sigma(1)}X_{\sigma(2)}-(q+q^{-1})X_{\sigma(2)}+1}
{X_{\sigma(1)}X_{\sigma(2)}-(q+q^{-1})X_{\sigma(1)}+1}\,A_{\sigma}\,,\\
A_{((\sigma\circ\pi_2)\circ\pi_1)\circ\pi_2}&=&
-\frac{X_{\sigma(1)}X_{\sigma(2)}-(q+q^{-1})A_{\sigma(2)}+1}
{X_{\sigma(1)}X_{\sigma(2)}-(q+q^{-1})A_{\sigma(1)}+1}\,
A_{(\sigma\circ\pi_2)\circ\pi_1}=\\
&=& \frac{X_{\sigma(1)}X_{\sigma(2)}-(q+q^{-1})A_{\sigma(2)}+1}
{X_{\sigma(1)}X_{\sigma(2)}-(q+q^{-1})A_{\sigma(1)}+1}
\frac{X_{\sigma(1)}X_{\sigma(3)}-(q+q^{-1})A_{\sigma(3)}+1}
{X_{\sigma(1)}X_{\sigma(3)}-(q+q^{-1})A_{\sigma(1)}+1}\,
A_{\sigma\circ\pi_2}=\\
&=& -\frac{X_{\sigma(1)}X_{\sigma(2)}-(q+q^{-1})A_{\sigma(2)}+1}
{X_{\sigma(1)}X_{\sigma(2)}-(q+q^{-1})A_{\sigma(1)}+1}\,
\frac{X_{\sigma(1)}X_{\sigma(3)}-(q+q^{-1})A_{\sigma(3)}+1}
{X_{\sigma(1)}X_{\sigma(3)}-(q+q^{-1})A_{\sigma(1)}+1}\times\\
&&\hskip30mm\times
\frac{X_{\sigma(2)}X_{\sigma(3)}-(q+q^{-1})X_{\sigma(3)}+1}
{X_{\sigma(2)}X_{\sigma(3)}-(q+q^{-1})X_{\sigma(2)}+1}\,A_{\sigma}\,.
\end{eqnarray*}
So for any $\sigma\in S_3$ the relation
$A_{((\sigma\circ\pi_1)\circ\pi_2)\circ\pi_1}=
A_{((\sigma\circ\pi_2)\circ\pi_1)\circ\pi_2}$ holds. Therefore,
$A_{\sigma}$ is really a function of the symmetric group $S_3$.

Since $\epsilon=\pi_2\circ\pi_1$, the equality
$A_{\sigma\circ\epsilon}=X_{\sigma(3)}^LA_\sigma$ leads to
the relation
\begin{eqnarray*}
A_{\sigma\circ\epsilon}&=&A_{(\sigma\circ\pi_2)\circ\pi_1}=
-\frac{X_{\sigma(1)}X_{\sigma(3)}-(q+q^{-1})X_{\sigma(3)}+1}
{X_{\sigma(1)}X_{\sigma(3)}-(q+q^{-1})X_{\sigma(1)}+1}\,A_{\sigma\circ\pi_2}=\\
&=& \frac{X_{\sigma(1)}X_{\sigma(3)}-(q+q^{-1})X_{\sigma(3)}+1}
{X_{\sigma(1)}X_{\sigma(3)}-(q+q^{-1})X_{\sigma(1)}+1}\,
\frac{X_{\sigma(2)}X_{\sigma(3)}-(q+q^{-1})X_{\sigma(3)}+1}
{X_{\sigma(2)}X_{\sigma(3)}-(q+q^{-1})X_{\sigma(2)}+1}\,A_{\sigma}=
X_{\sigma(3)}^LA_{\sigma}\,.
\end{eqnarray*}
So for any $\sigma\in S_3$ the relation
$$
X_{\sigma(3)}^L=
\frac{X_{\sigma(1)}X_{\sigma(3)}-(q+q^{-1})X_{\sigma(3)}+1}
{X_{\sigma(1)}X_{\sigma(3)}-(q+q^{-1})X_{\sigma(1)}+1}\,
\frac{X_{\sigma(2)}X_{\sigma(3)}-(q+q^{-1})X_{\sigma(3)}+1}
{X_{\sigma(2)}X_{\sigma(3)}-(q+q^{-1})X_{\sigma(2)}+1}
$$
has to hold. It is possible  to rewrite these relations in the form
\begin{equation}
 \label{3-4}
X_i^L=\prod_{k\neq i}
\frac{X_iX_k-(q+q^{-1})X_i+1}{X_iX_k-(q+q^{-1})X_k+1}\,.
\end{equation}

\subsection{The $M$--magnon states}

For $1\leq n_1<n_2<\ldots<n_{M-1}<n_M\leq L$ we denote
$$
|\vec{n}\rangle=|n_1,n_2,\ldots,n_M\rangle=
\bar{\psi}_{n_1}\bar{\psi}_{n_2}\ldots\bar{\psi}_{n_M}| 0 \rangle
$$
and take the vector
$$
|M\rangle={\tsum_{\vec{n}}}c_{\vec{n}}|\vec{n}\rangle=
{\tsum_{1\leq n_1<n_2<\ldots<n_M\leq L}}
c_{n_1,\ldots,n_M}|n_1,\ldots,n_M\rangle\,.
$$
When we write
$$
(\vec{n}\pm\vec{e}_k)=(n_1,\ldots,n_{k-1},n_k\pm1,n_{k+1},\ldots,n_M)\,,
$$
it is possible to show that
\begin{eqnarray*}
\MC{H}|M\rangle&=& \sum_{\vec{n}}\Bigl(
(1-\delta_{n_1,1})c_{\vec{n}-\vec{e}_1}+
{\tsum_{k=1}^{M-1}}(1-\delta_{n_k+1,n_{k+1}})
\bigl(c_{\vec{n}+\vec{e}_k}+c_{\vec{n}-\vec{e}_{k+1}}\bigr)+
(1-\delta_{n_M,L})c_{\vec{n}+\vec{e}_M}+\\
&&+
(-1)^{M-1}\delta_{n_1,1}(1-\delta_{n_M,L})c_{n_2,\ldots,n_M,L}+
(-1)^{M-1}\delta_{n_M,L}(1-\delta_{n_1,1})c_{1,n_1,\ldots,n_{M-1}}+\\
&&+
(q+q^{-1}){\tsum_{k=1}^{M-1}}\delta_{n_k+1,n_{k+1}}c_{\vec{n}}+
(q+q^{-1})\delta_{n_1,1}\delta_{n_M,L}c_{\vec{n}}-
M(q+q^{-1})c_{\vec{n}}\Bigr)|\vec{n}\rangle\,.
\end{eqnarray*}
Equation $\MC{H}|M\rangle=\MC{E}|M\rangle$ is then equivalent to
the system
$$
\begin{array}{l}
(1-\delta_{n_1,1})c_{\vec{n}-\vec{e}_1}+
{\tsum_{k=1}^{M-1}}(1-\delta_{n_k+1,n_{k+1}})
\bigl(c_{\vec{n}+\vec{e}_k}+c_{\vec{n}-\vec{e}_{k+1}}\bigr)+
(1-\delta_{n_M,L})c_{\vec{n}+\vec{e}_M}+\\[6pt]
\hskip10mm+
(-1)^{M-1}\delta_{n_1,1}(1-\delta_{n_M,L})c_{n_2,\ldots,n_M,L}+
(-1)^{M-1}\delta_{n_M,L}(1-\delta_{n_1,1})c_{1,n_1,\ldots,n_{M-1}}=\\[6pt]
\hskip20mm= \Bigl(\MC{E}+M(q+q^{-1})-
(q+q^{-1}){\tsum_{k=1}^{M-1}}\delta_{n_k+1,n_{k+1}}-
(q+q^{-1})\delta_{n_1,1}\delta_{n_M,L}\Bigr)c_{\vec{n}}\,.
\end{array}
$$
When we write the eigenvalue of the hamiltonian as
\begin{equation}
 \label{N-E}
\MC{E}={\tsum_{k=1}^M}\bigl(X_k+X_k^{-1}\bigr)-M(q+q^{-1}),
\end{equation}
look for the solution in the form
$$
c_{\vec{n}}={\tsum_{\sigma\in S_M}}A_{\sigma}
X_{\sigma(1)}^{n_1}X_{\sigma(2)}^{n_2}\ldots
X_{\sigma{M}}^{n_M}\, ,
$$
and substitute these assumptions into the system, we obtain
$$
\begin{array}{l}
{\tsum_{k=1}^{M-1}}\,\delta_{n_k+1,n_{k+1}}{\tsum_{\sigma\in S_M}}
A_{\sigma}\bigl(1+X_{\sigma(k)}X_{\sigma(k+1)}-
(q+q^{-1})X_{\sigma(k+1)}\bigr)\times\\[4pt]
\hskip70mm\times
X_{\sigma(1)}^{n_1}\ldots\bigl(X_{\sigma(k)}X_{\sigma(k+1)}\bigr)^{n_k}
\ldots X_{\sigma(M)}^{n_M}+\\[6pt]
+ \delta_{n_1,1}{\tsum_{\sigma\in S_M}}
A_{\sigma}\bigl(X_{\sigma(2)}^{n_2}\ldots X_{\sigma(M)}^{n_M}+
(-1)^MX_{\sigma(1)}^{n_2}X_{\sigma(2)}^{n_3}\ldots
X_{\sigma(M-1)}^{n_M}X_{\sigma(M)}^L\bigr)+\\[6pt]
+ \delta_{n_M,L}{\tsum_{\sigma\in S_M}}
A_{\sigma}\bigl(X_{\sigma(1)}^{n_1}\ldots
X_{\sigma(M-1)}^{n_M-1}X_{\sigma(M)}^{L+1}+
(-1)^MX_{\sigma(1)}X_{\sigma(2)}^{n_1}X_{\sigma(3)}^{n_2}\ldots
X_{\sigma(M)}^{n_{M-1}}\bigr)-\\[6pt]
-(-1)^M\delta_{n_1,1}\delta_{n_M,L}{\tsum_{\sigma\in S_M}}
A_{\sigma}\Bigl(X_{\sigma(1)}^{n_2}X_{\sigma(2)}^{n_3}\ldots
X_{\sigma(M-1)}^LX_{\sigma(M)}^L+
X_{\sigma(1)}X_{\sigma(2)}X_{\sigma(3)}^{n_2}\ldots
X_{\sigma(M)}^{n_{M-1}}+\\[4pt]
\hskip50mm+ (-1)^M(q+q^{-1})X_{\sigma(1)}X_{\sigma(2)}^{n_2}\ldots
X_{\sigma(M-1)}^{n_{M-1}}X_{\sigma(M)}^L\Bigr)=0\,.
\end{array}
$$
Let $\pi_k$, $k=1,\,\ldots,\,M-1$, be transpositions
$k\leftrightarrow k+1$. When the relation
\begin{equation}
 \label{N-1a}
\bigl(X_{\sigma(k)}X_{\sigma(k+1)}-(q+q^{-1})X_{\sigma(k+1)}+1\bigr)
A_{\sigma}+
\bigl(X_{\sigma(k)}X_{\sigma(k+1)}-(q+q^{-1})X_{\sigma(k)}+1\bigr)
A_{\sigma\circ\pi_k}=0\,,
\end{equation}
is true for any $\sigma\in S_M$ and $k=1,\,\ldots,\,M-1$, the
terms at $\delta_{n_k+1,n_{k+1}}$ vanish.

Let $\epsilon\in S_M$ be defined by the relations $\epsilon(k)=k-1$
for $k=2,\,\ldots,\,M$ and $\epsilon(1)=M$. If we require
\begin{equation}
 \label{N-2a}
A_{\sigma}+(-1)^MX_{\sigma(1)}^LA_{\sigma\circ\epsilon^{-1}}=0\,,
\qquad\MR{i.e.}\qquad
A_{\sigma\circ\epsilon}=(-1)^{M-1}X_{\sigma(M)}^LA_{\sigma}
\end{equation}
for any $\sigma\in S_M$, the terms at $\delta_{n_1,1}$
and $\delta_{n_M,L}$ are annulled.

Combining (\ref{N-1a}) and (\ref{N-2a}) we get
$$
\begin{array}{l}
{\tsum_{\sigma\in S_M}}A_{\sigma}
\Bigl(X_{\sigma(1)}^{n_2}X_{\sigma(2)}^{n_3}\ldots
X_{\sigma(M-1)}^LX_{\sigma(M)}^L+
X_{\sigma(1)}X_{\sigma(2)}X_{\sigma(3)}^{n_2}\ldots
X_{\sigma(M)}^{n_{M-1}}+\\[6pt]
\hskip30mm+ (-1)^M(q+q^{-1})X_{\sigma(1)}X_{\sigma(2)}^{n_2}\ldots
X_{\sigma(M-1)}^{n_{M-1}}X_{\sigma(M)}^L\Bigr)=\\[6pt]
\hskip10mm= {\tsum_{\sigma\in S_M}}
A_{\sigma}\Bigl(1+X_{\sigma(1)}X_{\sigma(2)}-(q+q^{-1})X_{\sigma(2)}\Bigr)
X_{\sigma(3)}^{n_2}X_{\sigma(4)}^{n_3}\ldots
X_{\sigma(M)}^{n_{M-1}}=0\,.
\end{array}
$$
Therefore, the assumptions (\ref{N-1a}) and (\ref{N-2a}) solve the
system for $c_{\vec{n}}$.

We rewrite relation (\ref{N-1a}) as
\begin{equation}
 \label{N-1}
A_{\sigma\circ\pi_k}=
-\frac{X_{\sigma(k)}X_{\sigma(k+1)}-(q+q^{-1})X_{\sigma(k+1)}+1}
{X_{\sigma(k)}X_{\sigma(k+1)}-(q+q^{-1})X_{\sigma(k)}+1}\,
A_{\sigma}\,.
\end{equation}
From this relation it is easy to show that for any $\sigma\in S_M$
and $k=1,\,\ldots,\,M-1$ the relations
$$
A_{(\sigma\circ\pi_k)\circ\pi_k}=A_{\sigma}\,,\qquad
A_{((\sigma\circ\pi_k)\circ\pi_{k+1})\circ\pi_k}=
A_{((\sigma\circ\pi_{k+1})\circ\pi_k)\circ\pi_{k+1}}\,.
$$
are valid. Therefore, $A_{\sigma}$ is really a function on symmetry
group $S_M$.

If we write
$\epsilon=\pi_{M-1}\circ\pi_{M-2}\circ\ldots\circ\pi_2\circ\pi_1$
and use (\ref{N-1}), it is possible to rewrite (\ref{N-2a}) as
\begin{eqnarray*}
A_{\sigma\circ\epsilon}&=&
-\frac{X_{\sigma(1)}X_{\sigma(M)}-(q+q^{-1})X_{\sigma(M)}+1}
{X_{\sigma(1)}X_{\sigma(M)}-(q+q^{-1})X_{\sigma(1)}+1}\,
A_{\sigma\circ\pi_{M-1}\circ\ldots\circ\pi_2}=\\
&=& (-1)^2\,
\frac{X_{\sigma(1)}X_{\sigma(M)}-(q+q^{-1})X_{\sigma(M)}+1}
{X_{\sigma(1)}X_{\sigma(M)}-(q+q^{-1})X_{\sigma(1)}+1}\,
\frac{X_{\sigma(2)}X_{\sigma(M)}-(q+q^{-1})X_{\sigma(M)}+1}
{X_{\sigma(2)}X_{\sigma(M)}-(q+q^{-1})X_{\sigma(2)}+1}\times\\
&&\hskip90mm\times
A_{\sigma\circ\pi_{M-1}\circ\ldots\circ\pi_3}=\\
&=& (-1)^{M-1}\prod_{k=1}^{M-1}
\frac{X_{\sigma(k)}X_{\sigma(M)}-(q+q^{-1})X_{\sigma(M)}+1}
{X_{\sigma(k)}X_{\sigma(M)}-(q+q^{-1})X_{\sigma(k)}+1}\,
A_{\sigma}= (-1)^{M-1}X_{\sigma(M)}^L\,A_{\sigma}\,.
\end{eqnarray*}
This implies that for any $i=1,\,2,\,\ldots,\,M$ the relation
\begin{equation}
 \label{N-2}
X_i^L=\prod_{k\neq i}
\frac{X_iX_k-(q+q^{-1})X_i+1}{X_iX_k-(q+q^{-1})X_k+1}\,.
\end{equation}
has to be true.

\subsection{Comparison with the standard XXZ model}

In the standard $XXZ$ model the eigenvalues of the hamiltonian are also
given by relation (\ref{N-E}). Moreover, relations (\ref{N-1})
are also of the same form, i.e., the relations
$$
A_{\sigma\circ\pi_k}=
-\frac{X_{\sigma(k)}X_{\sigma(k+1)}-(q+q^{-1})X_{\sigma(k+1)}+1}
{X_{\sigma(k)}X_{\sigma(k+1)}-(q+q^{-1})X_{\sigma(k)}+1}\,
A_{\sigma}
$$
are valid also for the XXZ model.

However, there is one important difference. In the relation corresponding to (\ref{N-2a}) the multiplier $(-1)^{M-1}$ is missing, i.e.
for XXZ, the relation
$$
A_{\sigma\circ\epsilon}=X_{\sigma(M)}^LA_{\sigma}\,.
$$
is valid. Therefore, we obtain in the XXZ model the relation
 \be
 \lb{N-3}
X_i^L=(-1)^{M-1}\prod_{k\neq i}
\frac{X_iX_k-(q+q^{-1})X_i+1}{X_iX_k-(q+q^{-1})X_k+1}\,
 \ee
 instead of
(\ref{N-2}). Comparing (\ref{N-2}) and (\ref{N-3}), we conclude that
 the spectrum of the fermion (soft polaron) and the standard $XXZ$ model are the
same for odd $M$, but if $M$ is even the spectrum of these models
can be different.

 The Bethe equations (\ref{N-3}) are equivalent to the Bethe equations (\ref{BAExxz})
 if we substitute
 $$
 X_k = \frac{q - q^{-1} \, \lambda_k}{1- \lambda_k}  \; .
 $$
 For this substitution, the right-hand side of (\ref{N-3}) is simplified and
 coincides with the right hand-side of (\ref{BAExxz}).

%%%%%   Krivonos

\section{Some remarks on the open Hecke chain}
\setcounter{equation}0

Let $H_n(q)$ be the Hecke algebra generated by the invertible
elements $T_k$ $(k=1,\dots,n-1)$ subject to relations (\ref{braidg}) and (\ref{ahecke}).
For future convenience, instead of (\ref{intrH2}), we will consider the following form of the Hamiltonian:
\be
\lb{free}
{\cal H}_n   = \sum_{k=1}^{n-1} T_k - \frac{(n-1)}{2} (q-q^{-1}) = \sum_{k=1}^{n-1} s_k
  \; \in \; H_{n}(q)  \; ,
\ee
where we have introduced the new generators of the $A$-type Hecke algebra $H_n(q)$
\be
\lb{new1}
s_k = T_k - \frac{q-q^{-1}}{2} = \frac{i}{2} \; T_k(x)|_{x^{1/2}=i} \; ,
\ee
and $T_k(x)|_{x^{1/2}=i}$ are the baxterized elements (\ref{baxtH}) taken at the point $x^{1/2}=i$.

\vspace{0.3cm}
\noindent
{\bf Remark 1.} The representation theory of the Hecke algebras $H_{n}(q)$ is well known.
For the details of this representation theory see, e.g.,
\cite{Jon1,Wenzl,Mur,Cher5,Isa1,IsOgH,OgPya,IsMolOs,GD,LLT} and references therein.
Each irreducible representation (irrep) of the Hecke algebra $H_{n}(q)$ ($q$ is a generic parameter)
corresponds to the
Young diagram $\Lambda$ with
$n$ nodes. The dimension of the irrep $\Lambda$ is given by the hook formula
(see, e.g., \cite{BSag} and \cite{OgPya})
\be
\lb{dime}
{\rm dim}(\Lambda) = \frac{n!}{\prod_{\alpha \in \Lambda} h_\alpha} \;\; ,
\ee
where $h_\alpha$ is a hook length of the nod $\alpha \in \Lambda$.
Recall, that the Young diagram $\Lambda$
with $m$ rows of the lengths $(\lambda_1,\lambda_2,\dots,\lambda_m)$
$$
\lambda_1 \geq \lambda_2 \geq \cdots \geq \lambda_m \; , \;\;\;\; \sum_{k=1}^m \lambda_k =n \; ,
$$
is called dual to the diagram $\Lambda'$ if $(\lambda_1,\lambda_2,\dots,\lambda_m)$
are the lengths of the columns of $\Lambda'$. It is clear that ${\rm dim}(\Lambda) = {\rm dim}(\Lambda')$.

\vspace{0.3cm}
\noindent
{\bf Remark 2} (see, e.g., \cite{Isa07}). Consider the element $j$ of the braid group ${\cal B}_{n}$
 \be
\lb{longj}
j := (T_1  \cdots T_{n-1}) (T_1 \cdots T_{n-2}) \cdots
(T_1 T_2) \cdot T_1 \; ,
 \ee
such that $T_i \cdot j = j \cdot T_{n-i}$ $(\forall \; i)$. It means that
$j$ commutes with the element
$(\sum\limits_{i=1}^{n-1} T_i)$ of the group algebra of ${\cal B}_{n}$.
Thus, for the Hecke quotient $H_{n}(q)$
of group algebra of ${\cal B}_{n}$,
the longest element $j \in H_{n}(q)$ commutes with
the Hamiltonian ${\cal H}_{n}$
(\ref{free}) and is a conservation charge
for the model of the open Hecke chain. 

\vspace{0.3cm}

The quantum integrable systems with the Hamiltonians (\ref{free}) were considered in
\cite{Isa07}, \cite{IsKir}. In next Subsection, we list the
characteristic identities for the Hamiltonians ${\cal H}_n$ in the cases
$n=2,\dots,6$. These identities define the whole energy spectrum of the
Hecke chains of the length $n=2,\dots,6$.

\subsection{Characteristic identities for ${\cal H}_n$}

Here, we use the notation
$$
\lambda=q-q^{-1} \; , \;\;\; \bar{q}= q+q^{-1} \; , \;\;\; v = \frac{1}{2} (q+q^{-1}) \; .
$$

\noindent{\bf 1. The case $n=2$.} The characteristic identity for the Hamiltonian
${\cal H}_2 = T_1 - \frac{\lambda}{2}$ is
$$
\begin{array}{c}
({\cal H}_2 - \frac{1}{2} \overline{q})({\cal H}_2 + \frac{1}{2} \overline{q}) = 0 \; .
\end{array}
$$
Two eigenvalues $ \frac{1}{2} \overline{q}$ and $ -\frac{1}{2} \overline{q}$
correspond to the 1-dimensional irreps $T_1=q$ and $T_1=-q^{-1}$ labeled, respectively,  by the Young diagrams
$(2)$ and $(1^2)$.

\noindent{\bf 2. The case $n=3$.} Here, we have the set of commuting elements \cite{Isa07}
\be
\lb{bethe6}
j_1=T_1 + T_2 \; , \;\;\; j_2=T_{1} T_{2} + T_{2} T_{1} \; , \;\;\;
j_3=T_{2}\, T_{1} \, T_{2} + T_1 + T_2 \; .
\ee
Note that the elements $j_2,j_3$ are expressed in terms of $j_1$:
$$
j_2= j_1^2 -\lambda j_1 -2 \; , \;\;\; j_3 = \frac{1}{2} (j_1^3-2 \lambda j_1^2 +(\lambda^2-1) j_1
+ 2 \lambda) \; .
$$
The element ${\cal H}_3=j_1-\lambda$ is the Hamiltonian (\ref{free}) for the open Hecke chain and
$j_3$ is a central element in $H_3$.
The characteristic identity for the Hamiltonian ${\cal H}_3$ is:
\be
\lb{char3}
({\cal H}_3 + \overline{q} )({\cal H}_3 - \overline{q} )({\cal H}_3  -1)({\cal H}_3 +1)=0 \; .
\ee
This means that ${\rm Spec}({\cal H}_3) = \{ \pm\overline{q}, \pm 1 \}$. The first two
eigenvalues $\pm \overline{q}$ correspond to the one dimensional representations
$T_i = \pm q^{\pm 1}$ $(i=1,2)$ of $H_3(q)$, which are related to the Young diagrams
$(3)$, $(1^3)$. The eigenvalues
$(\pm 1)$ correspond to the 2-dimensional irrep
$(2,1)$ of $H_3(q)$. \\

\noindent{\bf 3. The case $n=4$.} In this case we have the following set of commuting
elements
$$
\begin{array}{c}
j_1 = \sum_{i=1}^3 T_i  \; , \;\;\;
j_2 = \{ T_1, \, T_2  \}_{+} +
\{ T_2, \, T_3 \}_{+} + 2 T_3 T_1 \; , \\ \\
j_3 =   \{ T_1 T_3 , \, T_2  \}_{+} +
 (T_1 + T_3) T_2  (T_1 + T_3) + \lambda T_3 T_1
+ 2 \sum_{i=1}^3 T_i  \; , \\ \\
j_4 = \{ T_2 T_3  T_2 , \, T_1 \}_{+} +
\{ T_2 T_1  T_2 , \, T_3 \}_{+} +
\{ T_2, \, T_3 \}_{+} +  \{ T_2 , \, T_1 \}_{+}  \; , \\ \\
j_5 = T_1 T_2 T_3  T_2 T_1
+ T_2 T_3  T_2 + T_1 T_2  T_1 + T_1 + T_2 + T_3 \; , \\ \\
\end{array}
$$
The element $j_5$ is a central element in $H_{4}(q)$.
The longest element $j= T_1 T_2 T_3  T_1 T_2 T_1$ in $H_4(q)$ (cf. (\ref{longj})) is 
$j= (j_5-j_1)(j_1-\lambda)-j_4$ and therefore 
commutes with the Hamiltonian ${\cal H}_{4}=j_1-\frac{3}{2}\lambda$ (\ref{free}). This Hamiltonian
satisfies the characteristic identity
\be
\lb{char4}
\begin{array}{c}
({\cal H}_4 + \frac{3}{2}\, \overline{q})\cdot ({\cal H}_4 - \frac{3}{2}\, \overline{q}) \cdot
({\cal H}_4 + \frac{1}{2}\, \overline{q}) \left( ({\cal H}_4  + \frac{1}{2}\, \overline{q})^2 - 2 \right)
 \cdot \\ [0.2cm]
\cdot ({\cal H}_4 - \frac{1}{2}\, \overline{q})
\left( ({\cal H}_4 -  \frac{1}{2}\, \overline{q})^2 -2 \right)
\cdot \left( {\cal H}_4^2 -\frac{1}{4}\overline{q}^{2}-2 \right)=0 \; .
\end{array}
\ee
 %where $\overline{q} = q+q^{-1}$.
Thus, the spectrum of ${\cal H}_{4}$ consists of the eigenvalues: \\
$(\frac{3}{2}\, \overline{q}, \, -\frac{3}{2}\, \overline{q})$ for the two dual 1-dim. irreps
$(4)$, $(1^4)$ of the Hecke algebra $H_4(q)$; \\
$(\frac{1}{2}\, \overline{q}, \frac{1}{2}\, \overline{q} \pm \sqrt{2})$ and
$(-\frac{1}{2}\, \overline{q}, -\frac{1}{2}\, \overline{q} \pm \sqrt{2})$ for the two dual 3-dim. irreps
$(3,1)$, $(2,1^2)$; \\
$(\pm \frac{1}{2} \sqrt{\overline{q}{}^2+8})$ for the 2-dim. irrep
$(2^2)$ of $H_4(q)$. \\

\noindent{\bf 4. The case $n=5$.} For the Hamiltonian ${\cal H}_{5}=\sum\limits_{i=1}^4
T_i - 2\lambda$ the characteristic polynomial is an odd function of order $25$,
and the characteristic identity is
%\be\lb{char5}\begin{array}{c}
 %({\cal H}_5 + 2\, \overline{q})({\cal H}_5 - 2 \, \overline{q}) \, {\cal H}_5 \,
%({\cal H}_5^2 -1)({\cal H}_5^2 -5)
%({\cal H}_5^2 + \overline{q}{\cal H}_5-1)({\cal H}_5^2 - \overline{q}{\cal H}_5-1) \cdot \\ [0.2cm]
%\cdot  \left( {\cal H}_5^2  - (2 \, \overline{q} +1){\cal H}_5 + (\overline{q}^2+\overline{q}-1) \right)
%\left( {\cal H}_5^2  + (2 \, \overline{q} + 1){\cal H}_5 + (\overline{q}^2+\overline{q}-1) \right) \cdot \\
%\cdot  \left( {\cal H}_5^2  - (2 \, \overline{q} -1){\cal H}_5 + (\overline{q}^2-\overline{q}-1 ) \right)
%\left( {\cal H}_5^2  + (2 \, \overline{q} - 1){\cal H}_5 + (\overline{q}^2-\overline{q}-1) \right) \cdot \\
%\cdot \left( {\cal H}_5^3  -  \overline{q}  {\cal H}_5^2 -5 {\cal H}_5 +2 \overline{q} \right)
%\left( {\cal H}_5^3  + \overline{q}  {\cal H}_5^2 -5 {\cal H}_5 -2 \overline{q} \right)  =0 \; .
%\end{array}
%\ee
%This identity can be written in the form
\be
\lb{char5a}
\begin{array}{lc}
(1^5)\cdot (5): & ({\cal H}_5 + 2\, \overline{q})\cdot ({\cal H}_5 - 2 \, \overline{q}) \cdot \\ [0.2cm]
(3,1^2): & {\cal H}_5 \, ({\cal H}_5^2 -1)({\cal H}_5^2 -5)
 \cdot \\ [0.2cm]
(2,1^3): & \left( ({\cal H}_5+\overline{q})^2  - \frac{(\sqrt{5}-1)^2}{4} \right)
\left( ({\cal H}_5+\overline{q})^2  - \frac{(\sqrt{5}+1)^2}{4} \right) \cdot \\ [0.2cm]
(4,1): & \left( ({\cal H}_5-\overline{q})^2  - \frac{(\sqrt{5}-1)^2}{4} \right)
\left( ({\cal H}_5-\overline{q})^2  - \frac{(\sqrt{5}+1)^2}{4} \right) \cdot \\ [0.2cm]
 (2^2,1): & ({\cal H}_5^2 + \overline{q}{\cal H}_5-1)
 \left( {\cal H}_5^3  + \overline{q}  {\cal H}_5^2 -5 {\cal H}_5 -2 \overline{q} \right) \cdot \\ [0.2cm]
(3,2): & ({\cal H}_5^2 - \overline{q}{\cal H}_5-1)
\left( {\cal H}_5^3  -  \overline{q}  {\cal H}_5^2 -5 {\cal H}_5 +2 \overline{q} \right)
  =0 \; .
\end{array}
\ee
The last two lines give the eigenvalues of ${\cal H}_5$, which correspond to the 5 dimensional
representations labeled by two dual Young diagrams $(2^2,1)$, $(3,2)$. The sum of the dimensions
of the irreps for $H_5(q)$ is equal to 26. We obtain the 25-th order of the characteristic identity since
the eigenvalue $0$ has multiplicity 2. This eigenvalue appears in the self-dual irrep $(3,1^2)$.

\noindent
{\bf 5. The case $n=6$.} For the Hamiltonian ${\cal H}_{6}=\sum\limits_{i=1}^5
T_i - \frac{5}{2}\lambda$ the characteristic polynomial is an even function of
${\cal H}_{6}$ of order $72$. The characteristic polynomial is much more complicated:
\be
\lb{char6}
\begin{array}{lc}
(1^6)\cdot(6): &
({\cal H}_6 + \frac{5}{2}\, \overline{q})\cdot ({\cal H}_6 - \frac{5}{2} \, \overline{q}) \cdot  \\ [0.2cm]
(2,1^4): & ({\cal H}_6+\frac{3}{2}\overline{q}) \left( ({\cal H}_6+\frac{3}{2}\overline{q})^2  - 1 \right)
\left( ({\cal H}_6+\frac{3}{2}\overline{q})^2  - 3 \right)\cdot
 \\ [0.2cm]
(5,1): &  ({\cal H}_6-\frac{3}{2}\overline{q}) \left( ({\cal H}_6-\frac{3}{2}\overline{q})^2  - 1 \right)
 \left( ({\cal H}_6-\frac{3}{2}\overline{q})^2  - 3 \right)\cdot  \\ [0.2cm]
(3,1^3): &
\begin{array}{c}
 ({\cal H}_6+\frac{1}{2}\overline{q}) \left( ({\cal H}_6+\frac{1}{2}\overline{q})^2  - 1 \right)
\left( ({\cal H}_6+\frac{1}{2}\overline{q})^2  - 3 \right)  \\ [0.1cm]
 \left( ({\cal H}_6+\frac{1}{2}\overline{q})^2  - (\sqrt{3}+1)^2 \right)
\left( ({\cal H}_6+\frac{1}{2}\overline{q})^2  - (\sqrt{3}-1)^2 \right)
\cdot
\end{array}
 \\ [0.5cm]
(4,1^2): &
\begin{array}{c}
({\cal H}_6-\frac{1}{2}\overline{q}) \left( ({\cal H}_6-\frac{1}{2}\overline{q})^2  - 1 \right)
 \left( ({\cal H}_6-\frac{1}{2}\overline{q})^2  - 3 \right)  \\ [0.1cm]
 \left( ({\cal H}_6-\frac{1}{2}\overline{q})^2  - (\sqrt{3}+1)^2 \right)
\left( ({\cal H}_6-\frac{1}{2}\overline{q})^2  - (\sqrt{3}-1)^2 \right) \cdot
\end{array}
\end{array}
\ee
\be
\lb{char6b}
\begin{array}{l}
  (4,2): \;\; \left[ \; \left( 3\overline{q}^3-20\overline{q}  + (24-14 \overline{q}^2) {\cal H}_6  +
20 \overline{q} {\cal H}_6^2  - 8 {\cal H}_6^3 \right) \right. \cdot \\ [0.2cm]
  \;\;\;\;\;\; \left\{ 9 \overline{q}^6-228 \overline{q}^4+512 \overline{q}^2-256  -
(1408 \overline{q} - 1280 \overline{q}^3+84 \overline{q}^5) {\cal H}_6  + \right. \\ [0.1cm]
  \;\;\; + (768 - 2528 \overline{q}^2+316 \overline{q}^4) {\cal H}_6^2
 + (2048\overline{q} -608 \overline{q}^3)  {\cal H}_6^3  + \\ [0.1cm]
  \;\;\; \left. + (624 \overline{q}^2 -576 ) {\cal H}_6^4 - 320 \overline{q} {\cal H}_6^5
+ 64 {\cal H}_6^6 \right\} \cdot \\ [0.4cm]
 (2^3): \;\;  \left(
3\overline{q}^4+16\overline{q}^2-64  + (8\overline{q}^3+160\overline{q}) {\cal H}_6  +
(-8\overline{q}^2+128) {\cal H}_6^2  - 32 \overline{q}  {\cal H}_6^3  - 16 {\cal H}_6^4 \right)
\cdot \\ [0.4cm]
(3,2,1):  \;  \left( \overline{q}^8+16 \overline{q}^4-256 \overline{q}^2-256  +
(2560+ 1024 \overline{q}^2+64 \overline{q}^4) {\cal H}_6  - \right. \\ [0.1cm]
  \;\;\; -(5120 + 1152 \overline{q}^2+192 \overline{q}^4+16\overline{q}^6) {\cal H}_6^2
 - (2048+512\overline{q}^2)  {\cal H}_6^3  + \\ [0.1cm]
   \;\;\; \left.\left.
+(8448 +1536\overline{q}^2+96 \overline{q}^4) {\cal H}_6^4 + 1024 {\cal H}_6^5
-(3072+256\overline{q}^2) {\cal H}_6^6 + 256 {\cal H}_6^8 \right) \; \right] \cdot \\ [0.4cm]
\;\;\;  \cdot \left[ {\cal H}_6 \to - {\cal H}_6  \right] \; ,
\end{array}
\ee
where in the left-hand-side we indicate the corresponding representations which have dimensions
$$
{\rm dim}(6)= {\rm dim}(1^6)=1, \; {\rm dim}(3,1^3)= {\rm dim}(4,1^2)= 10, \; {\rm dim}(5,1)
= {\rm dim}(2,1^4) =5 \, .
$$
The factors in
(\ref{char6b}) correspond to the representation
$(2^3)$  with dim =5, the representation $(3,2,1)$  with dim=16, the representation
$(4,2)$ with dim=9 and their dual irreps which can be obtained from the previous ones by substitution
${\cal H}_6 \to - {\cal H}_6$.
The sum of the dimensions (\ref{dime}) for all these representations of
$H_6$ is equal to 76. Since the order of the characteristic polynomial is equal to $72$, we conclude
that some of these eigenvalues are degenerated. Two of such eigenvalues appear in the dual hook-type
irreps $(3,1^3)$, $(4,1^2)$ and other two appear in the dual nontrivial irreps
$(3,3)$, $(2^3)$ (see below). It is clear that the degenerated eigenvalues are
$\pm \frac{1}{2}\overline{q}$ (with the multiplicities 3).
The important problem is to find an additional operator $j_k$ which commutes with the
Hamiltonian ${\cal H}_6$ and removes this degeneracy.

{\bf Remark.} The factor which corresponds to the self-dual representation
$(3,2,1)$ with dim$=8$ presented in (\ref{char6b}), can also be written in the concise form
(we remove the common factor $2^8$)
%%%     \marginpar{\bf \tiny Mozhno li zapisat' eto compactnee?}
\be
\lb{char6c}
\begin{array}{c}
 %\frac{1}{16}
\left\{ v^8+v^4-4 v^2-1+10 x+16 x v^2+4 x v^4-20 x^2-18 x^2 v^2-12 x^2 v^4  \right. \\
\left. -4 x^2 v^6-8 x^3-8 x^3 v^2+33 x^4+24 x^4 v^2+6 x^4 v^4+4 x^5-12 x^6-4 x^6 v^2+x^8 \right\} =
\\ [0.2cm]
 %= Z^4 Y^4 -Z^2 Y^2(12 x^2 - 4 x -1) + Z Y (16 x^2 -16 x+4) - \\ -1 +10 x -24 x^2 +8 x^3 +16 x^4 =
 %\\ [0.2cm]
=  Z^4 Y^4 -Z^2 Y^2(6x+1)(2x-1) + \left[ 4 Z Y  + (4 x^2+6 x-1)\right] (2x-1)^2 \; ,
\end{array}
\ee
where $x = {\cal H}_6$, $v=\frac{\overline{q}}{2}$, $Z=x+v$, $Y=x-v$.

\subsection{Characteristic polynomials for ${\cal H}_n$ in the representations $(n-2,2)$}

Now we impose additional relations on the generators $T_k$ of the Hecke algebra (\ref{braidg}), (\ref{ahecke}):
 \be
\lb{tlq}
T_k(q^2) \, T_{k - 1}(q^4) \, T_k(q^2) = 0  \; , \;\;\;
T_k(q^2) \, T_{k + 1}(q^4) \, T_k(q^2) = 0  \; ,
\ee
where $T_k(x)$ are the baxterized elements (\ref{baxtH}).
The factor of the Hecke algebra
over  the relations (\ref{tlq}) is called the Temperley-Lieb algebra $TL_n$. It is known that
all irreps of the algebra $TL_n$ coincide with irreps $\rho_{(n-k,k)}$
(here $n \geq 2k$) of the Hecke algebra  numerated by the Young diagrams
$(n-k,k)$ with only two rows. The spectrum of all Hamiltonians $\rho_{(n-k,k)}({\cal H}_n)$
$(k=0,1,\dots,[\frac{n}{2}])$
gives the energy spectrum of the $XXZ$ Heisenbegr spin chain of the length $n$
(see, e.g., \cite{Kuli} and references therein).
In this subsection we present the characteristic polynomials for the Hamiltonians $\rho_{(n-2,2)}({\cal H}_n)$.
The
dimensions of the representations $(n-2,2)$ are dim$\rho_{(n-2,2)}=\frac{n(n-3)}{2}$.
Our method of the calculation is the following. We construct
explicitly matrix representations of $\rho_{(n-k,k)}({\cal H}_n)$  (see the next subsection) and then
use the Mathematica to evaluate the characteristic polynomials of $\rho_{(n-k,k)}({\cal H}_n)$.

\noindent
{\bf 1. The case $n=4$ and representation $(2,2)$ with dim$=2$.} The Hamiltonian ${\cal H}_4=x$
(see (\ref{free})) has the
characteristic polynomial (see (\ref{char4}))
\be
\lb{2-2}
\left( x^2 -v^{2}-2 \right) = Y Z -2  \; ,
\ee
where $Z=x+v$, $Y=x-v$.

\noindent
{\bf 2. The case $n=5$ and representation $(3,2)$ with dim$=5$.} The Hamiltonian ${\cal H}_5=x$ has the
characteristic polynomial as a product of two factors of orders $2$ and $3$ (see (\ref{char5a}))
\be
\lb{3-2}
\begin{array}{c}
(x^2 - \overline{q}x-1)
\left( x^3  -  \overline{q}  x^2 -5 x +2 \overline{q} \right)  = \\
  = (Y Z -1)  \cdot \left\{ Y  Z^2 - (2 Y + 3 Z) \right\}\; .
  \end{array}
\ee
where $Z=x$, $Y=x-2v$.

\noindent
{\bf 3. The case $n=6$ and representation $(4,2)$ with dim$=9$.} The Hamiltonian ${\cal H}_6=x$ has the
characteristic polynomial (see (\ref{char6b})) which is factorized into two factors of the $3$-rd and
$6$-th orders
\be
\lb{4-2}
\begin{array}{l}
   \;\;\; \left\{ -3v^3+5 v +(7 v^2-3) x  -
5 v x^2  + x^3 \right\} \cdot \\ [0.2cm]
\left\{ 9 v^6-57 v^4+32 v^2-4 + (-44 v+160 v^3-42 v^5) x  \right.
\\ \left.
+ (12-158 v^2+79 v^4) x^2
 +(64 v-76 v^3) x^3 + (39 v^2-9) x^4   -10 v x^5  + x^6 \right\}
\end{array}.
\ee
In terms of the new variables $Z=x-v$, $Y=x-3 v$ the factors in (\ref{4-2}) are simplified to be
$$
\begin{array}{l}
   \;\;\; \left\{ Y Z^2-(Y+2 Z) \right\} \cdot
\left\{ Y^2 Z^4- (5 Y+4 Z) Y Z^2 +2 (5 Y + Z) Z -4 \right\}
\end{array}.
$$

\noindent
{\bf 4. The case $n=7$ and the representation $(5,2)$ with dim$=14$.} For the Hamiltonian
${\cal H}_7=x$ the characteristic polynomial in this representation is factorized into two
factors of the $6$-th and $8$-th orders. In terms of the new variables  $Z=x-2v$, $Y=x-4v$ it reads
$$
\begin{array}{c}
\left\{ Z^4 Y^2 -3 Z^2 Y (Y+Z) + Z (Z+4 Y) - 1 \right\} \cdot \\ [0.2cm]
\left\{ Z^6 Y^2 -(9 Y+5 Z) Z^4 Y  + Z^2 (Z+6 Y) (5 Z+2 Y) -(5 Z+2 Y)^2 \right\} \; .
\end{array}
$$

\noindent
{\bf 5. The case $n=8$ and the representation $(6,2)$ with dim$=20$.} For the Hamiltonian
${\cal H}_8=x$ the characteristic polynomial in this representation is factorized into two
factors of the $8$-th and $12$-th orders
$$
\begin{array}{c}
\left\{ Z^6 Y^2 -2 Z^4 Y (3 Y+2Z) + Z^2 (Z+5 Y) (3 Z+Y) -(3 Z+Y)^2 \right\} \cdot \\ [0.2cm]
\left\{ Y^3 Z^9 - Z^7 Y^2 (14 Y+6 Z) + Z^5 Y (9 Z^2+68 Z Y+49 Y^2) \right. \\
 -Z^3(2 Z^3+85 Z^2 Y+168 Z Y^2+49 Y^3) + 2 Z^2 (9 Z^2+68 Z Y+49 Y^2) \\
\left.
-8 Z (7 Y+3 Z) + 8 \right\}
\end{array}
$$
where $Z=x- 3 v$ and $Y=x- 5 v$.

\noindent
{\bf 6. The case $n=9$ and the representation $(7,2)$ with dim$=27$.} For the Hamiltonian
${\cal H}_9=x$ the characteristic polynomial in this representation is factorized into two
factors of the $12$-th and $15$-th orders
$$
\begin{array}{c}
\left\{ Y^3 Z^9 -5 Y^2 Z^7 (2 Y+Z) +3 Y Z^5 \left(8 Y^2+13 Z
Y+2 Z^2\right)  \right. \\
-Z^3 \left(16 Y^3+64 Z Y^2+38 Z^2 Y+Z^3\right) +3 Z^2 \left(8 Y^2+13 Z Y+2  Z^2\right) \\
\left.
-5 Z (2 Y+Z) +1  \right\} \cdot
\end{array}
$$
$$
\begin{array}{c}
\left\{ Y^3 Z^{12} -Y^2 Z^{10} (20 Y+7 Z) +Y Z^8 \left(126  Y^2+121 Z Y+14 Z^2\right) \right. \\
\!\!\!\!\! -Z^6 \left(304 Y^3+620 Z Y^2+212 Z^2 Y+7 Z^3\right) +Z^4 (2 Y+7 Z) \left(126 Y^2+121 Z Y+14
Z^2\right) \\
\left.
-Z^2 (2 Y+7 Z)^2 (20 Y+7 Z) +(2 Y+7 Z)^3  \right\}
\end{array}
$$
where $Z=x- 4 v$, $Y=x- 6 v$.

\noindent
{\bf 7. The case $n=10$ and the representation $(8,2)$ with dim$=35$.} The characteristic
polynomial in this representation is factorized into two factors of the $15$-th and $20$-th orders
\be
\lb{8-2i}
\begin{array}{l}
P_{(8,2)} = \left\{ Z^{12} Y^3 -3 Z^{10} Y^2 (5 Y+2 Z) + Z^8 Y \left(69 Y^2+76 Z Y+10 Z^2\right)
- \right. \\
 - Z^6  \left(119 Y^3+278 Z Y^2+109 Z^2 Y+4 Z^3\right)
 + Z^4 (Y+4 Z) \left(69 Y^2+76 Z Y + 10 Z^2\right) - \\
 \left. - 3 Z^2 (5 Y+2 Z) (Y+4 Z)^2 +(Y+4 Z)^3 \right\} \cdot
   \end{array}
\ee
$$
\begin{array}{l}
\cdot \left\{ Z^{16} Y^4 - Z^{14} Y^3 (27 Y+8 Z) + Z^{12} Y^2 \left(261 Y^2+194 Z Y+20 Z^2\right)
\right. \\
-Z^{10} Y \left(1143 Y^3+1632 Z Y^2+439 Z^2 Y+16 Z^3\right) \\
+ Z^8 \left(2349 Y^4+5982 Z Y^3+3216 Z^2 Y^2+326 Z^3 Y+2 Z^4\right) \\
   -  Z^6 \left(2187 Y^4+9720 Z Y^3+9812 Z^2 Y^2+2124 Z^3 Y+40 Z^4\right) \\
   +3 Z^4 \left(243 Y^4+2214 Z Y^3+4098 Z^2 Y^2+1816 Z^3 Y+84 Z^4\right)  \\
-2 Z^3 \left(729 Y^3+3033 Z Y^2+2584 Z^2 Y+304 Z^3\right)  \\
\left. +36 Z^2 \left(27 Y^2+54 Z Y+14 Z^2\right) -80 Z (3 Y+2 Z) +16 \right\}
\end{array}
$$
where $x={\cal H}_{10}$, $Z=x-5v$, $Y=x-7v$.

\noindent
{\bf 8. The case $n=11$ and the representation $(9,2)$ with dim$=44$.} The characteristic
polynomial in this representation is factorized into two factors of the $20$-th and $24$-th order
\be
\lb{9-2i}
\begin{array}{l}
P_{(9,2)} = \left\{ Z^{16} Y^4  -7 Z^{14} Y^3  (3 Y+Z)  +5 Z^{12} Y^2  \left(31 Y^2+26 Z Y+3
Z^2\right)
\right. \\
-Z^{10} Y  \left(510 Y^3+822 Z Y^2+249 Z^2 Y+10 Z^3\right)  \\
+ Z^8 \left(775 Y^4+2228 Z Y^3+1351 Z^2 Y^2+153 Z^3 Y+Z^4\right) \\
-  Z^6  \left(525 Y^4+2635 Z Y^3+3002 Z^2 Y^2+730 Z^3 Y+15 Z^4\right)\\
+Z^4 \left(125 Y^4+1290 Z Y^3+2697 Z^2 Y^2+1346 Z^3 Y+69  Z^4\right) \\
   -Z^3 \left(200 Y^3+941 Z Y^2+904 Z^2 Y+119 Z^3\right) \\
\left.   +3 Z^2 \left(35 Y^2+79 Z Y+23 Z^2\right) -5 Z (4 Y+3 Z) +1  \right\} \cdot
\end{array}
\ee
$$
\begin{array}{l}
\left\{ Z^{20} Y^4 -  Z^{18} Y^3 (35 Y+9 Z)+ Z^{16} Y^2 \left(475 Y^2+290 Z Y+27 Z^2\right) \right. \\
- Z^{14} Y \left(3230 Y^3+3558 Z Y^2+805 Z^2 Y+30 Z^3\right)   \\
+ Z^{12}  \left(11875 Y^4+21404 Z  Y^3+8949 Z^2 Y^2+839 Z^3 Y+9 Z^4\right) \\
- Z^{10} \left(23883 Y^4+67717 Z Y^3+47590 Z^2 Y^2+8550 Z^3 Y+243 Z^4\right)  \\
+Z^8 \left(25365 Y^4+113066 Z Y^3+128825 Z^2 Y^2+40518 Z^3 Y+2349 Z^4\right)  \\
-Z^6 (2 Y+9 Z) \left(6650 Y^3+17345 Z Y^2+10194 Z^2 Y+1143 Z^3\right)  \\
+Z^4 (2 Y+9 Z)^2 \left(855 Y^2+1183 Z Y+261 Z^2\right) \\
\left. - Z^2(2 Y+9 Z)^3 (50 Y+27 Z) +(2 Y+9 Z)^4 \right\}
\end{array}
$$
where $x={\cal H}_{11}$, $Z=x-6v$, $Y=x-8v$.

%\vspace{0.3cm}

\noindent
{\bf 9.  The case $n=12$ and the representation $(10,2)$ with dim$=54$.} The characteristic
polynomial in this representation is factorized into two factors of the $24$-th and $30$-th order
\be
\lb{10-2i}
\begin{array}{l}
P_{(10,2)} = \left\{ Z^{20} Y^4 -4 Z^{18} Y^3 (7 Y+2 Z) +3 Z^{16} Y^2 \left(100 Y^2+68 Z Y+7
Z^2\right)
\right. \\
 - Z^{14} Y \left(1591 Y^3+1954 Z Y^2+491 Z^2 Y+20 Z^3\right) \\
 + Z^{12} \left(4508 Y^4+9064 Z Y^3+4218 Z^2 Y^2+436 Z^3 Y+5 Z^4\right) \\
 - Z^{10} \left(6907 Y^4+21850 Z Y^3+17112 Z^2 Y^2+3406 Z^3 Y+105 Z^4\right)  \\
 + Z^8 \left(5527 Y^4+27480 Z Y^3+34909 Z^2 Y^2+12200 Z^3 Y+775 Z^4\right) \\
 -2 Z^6 (Y+5 Z) \left(1082 Y^3+3150 Z Y^2+2061 Z^2 Y+255 Z^3\right) \\
 +Z^4 (Y+5 Z)^2 \left(411 Y^2+634 Z Y+155 Z^2\right) \\
 \left.  -7 Z^2 (5 Y+3 Z) (Y+5 Z)^3 +(Y+5 Z)^4 \right\} \cdot
\end{array}
\ee
$$
\begin{array}{l}
\left\{ Z^{25} Y^5-2 Z^{23} Y^4 (22 Y+5 Z) +  Z^{21} Y^3 \left(792 Y^2+412 Z Y+35 Z^2\right) \right. \\
- Z^{19} Y^2 \left(7623 Y^3+6866 Z Y^2+1355 Z^2 Y+50 Z^3\right) \\
+ Z^{17} Y \left(43076  Y^4+60390 Z Y^3+20954 Z^2 Y^2+1834 Z^3 Y+25 Z^4\right) \\
   - Z^{15} \left(147983 Y^5+307010 Z Y^4+168558 Z^2 Y^3+26510 Z^3 Y^2+883 Z^4 Y +2 Z^5\right)  \\
+ Z^{13} \left(310123 Y^5+930974 Z Y^4+770091 Z^2 Y^3+196172 Z^3 Y^2+12139 Z^4 Y+70 Z^5\right)  \\
-2  Z^{11} \left(194326 Y^5+840587 Z Y^4+1026993 Z^2 Y^3+404211 Z^3 Y^2+42041 Z^4 Y+475 Z^5\right) \\
 + Z^9  \left(278179 Y^5+1759824 Z Y^4+3173478 Z^2 Y^3+1898244 Z^3 Y^2+317599 Z^4 Y+6460 Z^5\right) \\
 - Z^7 \left(102487 Y^5+1011560 Z Y^4+2742586 Z^2 Y^3+2500438 Z^3 Y^2+665275 Z^4 Y+23750 Z^5\right) \\
 +  Z^5 \left(14641 Y^5+284834 Z Y^4+1251866 Z^2 Y^3+1769240 Z^3 Y^2+753437 Z^4 Y+47766 Z^5\right) \\
-2  Z^4 \left(14641 Y^4+135520 Z Y^3+320474 Z^2 Y^2+219128 Z^3 Y+25365 Z^4\right) \\
 +8  Z^3 \left(2662 Y^3+13673 Z Y^2+16106 Z^2 Y+3325 Z^3\right) \\
\left. -8 Z^2 \left(847 Y^2+2222 Z Y+855 Z^2\right) +80  Z (11 Y+10 Z)-32 \right\} \; ,
\end{array}
$$
where $x = {\cal H}_{12}$, $Z = x-7 v$ and $Y = x-9 v$.

\noindent
{\bf 10. The case $n=13$ and the representation $(11,2)$ with dim$=65$.} The characteristic polynomial in this
representation is factorized into two factors of the $30$-th and $35$-th order
\be
\lb{11-2i}
\begin{array}{l}
Y^5 Z^{25}-9 Y^4 (4 Y+Z) Z^{23}+7 Y^3 \left(75 Y^2+43 Z Y+4 Z^2\right) Z^{21}-  \\
Y^2 \left(4056 Y^3+4031 Z Y^2+874 Z^2 Y+35 Z^3\right) Z^{19}+ \\
Y \left(18231 Y^4+28222 Z Y^3+10781 Z^2 Y^2+1030 Z^3 Y+15 Z^4\right) Z^{17} - \\
\left(49380 Y^5+113163 Z Y^4+68496 Z^2 Y^3+11805 Z^3 Y^2+424 Z^4 Y+Z^5\right) Z^{15}+ \\
\left(80891 Y^5+268257  Z Y^4+244835 Z^2 Y^3+68531 Z^3 Y^2+4605 Z^4 Y+28 Z^5\right) Z^{13} - \\
\left(78576 Y^5+375429 Z Y^4+506270 Z^2 Y^3+219334 Z^3 Y^2+24905 Z^4 Y+300 Z^5\right) Z^{11} + \\
\left(43200 Y^5+301984 Z Y^4+601090 Z^2 Y^3+396150 Z^3 Y^2+72634 Z^4 Y+1591 Z^5\right) Z^9- \\
2 \left(6048 Y^5+66096 Z Y^4+197920 Z^2 Y^3+198881 Z^3 Y^2+58122 Z^4 Y+2254 Z^5\right) Z^7+  \\
 \left(1296 Y^5+28080 Z Y^4+136554 Z^2 Y^3+212848 Z^3 Y^2+99644 Z^4 Y+6907 Z^5\right) Z^5- \\
 \left(2160 Y^4+22176 Z Y^3+57906  Z^2 Y^2+43570 Z^3 Y+5527 Z^4\right) Z^4+ \\
 \left(1296 Y^3+7360 Z Y^2+9551 Z^2 Y+2164 Z^3\right) Z^3- \\
 3 \left(112 Y^2+324 Z Y+137 Z^2\right) Z^2+35 (Y+Z) Z-1
\end{array}
\ee
$$
\begin{array}{l}
Y^5 Z^{30}-Y^4 (54 Y+11 Z) Z^{28}+Y^3 \left(1239 Y^2+563 Z Y+44 Z^2\right) Z^{26}- \\
Y^2 \left(15894 Y^3+12153 Z Y^2+2140 Z^2 Y+77 Z^3\right) Z^{24}+ \\
Y \left(126279  Y^4+145446 Z Y^3+43545 Z^2 Y^2+3578 Z^3 Y+55 Z^4\right) Z^{22}- \\
\left(650946 Y^5+1067749 Z Y^4+486798 Z^2 Y^3+68967 Z^3 Y^2+2466 Z^4 Y+11 Z^5\right) Z^{20}+ \\
\left(2219569 Y^5+5028863 Z Y^4+3303181 Z^2 Y^3+723221 Z^3 Y^2+45485 Z^4 Y+484 Z^5\right) Z^{18}- \\
\left(5017266 Y^5+15459111 Z Y^4+14203130 Z^2 Y^3+4550870 Z^3 Y^2+451913 Z^4 Y+8712 Z^5\right) Z^{16}+ \\
\left(7433784 Y^5+30999936 Z Y^4+39276278 Z^2 Y^3+17901642 Z^3 Y^2+2662000 Z^4 Y+83853 Z^5\right)Z^{14}- \\
2 \left(3523048 Y^5+19967988 Z Y^4+34790550 Z^2 Y^3+22275099 Z^3 Y^2+4830078 Z^4 Y+236918 Z^5
\right) Z^{12}+ \\
\left(4121784 Y^5+32057560 Z Y^4+77399690
Z^2 Y^3+69595592 Z^3 Y^2+21779274 Z^4 Y+1627813 Z^5\right) Z^{10}- \\
(2 Y+11 Z) \left(715128 Y^4+3714176 Z Y^3+5556166 Z^2 Y^2+2682086 Z^3 Y+310123 Z^4\right) Z^8+ \\
(2 Y+11 Z)^2 \left(71532 Y^3+233700 Z Y^2+191279 Z^2 Y+35332 Z^3\right) Z^6- \\
(2 Y+11 Z)^3 \left(3924 Y^2+7128 Z Y+2299 Z^2\right) Z^4+7 (2 Y+11 Z)^4 (15 Y+11 Z) Z^2-(2 Y+11 Z)^5
\end{array}
$$
where $x = {\cal H}_{13}$, $Z = x-8 v$ and $Y = x-10 v$.

\vspace{0.5cm}

In all examples considered above, the characteristic polynomials for
$\rho_{(n-2,2)}({\cal H}_n)$ are factorized into two factors with integer coefficients.
So we formulate the following Conjecture.

\noindent
{\bf Conjecture.} {\it Let $n \geq 4$. For irrep of the Hecke algebra $H_n(q)$
with Young diagram $(n-2,2)$ and dimension $\frac{n(n-3)}{2}=p_{n-1} + p_n$
the characteristic polynomial for the
Hamiltonian $\rho_{(n-2,2)}({\cal H}_n)=x$ is represented as the
product of two polynomial factors: the short factor {\rm short}$_{n}$ of the order $p_{n-1}$
and the long factor {\rm long}$_{n}$ of the order $p_n$ with integer coefficients, where
$$
p_n = \frac{1}{8}\left( ((-1)^n -1)3-4n +2n^2 \right) =
\left\{
\begin{array}{l}
\frac{1}{4} n (n-2), \; for\;\; {\rm even} \; n \\ [0.2cm]
\frac{1}{4} (n+1)(n-3)n \; for\;\; {\rm odd} \; n
\end{array}
\right.
\; .
$$
I.e.,
$$
\begin{array}{c}
p_3=0 \; , \;\; p_4=2 \; , \;\; p_5=3 \; , \;\; p_6=6 \; , \;\; p_7=8 \; , \;\;
 p_8=12 \; , \;\; p_9=15 \; , \\
  p_{10}=20 \; , \;\; p_{11}=24 \; , \;\; p_{12} = 30 \; , \;\;
 p_{13} = 35 \; , \;\; p_{14} = 42 \; , \dots \; .
 \end{array}
$$
These polynomial factors are $(p_n = k_n + \overline{k}_n)$
\be
\lb{short1}
\begin{array}{c}
{\rm short}_n = Z^{\overline{k}_{n-1}} Y^{k_{n-1}} \left\{ 1 -  (n-4) Z^{-1} Y^{-1} -
 \frac{(n-4)(n-5)}{2} Z^{-2} + \right. \\
 + \frac{(n-5)(n-6)}{2}  Z^{-2} Y^{-2}
  + \frac{(n-6)(n^2-7n+8)}{2} Z^{-3} Y^{-1} + \frac{(n-6)(n-7)(n^2-5 n-4)}{8} Z^{-4} - \\
   \left.
- \frac{(n-6)(n-7)(n-8)}{6} Z^{-3} Y^{-3} - \frac{(n^4-20n^3+137n^2-338n+116)}{4} Z^{-4} Y^{-2} -
   (\dots) Z^{-5} Y^{-1} \right\} + \\
   \dots \dots \dots \dots \dots  \\
  +(-1)^{[(n-2)/4]} \frac{(1+(-1)^n)}{2} \left\{ (\frac{n}{2}-1) Z+ Y \right\}^{k_{n-1}}
+(-1)^{[(n-1)/4]} \frac{(1-(-1)^n)}{2} \; ,
\end{array}
\ee

\be
\lb{long1}
\begin{array}{c}
{\rm long}_n =Z^{\overline{k}_n} Y^{k_n}\left\{ 1 -
 (n-2) Z^{-1} Y^{-1} - \frac{(n-1)(n-4)}{2} Z^{-2} + \right. \\
\!\!\!\!\!   +  \frac{(n-2)(n-5)}{2} Z^{-2} Y^{-2}
+ \left( \! \frac{(n-4)(n-5)(n+9)}{3} + \frac{(n-6)(n-7)(n-8)}{6} \! \right) Z^{-3} Y^{-1}
 + \frac{(n-6)(n-1)(n^2-3n-12)}{8} Z^{-4} \\
  - \frac{(n-2)(n-6)(n-7)}{6} Z^{-3} Y^{-3}
 - \frac{(n^4-12n^3+37n^2+18n-124)}{4} Z^{-4} Y^{-2}  - \\
 \left.
 - \frac{(n^5-12n^4+23n^3+128n^2-252n-224)}{8} Z^{-5} Y^{-1}  +
  \dots   \dots   \dots   \dots   \dots   \dots   \dots \right\} +    \\
+ (-1)^{^{\left[\frac{(n-1)}{4}\right]}} \frac{(1-(-1)^n)}{2} \left\{(n-2)Z + 2Y\right\}^{k_n}
  +(-1)^{\left[\frac{n}{4}\right]} \frac{(1+(-1)^n)}{2} 2^{k_n} \; ,
\end{array}
\ee
where $[x]$ -- integer part of $x$ (e.g., $[\frac{n}{4}]$ -- integer part of $n/4$),
$$
 %Z = x - \left( \left[ {n \over 2} \right] -1\right) v \; , \;\;\;
 %Y = x - \left( \left[ {n \over 2} \right] +1\right) v \; , \;\;\;
 Z = x - \left( n -5 \right) \, v \; , \;\;\;
 Y = x - \left( n-3 \right) v \; , \;\;\;v = \frac{q + q^{-1}}{2} \; ,
$$
$$
k_n = \left[ \frac{n}{2} \right] - 1
 %= \frac{1}{4}((-1)^n-5+2n)
 = \left\{
\begin{array}{l}
\frac{1}{2} (n-2) \;\;\; {\rm even} \; n \; ,\\ [0.2cm]
\frac{1}{2} (n-3) \;\;\; {\rm odd} \; n
\end{array}
\right.
$$
and
$$
\overline{k}_n = \frac{1}{8}((-1)^n+7-8n+2n^2) =
\left\{
\begin{array}{l}
\frac{1}{4} (n-2)^2 = k_n^2 \;\;\; {\rm even} \; n \; , \\ [0.2cm]
\frac{1}{4} (n-1)(n-3) = k_n^2 + k_n \;\;\; {\rm odd} \; n.
\end{array}
\right.
$$
 }
For $n$ -- odd, one can write (\ref{short1}) and (\ref{long1}) as the series
$$
\begin{array}{c}
{\rm long}_{n} \sim
{\rm short}_{n+1} \sim Z^{\overline{k}_{n}} Y^{k_{n}} \left( 1 - Z^{-2}(C_{2,1}\frac{Z}{Y} +C_{2,0} )
+ Z^{-4}(C_{4,2}\frac{Z^2}{Y^2} + C_{4,1}\frac{Z}{Y} +C_{4,0} ) \right. \\ [0.2cm]
\left. - Z^{-6}(C_{6,3}\frac{Z^3}{Y^3} +C_{6,2}\frac{Z^2}{Y^2} + C_{6,1}\frac{Z}{Y} +C_{6,0} )
+ Z^{-8} \sum\limits_{j=0}^4 C_{8,j} \left( \frac{Z}{Y}\right)^j - \dots \right) =
\end{array}
$$
$$
\begin{array}{c}
 = Z^{\overline{k}_{n}} Y^{k_{n}}\left( \sum\limits_{m=0}^{\frac{\overline{k}_{n}}{2}} (-Z^{-2})^m
\sum\limits_{j=0}^m
 C_{2m,j} \left( \frac{Z}{Y}\right)^j \right).
\end{array}
$$
For $n$ -- even, eqs. (\ref{short1}) and (\ref{long1}) also can be written as the series over $\frac{Z}{Y}$. But we do not present it explicitly here.

\vspace{1cm}

 \noindent
 {\bf Remark 1.}
For the hook-type representations $(n-1,1)$ we have
 the following spectrum for the Hamiltonian (see \cite{GD}, \cite{IsOgO})
 ${\cal H}_n = \sum\limits_{k=1}^{n-1} (T_k -q)$
 $$
 Spec({\cal H}_n) = Spec(\sum_{k=1}^{n-1} (T_k -q)) =
  2 \cos \left(\frac{\pi m}{n} \right) - (q+q^{-1}) \; ,
 \;\;\; m=1,\dots,n-1 \; .
 $$
 If,  as usual (cf. (\ref{Ex1})), we substitute
 $2 \cos \left(\frac{\pi m}{n} \right) = X^{1 \over 2} + X^{-{1 \over 2}}$, then for
 $X$ we will have the characteristic identity $X^{n} - 1 =0$, $X \neq  1$. We see that the spectrum of
 the open $XXZ$ spin chain (for even $m$) contains the spectrum of one-magnon states
 (except for the case $X=1$) for closed
 $XXZ$ spin chain (see (\ref{solep8})).

 \vspace{1cm}

 \noindent
 {\bf Remark 2.} We have calculated the characteristic polynomials $P_{(n-3,3)}$ for the Hamiltonian (\ref{free})
 in the irreps $(n-3,3)$ $(n=6,7,8,9)$ and observed the same factorization of $P_{(n-3,3)}$ into two
 factors, which are  the polynomials with the integer coefficients.

  \vspace{1cm}

 \noindent
 {\bf Remark 3.}  The quantum inverse scattering (R-matrix) method and
 the algebraic Bethe ansatz method for the open XXZ spin chain were elaborated
 by Sklyanin in \cite{Skl} (about analytical Bethe ansatz approach see \cite{MeNe,Zho} and references therein).
 The quantum group symmetry in the open XXZ spin chain was discovered in \cite{PS}.

 \subsection{Method of calculation}

In this subsection, we explain the method of construction of the explicit matrix
irreps (related to the fixed Young diagrams $\Lambda$) for the Hecke algebra $H_{n}(q)$.
A similar method was also considered in \cite{GD}, \cite{LLT}.

First of all, we define the affine extension $\hat{H}_{n}(q)$ of the Hecke algebra $H_{n}(q)$.
The affine Hecke algebra $\hat{H}_{n}(q)$ (see, e.g.,
Chapter 12.3 in \cite{ChPr}) is an extension of the Hecke
algebra $H_{n}(q)$ by the additional affine elements $y_k$ $(k=1,\dots,n)$
subjected to the relations:
\be
\lb{afheck}
y_{k+1} = T_k \, y_k \, T_k \; , \;\;\; y_k \, y_j = y_j \, y_k \; , \;\;\;
 y_j \, T_i  =  T_i \, y_j \;\; (j \neq i,i+1) \; .
\ee
The elements $\{ y_k \}$ are called {\it Jucys--Murphy elements}
and form a commutative subalgebra in $\hat{H}_{n}$,
while the symmetric functions in $y_k$ form the center in $\hat{H}_{n}$.
Let us introduce the intertwining elements \cite{Isaev1}
(presented in another form in \cite{Cher5})
\be
\label{impint}
U_{m+1} =  (T_m y_m - y_m T_m) \frac{1}{f(y_m, y_{m+1})} \; \in \hat{H}_{n}(q)
 \;\;\; (1 \leq m \leq n-1) \; ,
\ee
where $f(y_m, y_{m+1})$ is an arbitrary function of the two variables $y_m$, $y_{m+1}$. The elements  $U_{i}$
satisfy relations \cite{Isa1}:
\be
\label{impo}
U_m \, U_{m+1} \, U_m = U_{m+1} \, U_{m} \, U_{m+1} \; ,
\ee
\be
\label{importt}
\begin{array}{c}
U_{m+1} y_m = y_{m+1} U_{m+1}  , \; U_{m+1} y_{m+1} = y_{m} U_{m+1}  , \\ [0.2cm]
\; [U_{m+1}, \, y_k ] = 0 \;\;\; (k \neq m,m+1) \; ,
\end{array}
\ee
\be
\label{import2}
U_{m+1}^2 = \frac{(q y_{m} - q^{-1} \, y_{m+1}) \, (q \, y_{m+1} - q^{-1} \, y_m)}{
 f(y_m, y_{m+1}) f(y_{m+1},y_m)} \; .
\ee
As it is seen from (\ref{importt}), the operators $U_{m+1}$ "permute" the elements $y_m$ and $y_{m+1}$,
and this confirms the statement that the center of the Hecke algebra
$\hat{H}_{n}(q)$ is generated by the symmetric functions in $\{ y_i \}$
$(i = 2, \dots , n)$.

One may check that the Hamiltonian (\ref{free}) satisfies
\be
\lb{hu1}
[{\cal H}_{n} , \, y_k] = U_{k+1} f_{k+1} - U_{k} f_{k}  \; , \;\;\;
 [{\cal H}_{n} , \, y_k^{-1}] = \overline{U}_{k+1} f_{k+1}
   - \overline{U}_{k} f_{k+1}  \; ,
\ee
where $f_{k+1}=f(y_k,y_{k+1})$, $\overline{U}_{k+1} = U_{k+1}(y_k y_{k+1})^{-1}$ and
$U_1=U_{n+1}=0$. From (\ref{hu1}) follows that
\be
\lb{hu2}
[{\cal H}_{n} , \, \sum_{i=1}^k y_k] = U_{k+1} f_{k+1} \; , \;\;\;
 [{\cal H}_{n} , \, \sum_{i=1}^k  y_i^{-1}] = \overline{U}_{k+1} f_{k+1} \; .
\ee
Further, it is convenient to fix
$$
f(y_k,y_{k+1})=y_k-y_{k+1} \; .
$$
Now we have
\be
\label{import}
U_{m+1}^2 = \frac{(q \, y_{m+1} - q^{-1} \, y_m) \, (q^{-1} \, y_{m+1}-q y_{m} )}{
(y_{m+1}-y_m)^2} \; ,
\ee
$$
U_{m+1} = T_m  + \frac{\lambda y_{m+1}}{(y_m - y_{m+1})} = \left( T_m -
\frac{\lambda}{2} \right)  + \frac{\lambda}{2} \frac{(y_m + y_{m+1})}{(y_m - y_{m+1})} \; ,
$$
and, therefore,
\be
\lb{hu3}
s_m  = U_{m+1} + v_{m+1} \; ,
\ee
where
 \be
\lb{hu33}
v_{m+1} = \frac{\lambda}{2} \frac{(y_{m+1}+y_m )}{(y_{m+1}-y_m )} \; .
\ee
Due to the relations $s_m^2 = \overline{q}^2/4$,  we conclude that
$$
U_{m+1}^2 + v_{m+1}^2 = \frac{(q+q^{-1})^2}{4} \; , \;\;\; U_{m+1} v_{m+1} + v_{m+1} U_{m+1}=0 \; .
$$

As it was indicated in Remark 2 (see the beginning of this Section),
each irreducible representation of the Hecke algebra $H_{n}(q)$
corresponds to the Young diagram $\Lambda$ with $n$ nodes.
In the representation space of $\Lambda$ the basis vectors $\psi_\alpha$
 are labeled by the standard Young tableaux 
$T_\alpha$ of the shape $\Lambda$. Consider the homomorphism $\rho$: $\hat{H}_{n}(q) \to H_{n}(q)$
which is defined by the map $y_1 \to 1$. 
The images $\rho(y_i)$ of the Jucys--Murphy elements $y_i$ are diagonal in the chosen basis
\be
 \label{jmev}
\rho(y_j) \, \psi_\alpha =c_j(T_\alpha) \, \psi_\alpha \ , 
\ee
where $\psi_\alpha$ is the basis vector associated to the tableau
$T_\alpha$ of shape
$\Lambda$. The eigenvalue $c_j(T_\alpha)$ is the quantum 
 content of the box of the tableau $T_\alpha$ with the number $j$. For the box $j$,
 which is located in the $a$-th row and $b$-th column of $T_\alpha$, 
 the quantum content is defined by
\be\label{jmev1}
c_j(T_\alpha)=q^{2(b-a)} \; .
\ee

Finally, by using eqs. (\ref{hu3}) and (\ref{hu33}) we
write the Hamiltonian (\ref{free}) in the form
\be
\lb{hu4}
\begin{array}{c}
{\cal H}_{n} = \sum_{m=1}^{n-1} s_m  =
\sum_{m=1}^{n-1}  \left(U_{m+1} + v_{m+1} \right) =
\sum_{m=1}^{n-1}  \left(U_{m+1} + \frac{\lambda}{2} \frac{(y_{m+1}+y_m )}{(y_{m+1}-y_m )}\right) = \\
= \sum_{m=1}^{n-1}  \left(\sqrt{U^2_{m+1}} \cdot \widetilde{U}_{m+1} +
\frac{\lambda}{2} \frac{(y_{m+1}+y_m )}{(y_{m+1}-y_m )}\right) \; ,
\end{array}
\ee
where 
$$
\sqrt{U^2_{m+1}} = 
\sqrt{\frac{(q \, y_{m+1} - q^{-1} \, y_m) \, (q^{-1} \, y_{m+1}-q y_{m} )}{
(y_{m+1}-y_m)^2}} \; ,
$$
the operators 
$\widetilde{U}_{m+1}$ permute indices $m \leftrightarrow (m+1)$
 in the standard Young tableaux
of the same shape $\Lambda$: $T_\alpha \to T_{\alpha'}$,
i.e. $\widetilde{U}_{m+1} \cdot \psi_\alpha \sim \psi_{\alpha'}$.  Or, if 
after permutation $m \leftrightarrow (m+1)$ the Young tableau $T_{\alpha'}$
is not standard, then operator $U_{m+1}$ put
corresponding basis vector to zero: $\widetilde{U}_{m+1} \cdot \psi_\alpha =0$.

\vspace{0.3cm}

\noindent
{\bf Example 1.} Consider the basis for the representation, 
which is related to the Young
diagram  $\Lambda = (2,1)$ with quntum content:
\be
\lb{y21}
\begin{tabular}{|c|c|}
\hline
  $\!\!\! $  $_1$ & $\!\! $  $_{q^2}\!\!$   \\[0.1cm]
\hline
  $\!\!\! $ $_{q^{\! -2}}\!\!\!$     \\[0.1cm]
    \cline{1-1}
  \multicolumn{1}{c}{}
\end{tabular}  \;\;\;\; .
\ee
We have 2 standard tableaux of the shape (\ref{y21}):
\be
\lb{y21a}
\psi_0 = \begin{tabular}{|c|c|}
\hline
  $\!\!\! $  $_1$ & $\!\! $  $_2\!\!$   \\[0.1cm]
\hline
  $\!\!\! $ $_3\!\!\!$     \\[0.1cm]
    \cline{1-1}
  \multicolumn{1}{c}{}
\end{tabular} \;\; , \;\;\;
\psi_1 =  U_3 \psi_0 = \begin{tabular}{|c|c|}
\hline
  $\!\!\! $  $_1$ & $\!\! $  $_3\!\!$   \\[0.1cm]
\hline
  $\!\!\! $ $_2\!\!\!$     \\[0.1cm]
    \cline{1-1}
  \multicolumn{1}{c}{}
\end{tabular}
\ee
 and the space of the representation $\Lambda = (2,1)$ is two-dimensional.
Using (\ref{hu3}), we obtain the action of $s_i$ on the vectors $\psi_0, \psi_1$  (\ref{y21a})
$$
\begin{array}{c}
s_1 \psi_0 = \frac{1}{2} \overline{q} \psi_0 \; , \;\; s_1 \psi_1 = - \frac{1}{2} \overline{q}
\psi_1 \; , \\ [0.2cm]
s_2 \psi_0 = \psi_1 + \frac{\lambda}{2} \frac{(q^{-2}+q^{2}) }{(q^{-2}-q^{2})}    \psi_0 \; , \;\;
 s_2 \psi_1 = U_3^2 \psi_0 +  \frac{\lambda}{2}
\frac{(q^2+q^{-2})}{(q^2-q^{-2})}    \psi_1  \; ,
\end{array}
$$
where according to
(\ref{import}), (\ref{jmev}) and (\ref{jmev1}) we have
$U_3^2 \psi_0 = \frac{(q^3-q^{-3})(q-q^{-1})}{(q^2-q^{-2})^2} \psi_0$.
The solution of the eigenvalue problem
$(\sum_{i=1}^3 s_i - \nu) (\psi_0 + a \, \psi_1) = 0$,
($a$ is a constant), gives the spectrum $\nu = \pm 1$ 
which leads to two factors in the characteristic identity (\ref{char3}) for $n=3$.

\vspace{0.2cm}

\noindent
{\bf Example 2.} Consider the basis for the representation $(3^2)$ which is related to the Young
diagram with quantum content:
\be
\lb{y33}
\begin{tabular}{|c|c|c|}
\hline
  $\!\!\! $  $_1$ & $\!\! $  $_{q^2}\!\!$  &
  $\!\! $  $_{q^4}\!\!$ \\[0.1cm]
\hline
  $\!\!\! $ $_{q^{\! -2}}\!\!\!$ &   $\!\!\!$ $_1\!\!$  & $\!\!\! $ $_{q^2}\!\!$  \\[0.1cm]
  \hline
  %\multicolumn{1}{c}{}
\end{tabular}
\ee
We have 5 standard tableaux (the operators $U_{k+1}$ permute numbers $k$ and $k+1$ in the standard
tableaux) of the shape (\ref{y33}):
\be
\label{char6d}
\psi_0 = \begin{tabular}{|c|c|c|}
\hline
  $\!\!\! $  $_1$ & $\!\! $  $_2$  &
  $\!\! $  $_3$ \\[0.1cm]
\hline
  $\!\!\! $ $ _4$  &   $\!\!\!$ $_5$  & $\!\!\! $ $_6$  \\[0.1cm]
  \hline
\end{tabular} \; , \;\;
\psi_1 = U_4 \psi_0 = \begin{tabular}{|c|c|c|}
\hline
  $\!\!\! $  $_1$ & $\!\! $  $_2$  &
  $\!\! $  $_4$ \\[0.1cm]
\hline
  $\!\!\! $ $ _3$  &   $\!\!\!$ $_5$  & $\!\!\! $ $_6$  \\[0.1cm]
  \hline
\end{tabular} \; , \;\;
\psi_2 = U_5 U_4 \psi_0 = \begin{tabular}{|c|c|c|}
\hline
  $\!\!\! $  $_1$ & $\!\! $  $_2$  &
  $\!\! $  $_5$ \\[0.1cm]
\hline
  $\!\!\! $ $ _3$  &   $\!\!\!$ $_4$  & $\!\!\! $ $_6$  \\[0.1cm]
  \hline
\end{tabular} \; , \;\;
$$
$$
\psi_3 =  U_3 U_4 \psi_0 = \begin{tabular}{|c|c|c|}
\hline
  $\!\!\! $  $_1$ & $\!\! $  $_3$  &
  $\!\! $  $_4$ \\[0.1cm]
\hline
  $\!\!\! $ $ _2$  &   $\!\!\!$ $_5$  & $\!\!\! $ $_6$  \\[0.1cm]
  \hline
\end{tabular} \; , \;\;
\psi_4 = U_5 U_3 U_4 \psi_0 = \begin{tabular}{|c|c|c|}
\hline
  $\!\!\! $  $_1$ & $\!\! $  $_3$  &
  $\!\! $  $_5$ \\[0.1cm]
\hline
  $\!\!\! $ $ _2$  &   $\!\!\!$ $_4$  & $\!\!\! $ $_6$  \\[0.1cm]
  \hline
\end{tabular} \; .
\ee
Using (\ref{hu3}) we find the action of the operators $s_m$ to the basis vectors
$\psi_\alpha$ (\ref{char6d})
$$
\begin{array}{c}
s_1 \psi_0 = \frac{1}{2} \overline{q} \psi_0 \; , \;\; s_1 \psi_1 = \frac{1}{2} \overline{q}
\psi_1 \; ,
\;\; s_1 \psi_2 = \frac{1}{2} \overline{q} \psi_2 \; ,
\;\; s_1 \psi_3 = - \frac{1}{2} \overline{q} \psi_3
\; , \;\; s_1 \psi_4 = - \frac{1}{2} \overline{q} \psi_4 \; ,  \\ [0.2cm]
s_2 \psi_0 = \frac{1}{2} \overline{q} \psi_0 \; , \;\;
 s_2 \psi_1 = \psi_3 - \frac{\lambda}{2} \frac{ (q^2+q^{-2}) }{(q^2-q^{-2})} \psi_1  \; ,  \;\;
    s_2 \psi_2 = \psi_4 - \frac{\lambda}{2} \frac{ (q^2+q^{-2}) }{(q^2-q^{-2})}  \psi_2 \; , \\ [0.2cm]
 s_2 \psi_3 = U_3^2 \psi_1 - \frac{\lambda}{2} \frac{(q^{-2}+q^{2}) }{(q^{-2}-q^{2})} \psi_3 \; ,  \;\;
     s_2 \psi_4 = U_3^2 \psi_2 - \frac{\lambda}{2} \frac{(q^{-2}+q^{2}) }{(q^{-2}-q^{2})} \psi_4 \; ,
\end{array}
$$
$$
\begin{array}{c}
   s_3 \psi_0 = \psi_1 - \frac{\lambda}{2} \frac{ (q^4+q^{-2}) }{(q^4-q^{-2})} \psi_0 \; , \;\;
 s_3 \psi_1 = U_4^2 \psi_0 - \frac{\lambda}{2} \frac{(q^{-2}+q^4) }{(q^{-2}-q^4)} \psi_1 \; ,  \;\;
    s_3 \psi_2 = \frac{1}{2} \overline{q} \psi_2   \; , \\ [0.2cm]
 s_3 \psi_3 =  \frac{1}{2} \overline{q}  \psi_3 \; ,  \;\;
     s_3 \psi_4 = -\frac{1}{2} \overline{q} \psi_4 \; ,\\ [0.2cm]
      s_4 \psi_0 = \frac{1}{2} \overline{q} \psi_0 \; , \;\;
 s_4 \psi_1 = \psi_2 - \frac{\lambda}{2} \frac{(q^4+1)}{(q^{4}-1)}  \psi_1\; ,  \;\;
    s_4 \psi_2 = U_5^2 \psi_1 - \frac{\lambda}{2} \frac{(1+q^4) }{(1-q^4)} \psi_2 \; , \\ [0.2cm]
 s_4 \psi_3 =  \psi_4 - \frac{\lambda}{2} \frac{(q^4+1)}{(q^{4}-1)}  \psi_3 \; ,  \;\;
     s_4 \psi_4 = U_5^2 \psi_3 - \frac{\lambda}{2} \frac{(1+q^4)}{(1-q^4)}  \psi_4 \; ,\\ [0.2cm]
s_5 \psi_0 = \frac{1}{2} \overline{q} \psi_0 \; , \;\; s_5 \psi_1 = \frac{1}{2} \overline{q}
\psi_1 \; ,
\;\; s_5 \psi_2 = - \frac{1}{2} \overline{q} \psi_2 \; ,
\;\; s_5 \psi_3 = \frac{1}{2} \overline{q} \psi_3
\; , \;\; s_5 \psi_4 = - \frac{1}{2} \overline{q} \psi_4 \; ,
\end{array}
$$
where
$$
\begin{array}{c}
U_3^2 \psi_\alpha = \frac{(q^3-q^{-3})(q-q^{-1})}{(q^2-q^{-2})^2} \psi_\alpha \;\;\; (\alpha=1,2) \; , \;\;
 U_4^2 \psi_0 = \frac{(q^5-q^{-3})(q^3-q^{-1})}{(q^4-q^{-2})^2} \psi_0 \; , \\ [0.2cm]
U_5^2 \psi_\alpha = \frac{(q^5-q^{-1})(q^3-q)}{(q^4-1)^2} \psi_\alpha \;\;\; (\alpha=1,3)  \; .
\end{array}
$$
Then the equation for
eigenvalues $\nu$ of ${\cal H}_{6}$ and eigenvectors in the space of the irreducible representation
(\ref{y33}) is given as follows:
$$
(\sum_{i=1}^5 s_i - \nu) (\psi_0 + a_1 \psi_1 + a_2 \psi_2 + a_3 \psi_3+ a_4 \psi_4) = 0 \, ,
$$
which leads to the characteristic identity $\nu = {\cal H}_{6}$:
\be
\lb{y32b}
(\nu-\frac{ \overline{q}}{2}) \left\{
3\overline{q}^4+16\overline{q}^2-64  - (8\overline{q}^3+160\overline{q}) \nu  +
(-8\overline{q}^2+128) \nu^2  + 32 \overline{q}  \nu^3  - 16 \nu^4 \right\} =
0 \; .
\ee
Note that eigenvalue $\nu=\frac{1}{2} \overline{q}$ for
the Hamiltonian ${\cal H}_6$ has multiplicity 2 
since it also appears in the representation
 $(4,1^2)$ (\ref{char6}). The second factor in the characteristic identity
(\ref{y32b}) is related to the Young diagram
$(3^2)$ and can be obtained by transformation ${\cal H}_6 \to -{\cal H}_6$
from the factor presented in (\ref{char6b})
for the dual diagram $(2^3)$.\\
\\
{\bf Acknowledgement.} We thank to N.~Slavnov for valuable discussions of the details of
the algebraic Bethe ansatz and especially for the explanations of the two-component model for the XXZ Heisenberg chain. The work of SOK was supported by RSCF grant 14-11-00598. The work of API was partially supported by the RFBR grant 14-01-00474.

\begin{appendices}
  \numberwithin{equation}{section}

 \section{Details in XXX} \lb{app:XXXmag}

Equations \eqref{Pferm} and \eqref{Rferm} result in the following useful identities
\begin{align}
    P_{k,k+1}\;\bpsi_k &= \bpsi_{k+1} - \bpsi_{k+1} N_k + \bpsi_k N_{k+1}, \lb{aux1}\\
    \hat{R}_{k,k+1}(\lambda)\bpsi_{k+1} &= \bpsi_{k+1} + \lambda \bpsi_k +\lambda \bpsi_{k+1}N_k - \bpsi_k N_{k+1}\lb{aux2}
\intertext{where $N_k = \bpsi_k \psi_k$, again. We can see that}
 P_{k,k+1}\bar{\psi}_k \ket{0} &= \bar{\psi}_{k+1} \ket{0}, \lb{aux3}\\
 \hat{R}_{k,k+1}(\mu) \bar{\psi}_{k+1} \ket{0}&= \bar{\psi}_{k+1}\ket{0} +\mu\bar{\psi}_k\ket{0}, \lb{aux4}\\
\intertext{and}
 \hat{R}_{k,k+1}(\mu) \ket{0} &=  (\mu +1)\ket{0},\lb{aux5}\\
 P_{k,k+1} \ket{0} &= \ket{0} .\lb{aux5.1}
\intertext{For higher magnons we will also need}
\hat{R}_{k,k+1}(\lambda)  \bar{\psi}_{k}\bar{\psi}_{k+1} & = (\lambda+1)\bar{\psi}_{k}\bar{\psi}_{k+1}\label{aux6}
\end{align}
and
\begin{align}
\hat{R}_{l,l+1}(\lambda)\dots \hat{R}_{k-1,k}(\lambda) \bar{\psi}_{k} \ket{0} &=
%\lambda^{k-l} \bar{\psi}_l \ket{0} + \sum_{j=1}^{k-l} \lambda^{k-l-j}(\lambda+1)^{j-1} \bar{\psi}_{l+j} \ket{0} \\
%&=
\lambda^{k-l} \bar{\psi}_l \ket{0} +  \frac{\lambda^{k-l}}{\lambda+1}  \sum_{j=1}^{k-l} \left( \frac{\lambda+1}{\lambda} \right)^{j}   \bar{\psi}_{l+j} \ket{0}, \lb{aux7} \\
\hat{R}_{l,l+1}(\lambda)\dots \hat{R}_{k,k+1}(\lambda) \bar{\psi}_{k} \ket{0} &= \hat{R}_{l,l+1}(\lambda)\dots \hat{R}_{k-1,k}(\lambda) \bar{\psi}_{k} \ket{0} + \lambda (\lambda +1)^{k-l} \bar{\psi}_{k+1}\ket{0}.\lb{aux8}
\end{align}

% \subsection{1-magnon state}

It is obvious that the second term in (\ref{Bferm}) annihilates vacuum state. Then, using \eqref{aux7}, we get for the 1-magnon state
\begin{align}
\ket{\mu}\equiv B(\mu)\ket{0} &= (\mu +1 -\mu N_1) \hat{R}_{12}(\mu)\dots\hat{R}_{L-1,L}(\mu) P_{L-1,L}\dots P_{12} \bar{\psi}_1 \ket{0} = \nonumber\\
&= (\mu +1 -\mu N_1) \hat{R}_{12}(\mu) \dots \hat{R}_{L-1,L}(\mu) \bar{\psi}_L\ket{0} = \nonumber\\
% &= (\mu +1 -\mu N_1) \hat{R}_{12}(\mu) \dots \hat{R}_{L-2,L-1}(\mu) (\bar{\psi}_L  +\mu\bar{\psi}_{L-1}) \ket{0} = \nonumber\\
% &= (\mu +1 -\mu N_1) \bar{\psi}_L (\mu+1)^{L-2}\ket{0} + \nonumber \\
% &\quad +\mu(\mu +1 -\mu N_1)\hat{R}_{12}(\mu) \dots \hat{R}_{L-3,L-2}(\mu) (\bar{\psi}_{L-1} +\mu\bar{\psi}_{L-2}) \ket{0} = \nonumber\\
% &= (\mu +1 )^{L-1} \bar{\psi}_L\ket{0} +\mu(\mu +1)^{L-2} \bar{\psi}_{L-1} \ket{0} + \nonumber\\
% & \quad  + \mu^2 (\mu +1 -\mu N_1)\hat{R}_{12}(\mu) \dots \hat{R}_{L-4,L-3}(\mu) (\bar{\psi}_{L-2} +\mu\bar{\psi}_{L-3}) \ket{0} = \nonumber\\
% &= \dots =\sum_{k=1}^{L} \mu^{L-k}(\mu+1)^{k-1} \bar{\psi}_k \ket{0}
& = (\mu +1 -\mu N_1) \Bigl[ \mu^{L-1} \bpsi_1\ket{0} + \frac{\mu^{L-1}}{\mu+1} \sum_{j=1}^{L-1} \Bigl( \frac{\mu+1}{\mu} \Bigr)^j \bpsi_{j+1} \Bigr] \ket{0} =  \nonumber\\
& =  \frac{\mu^L}{\mu+1}
\sum_{k=1}^{L} \left(\frac{\mu+1}{\mu}\right)^{k} \bar{\psi}_k \ket{0} = n(\mu) \sum_{k=1}^{L} [\mu]^{k} \bar{\psi}_k \ket{0}
\end{align}
if we use the notation \eqref{koef01}.

% \subsection{2-magnon state}

Using (\ref{aux7}) and (\ref{aux8}) we obtain
\begin{gather}
B(\lambda) \bar{\psi}_k \ket{0}  = (\lambda+1) \lambda^{L-2} \sum_{m=0}^{k-2} [\lambda]^{m} \bar{\psi}_{m+1} \bar{\psi}_k \ket{0} + \displaybreak[0] \nonumber\\
 \quad +\frac{\lambda^{L-2}}{\lambda+1} \sum_{m=0}^{k-2}\sum_{j=1}^{L-k} [\lambda]^{j+m} \bar{\psi}_{m+1} \bar{\psi}_{k+j} \ket{0} % +  \displaybreak[0] \nonumber\\
 +  \frac{\lambda^{L}}{\lambda+1} [\lambda]^{k-1} \sum_{j=1}^{L-k} [\lambda]^{j} \bar{\psi}_k \bar{\psi}_{k+j} \ket{0}.  \lb{Bpsi}
\end{gather}
 We get for the 2-magnon state using \eqref{Bpsi}
 \begin{gather}
 \ket{\lambda,\mu}\equiv B(\lambda)B(\mu)\ket{0}  = n(\mu) \sum_{k=1}^L [\mu]^{k} B(\lambda)\bar{\psi}_k\ket{0} % = \nonumber\displaybreak[0]\\
 = n(\mu) n(\lambda)  \Bigg[ \sum_{k=2}^L \sum_{m=0}^{k-2} [\mu]^k [\lambda]^{m+2} \bar{\psi}_{m+1}\bar{\psi}_k  + \nonumber\displaybreak[0]\\
   \quad+ \frac{1}{\lambda(\lambda+1)} \sum_{k=2}^{L-1} \sum_{m=0}^{k-2} \sum_{j=1}^{L-k}  [\mu]^k [\lambda]^{j+m+1}   \bar{\psi}_{m+1}\bar{\psi}_{k+j}  % +\displaybreak[0]\nonumber\\
 + \sum_{k=1}^{L-1} \sum_{j=1}^{L-k} [\mu]^k [\lambda]^{k+j-1} \bar{\psi}_{k}\bar{\psi}_{k+j}  \Bigg] \ket{0} = \nonumber\displaybreak[0]\\
 = n(\mu) n(\lambda)  \Bigg\{ \sum_{1\leq r<s\leq L}  \Bigg[ [\mu]^s [\lambda]^{r+1}
 %&&\quad+\frac{1}{\lambda(\lambda+1)} \left( \frac{\lambda+1}{\lambda} %\right)^{r+s} (1-\delta_{r,1}\delta_{s,2})(1-\delta_{r,2}\delta_{s,3}) %\sum_{k=2}^L \left( \frac{\lambda(\mu+1)}{\mu(\lambda+1)} \right)^k + \nonumber\\
 + [\mu]^r [\lambda]^{s-1} \Bigg] \bar{\psi}_{r}\bar{\psi}_{s} \ket{0}  +\nonumber\displaybreak[0]\\
  +  \frac{1}{\lambda(\lambda+1)} \sum_{s=3}^{L} \sum_{r=1}^{s-2} \sum_{k=r+1}^{s-1} [\mu]^k [\lambda]^{s+r-k}  \bar{\psi}_{r}\bar{\psi}_{s} \ket{0}\Bigg\}= \nonumber \displaybreak[0]\\
  = n(\mu) n(\lambda)  \sum_{1\leq r<s\leq L} \Bigg[ [\mu]^s [\lambda]^{r+1} + [\mu]^r [\lambda]^{s-1} + % \nonumber \\
 \frac{1 }{\lambda(\lambda+1)}  \sum_{k=r+1}^{s-1} [\mu]^k [\lambda]^{s+r-k} \Bigg] \bar{\psi}_{r}\bar{\psi}_{s} \ket{0}. \lb{2mag_ap}
 \end{gather}
The finite sum in (\ref{2mag_ap}) can be calculated explicitly
by means of geometric progression
 \be\begin{gathered}
 \lb{2mag1}
  \frac{1 }{\lambda(\lambda+1)}  \sum_{k=r+1}^{s-1}
 [\mu]^k [\lambda]^{s+r-k} = \\
 =  \frac{\mu }{\lambda (\lambda - \mu)} [\mu]^{r+1} [\lambda]^{s-1}
  \left( \left( \frac{[\mu]}{[\lambda]} \right)^{s-r-1} - 1 \right) =
 \frac{\mu }{\lambda (\lambda - \mu)}
 \left( [\mu]^{s} [\lambda]^{r} - [\mu]^{r+1} [\lambda]^{s-1} \right).
 \end{gathered}\ee
 Substitution of (\ref{2mag1}) into (\ref{2mag_ap}) gives
 \begin{align}
  B(\lambda)B(\mu)\ket{0} % = \nonumber\\
 & = n(\mu) n(\lambda)  \sum\limits_{1\leq r<s\leq L} \Bigg[ [\mu]^s
 [\lambda]^{r+1} +  [\mu]^r [\lambda]^{s-1}  + \nonumber\\
 & \quad  + \frac{\mu }{\lambda (\lambda - \mu)}
 \left( [\mu]^{s} [\lambda]^{r} - [\mu]^{r+1} [\lambda]^{s-1} \right) \Bigg] \bar{\psi}_{r}\bar{\psi}_{s} \ket{0} = \nonumber\\
&  = n(\mu) n(\lambda)  \sum\limits_{1\leq r<s\leq L} \Bigg[ [\lambda]^{r} [\mu]^{s}  \frac{\lambda - \mu +1}{\lambda-\mu} +
 [\mu]^{r} [\lambda]^{s} \frac{\mu - \lambda + 1}{\mu-\lambda}
 \Bigg] \bar{\psi}_{r}\bar{\psi}_{s} \ket{0} . \lb{2mag2}
\end{align}

% \subsection{3-magnon state}

For the 3-magnon we need at first
\begin{gather}
 B(\nu) \bpsi_r \bpsi_s \ket{0} = (\nu +1 -\nu N_1) X_{12\dots L}(\nu) \bar{\psi}_1 \bar{\psi}_r \bar{\psi}_s = \nonumber\\
 = \nu^{L-3}(\nu+1)^2 \sum_{m=0}^{r-2} [\nu]^m \bpsi_{m+1}\bpsi_r \bpsi_s\ket{0} + \nu^{L-3} \sum_{l=1}^{s-r-1} \sum_{m=0}^{r-2} [\nu]^{l+m} \bpsi_{m+1}\bpsi_{r+l} \bpsi_s\ket{0} + \displaybreak[0]\nonumber\\
  +\nu^{L-3} \sum_{j=1}^{L-s} \sum_{m=0}^{r-2} [\nu]^{j+m} \bpsi_{m+1} \bpsi_r \bpsi_{s+j} \ket{0} + \displaybreak[0]\nonumber\\
+ \nu^{L-3}(\nu+1)^{-2} \sum_{j=1}^{L-s} \sum_{l=1}^{s-r-1} \sum_{m=0}^{r-2} [\nu]^{j+l+m} \bpsi_{m+1} \bpsi_{r+l} \bpsi_{s+j} \ket{0} + \displaybreak[0]\nonumber\\
+ \nu^{L-s+r-1} (\nu+1)^{s-r-2} \sum_{j=1}^{L-s} \sum_{m=0}^{r-2} [\nu]^{j+m} \bpsi_{m+1} \bpsi_{s} \bpsi_{s+j} \ket{0} + \displaybreak[0]\nonumber\\
 +  \nu^{L-s+2} (\nu+1)^{s-3} \sum_{j=1}^{L-s} [\nu]^{j} \bpsi_{r} \bpsi_{s} \bpsi_{s+j} \ket{0} + \nu^{L-r} (\nu+1)^{r-1} \sum_{l=1}^{s-r-1} [\nu]^{l} \bpsi_{r} \bpsi_{r+l} \bpsi_{s} \ket{0} + \displaybreak[0]\nonumber\\
 + \nu^{L-r} (\nu+1)^{r-3} \sum_{j=1}^{L-s} \sum_{l=1}^{s-r-1} [\nu]^{j+l} \bpsi_{r} \bpsi_{r+l} \bpsi_{s+j} \ket{0}. \label{Bpsirs}
\end{gather}
The 3-magnon state is obtained from the 2-magnon state \eqref{2mag2}
\be \ket{\nu,\mu,\lambda} \equiv B(\nu)B(\mu)B(\lambda)\ket{0} = n(\mu) n(\lambda)
 \sum\limits_{1\leq r<s\leq L} K_2(s,r) B(\nu) \bar{\psi}_{r}\bar{\psi}_{s} \ket{0} \lb{3mag1}
\ee
where we denote for more comfort
\be K_2(s,r) =  [\mu]^{s} [\lambda]^{r} \frac{\lambda - \mu +1}{\lambda-\mu} +
 [\mu]^{r} [\lambda]^{s} \frac{\mu - \lambda +1}{\mu - \lambda}.
\ee
Using\eqref{Bpsirs} we get
\begin{gather}
\ket{\nu,\mu,\lambda} = n(\nu) n(\mu) n(\lambda) \sum_{1\leq q<r<s\leq L} \Bigg[ [\nu]^{q+2} K_{2}(s,r) + \frac{1}{\nu^2} \sum_{l=1}^{r-q-1} [\nu]^{l+q} K_2(s,r-l) + \nonumber\displaybreak[0]\\
 + \frac{1}{\nu^2} \sum_{j=1}^{s-r-1} [\nu]^{j+q} K_2(s-j,r) + \frac{1}{\nu^2(\nu+1)^2} \sum_{j=1}^{s-r-1} \sum_{l=1}^{r-q-1} [\nu]^{j+l+q} K_2(s-j,r-l) + \nonumber\displaybreak[0]\\
 + \frac{1}{(\nu+1)^2} \sum_{l=q+1}^{r-1} [\nu]^{s-l+q} K_2(r,l) +
 [\nu]^{s-2} K_2(r,q) + [\nu]^{r} K_2(s,q) + \nonumber\displaybreak[0]\\
 +\frac{1}{(\nu+1)^2} \sum_{l=r+1}^{s-1} [\nu]^{s+r-l} K_2(l,q) \Bigg] \bpsi_q \bpsi_r \bpsi_s \ket{0}= n(\nu) n(\mu) n(\lambda) \times \displaybreak[0]\nonumber\\
   \sum_{1\leq q<r<s\leq L}  \sum_{T\in S_3} T \Bigl( [\nu]^q [\mu]^r[\lambda]^s
 \frac{\nu-\mu+1}{\nu-\mu}\cdot \frac{\nu-\lambda+1}{\nu-\lambda}\cdot \frac{\mu-\lambda+1}{\mu-\lambda}\Bigr) \bpsi_q \bpsi_r \bpsi_s \ket{0}. \lb{3mag2}
\end{gather}

\section{Details in XXZ}\lb{app:XXZmag}

 It is convenient to introduce the following notation:
 \begin{align}
    d(\lambda) &= 1-\lambda,& b(\lambda) & = q-q^{-1}, & a(\lambda) &= q-\lambda q^{-1}.
 \end{align}
 We see that coefficient \eqref{koef03} resp. normalization \eqref{koef04} can be written as
 \be
 [\lambda]_q = \frac{a(\lambda)}{d(\lambda)},\qquad n_q(\lambda) = \frac{d(\lambda)^L b(\lambda)}{a(\lambda)}.
 \ee
  The R-matrix $\hat{R}_{k,k+1}(\lambda)$ is of the form \eqref{Rbaxtferm} and the operator $B(\lambda)$ is of the form \eqref{BfermXXZ}. For computing of Bethe vectors, the following set of identities seems to be very useful:
 \begin{align}
&\hat{R}_{k,k+1}(\lambda) \bpsi_{k+1} \ket{0} = d(\lambda)\psi_k \ket{0} + b(\lambda) \bpsi_{k+1} \ket{0}, \lb{aux1-q}\\
&\hat{R}_{k,k+1}(\lambda) \bpsi_{k} \ket{0} = \lambda b(\lambda)\psi_k \ket{0} + d(\lambda) \bpsi_{k+1} \ket{0}, \lb{aux2-q}\\
&\hat{R}_{k,k+1}(\lambda) \bpsi_{j}  = a(\lambda)\psi_j & \text{for } j\not\in\{ k,k+1\},\lb{aux3-q}
    \end{align}
    and
\begin{align}
    & \hat{R}_{l,l+1}(\lambda)\dots  \hat{R}_{k-1,k}(\lambda) \bpsi_k\ket{0} = d(\lambda)^{k-l} \bpsi_l \ket{0} + b(\lambda) \sum_{j=1}^{k-l} d(\lambda)^{k-l-j} a(\lambda)^{j-1} \bpsi_{l+j} \ket{0}, \lb{aux4-q} \\
    & \hat{R}_{l,l+1}(\lambda)\dots  \hat{R}_{k,k+1}(\lambda) \bpsi_k\ket{0} = \lambda b(\lambda) \hat{R}_{l,l+1}(\lambda)\dots  \hat{R}_{k-1,k}(\lambda) \bpsi_k\ket{0} + d(\lambda)a(\lambda)^{k-l}\bpsi_{k+1}\ket{0}.\lb{aux5-q}
 \end{align}

 %\subsection{1-magnon states}

Using \eqref{aux4-q}, we can straightforwardly calculate the $q$-deformed 1-magnon state for $B(\mu)$ defined in (\ref{BfermXXZ})
\be \lb{1mag-q}
    \ket{\mu}\equiv B(\mu)\ket{0} = \frac{d(\mu)^L b(\mu)}{a(\mu)} \sum_{k=1}^L \Bigl( \frac{a(\mu)}{d(\mu)} \Bigr)^k \bpsi_k \ket{0} = n_q(\mu) \sum_{k=1}^L [\mu]_q^k \bpsi_k \ket{0}
\ee
recalling \eqref{koef03} and \eqref{koef04}.

%\subsection{2-magnon states}

Using the formulas mentioned \eqref{aux1-q}-\eqref{aux5-q}, we can calculate the $q$-deformed 2-magnon state. First of all we need $B(\lambda)\bpsi_k\ket{0}$
    \begin{align}
B(\lambda)\bpsi_k\ket{0} & = b(\lambda) a(\lambda) d(\lambda)^{L-2} \sum_{i=0}^{k-2} [\lambda]_q^i \bpsi_{1+i}\bpsi_{k}\ket{0} + \nonumber \\
    &\quad + b(\lambda) a(\lambda)^{k-2} d(\lambda)^{L-k+1} \sum_{j=1}^{L-k} [\lambda]_q^j \bpsi_{k}\bpsi_{k+j}\ket{0} + \displaybreak[0]\notag\\
    &\quad + \lambda b(\lambda)^3 a(\lambda)^{-1} d(\lambda)^{L-2} \sum_{i=0}^{k-2} \sum_{j=1}^{L-k} [\lambda]_q^{i+j} \bpsi_{1+i}\bpsi_{k+j}\ket{0}.\lb{Bpsi-q}
    \end{align}
Hence, we get the $q$-deformed 2-magnon state
$$
\ket{\lambda,\mu} \equiv B(\lambda)B(\mu)\ket{0} = n_q(\mu) \sum_{k=1}^{L} [\mu]_q^k B(\lambda) \bpsi_k \ket{0} = $$
$$  = n_q(\mu)n_q(\lambda) \sum_{1\leq r<s\leq L} \Bigg\{ \frac{a(\lambda)}{d(\lambda)}\left( 1  + \frac{\lambda b(\lambda)^2 a(\lambda)^{-1} d(\mu)}{a(\mu)d(\lambda) - a(\lambda)d(\mu)} \right) [\lambda]_q^r[\mu]_q^s + $$
$$  + \left( \frac{d(\lambda)}{a(\lambda)} - \frac{a(\mu)}{d(\mu)} \frac{\lambda b(\lambda)^2 a(\lambda)^{-1} d(\mu)}{a(\mu)d(\lambda) - a(\lambda)d(\mu)} \right)[\lambda]_q^s[\mu]_q^r\Bigg\}\bpsi_r \bpsi_s\ket{0}= $$
\be\lb{2mag-q}
    = n_q(\mu) n_q(\lambda) \sum_{1\leq r<s\leq L} \Bigg\{  \frac{\lambda q^{-1} -\mu q}{\lambda - \mu}  [\lambda]_q^r[\mu]_q^s +  \frac{\mu q^{-1} -\lambda q}{\mu - \lambda}  [\mu]_q^r [\lambda]_q^s  \Bigg\} \bpsi_r \bpsi_s\ket{0}.
\ee

\end{appendices}

\end{document}